\documentclass[11pt]{article}
\usepackage{amsmath,amsfonts,amsthm,amssymb}
\usepackage{fullpage}

\usepackage{longtable}
\usepackage{hyperref}
\usepackage[capitalise]{cleveref}
\usepackage{xspace}
\usepackage{ifthen}
\usepackage{float}
\usepackage{graphicx}
\usepackage{enumitem}
\usepackage{comment}
\usepackage{stmaryrd}
\usepackage{xcolor}
\usepackage{booktabs}
\usepackage{array}
\usepackage{tikz}
\usetikzlibrary{positioning, shapes, arrows.meta}
\usepackage{thm-restate}

\usetikzlibrary{calc} 

\newtheorem{theorem}{Theorem}[section]
\newtheorem{corollary}[theorem]{Corollary}
\newtheorem{lemma}[theorem]{Lemma}

\newtheorem{definition}[theorem]{Definition}
\newtheorem{fact}[theorem]{Fact}

\usepackage{bbm}

\usepackage[ruled,vlined,linesnumbered]{algorithm2e}
\crefname{algocf}{Algorithm}{Algorithms}

\newenvironment{lp*}{\begin{equation*}  \begin{array}{lll}}{\end{array}\end{equation*}}

\renewcommand{\emptyset}{\varnothing}
\newcommand{\vol}{\operatorname{vol}}

\newcommand{\protocol}[1]{\mathcal{P}^{\textnormal{#1}}}
\newcommand{\cycle}[0]{\mathsf{Cycle}}
\newcommand{\negcycle}[0]{\mathsf{NegativeCycle}}
\newcommand{\SSSP}[0]{\mathsf{SSSP}}
\newcommand{\negSSSP}[0]{\mathsf{NegativeSSSP}}

\usepackage{tikz}
\tikzset{
    node/.style={circle,draw=black},
    edge/.style={draw=black}
}

\usepackage[backend=biber, style=alphabetic, doi=false, url=false, isbn=false, minalphanames = 3, maxalphanames = 3, maxbibnames=10]{biblatex}
\usepackage{caption}

\title{
On the Communication Complexity of \\ Maximum Matching and Negative-Weight Shortest Paths}
\author{Yu Cheng\footnote{Brown University. \href{mailto:yu_cheng@brown.edu}{\url{yu_cheng@brown.edu}}. Supported in part by NSF Award CCF-2307106.} 
\and 
Tianle Jiang\footnote{Duke University. \href{mailto:tianle.jiang@duke.edu}{\url{tianle.jiang@duke.edu}}. Supported in part by NSF Award IIS-2402823. Part of the work was done while visiting Brown University. }
\and
Pachara Sawettamalya\footnote{Princeton University. \href{mailto:ps3122@princeton.edu}{\url{ps3122@princeton.edu}}. Supported in part by NSF CAREER Award CCF-2339942.}
\and 
Huacheng Yu\footnote{Princeton University. \href{mailto:yuhch123@gmail.com}{\url{yuhch123@gmail.com}}. Supported in part by NSF CAREER Award CCF-2339942.}
}
\date{}

\begin{document}

\maketitle

\begin{abstract}
We revisit several fundamental graph problems in the deterministic two-party communication model. Our main contributions include:
\begin{itemize}
    \item We give a new $\widetilde{O}(n^{3/2})$-bit protocol for computing a maximum matching in general graphs. While the same upper bound can be obtained by simulating the classic algorithms of Micali-Vazirani~\cite{MicaliV80} and Gabow~\cite{Gabow17}, our protocol is conceptually simple and avoids the intricacies of finding a maximal set of shortest augmenting paths.
    \item We give a new $\widetilde{O}(n)$-bit protocol for negative-cycle detection and negative-weight single-source shortest paths. Our protocol simplifies that of Blikstad et al.~\cite{BlikstadBEMN22} by replacing a long chain of reductions with a more direct approach based on vertex potentials.
    \item We give a combinatorial $\widetilde{O}(n)$-bit protocol for computing a maximum matching in bipartite graphs, obtained by reinterpreting the near-linear communication protocol of Blikstad et al.~\cite{BlikstadBEMN22} through a discretized analysis.
\end{itemize}
Together, these results provide simpler protocols for several basic graph problems.
We hope they will inspire further advances on the communication complexity of a wide range of graph problems.
\end{abstract}

\newpage

\section{Introduction}

Graph theory is central to computer science, both in theory and in practice. The field studies fundamental problems such as connectivity, shortest paths, matching, spanning forests, and cycle detection. Efficient solutions for these problems form the algorithmic backbone of many computational systems. While the complexity of these problems is well understood in centralized models, much less is known when the input is distributed across players with only partial views of the graph.

In this work, we study the \emph{communication complexity of graph problems} in the standard two-party model \cite{Yao79}. The edge set of a graph is partitioned between two players, Alice and Bob, who must collaboratively solve a global graph problem while minimizing the amount of communication between them. Although this model is closely related to query, streaming, and distributed models of computation, many fundamental graph problems remain poorly understood
from the perspective of communication complexity.

A recent breakthrough of Blikstad et al.~\cite{BlikstadBEMN22} showed that \emph{maximum bipartite matching} and several problems reducible to it admit deterministic protocols using only $\widetilde{O}(n)$ bits of communication.\footnote{Throughout this paper, we use $\widetilde{O}(\cdot)$ to hide $\mathrm{polylog}(n)$ factors.}
At the core of their approach is a cutting-plane framework testing whether a polytope (or feasible region of a linear program (LP)) is empty or has non-trivial volume. This framework is particularly well-suited for the two-party edge-partition model:
Given a graph problem $\mathcal{T}$, one formulates an LP whose feasibility gives the answer to $\mathcal{T}$, with constraints that are locally checkable and succinctly encodable for communication.
The cutting-plane framework is then used to determine the feasibility of this LP, and the result is translated back into an answer for $\mathcal{T}$.

As a concrete example, Blikstad et al.~\cite{BlikstadBEMN22} obtained a near-linear communication protocol for bipartite matching via the dual LP. Specifically, they reduce maximum bipartite matching to testing the feasibility of a vertex cover LP, which can be done using $O(n\log^2 n)$ bits of communication via the cutting-plane framework.

\subsection{Our Contributions}
\subsubsection{Maximum Matching in General Graphs}
Given the success of LP-based methods for bipartite graphs, a natural question is whether the same framework can be extended to \emph{general (non-bipartite) matching}. However, general graphs do not admit a polynomial-size LP formulation with the same clean matching/vertex-cover duality as in bipartite graphs. The standard (primal) LP for non-bipartite matching has exponentially many constraints, so its dual LP has exponentially many variables. This makes it difficult to generalize the volume-based analysis in \cite{BlikstadBEMN22}, which is highly sensitive to the dimension of the underlying polytope.

While achieving near-linear communication for general matching appears to require new ideas, we make partial progress by designing a simple protocol for general matching using $\widetilde{O}(n^{3/2})$ bits of communication.

\begin{theorem}[Informal]
There is a deterministic two-party communication protocol for maximum matching in general graphs using $\widetilde{O}(n^{3/2})$ bits.
\end{theorem}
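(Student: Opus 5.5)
We prove the statement by simulating an augmenting-path algorithm in two phases, avoiding the Micali--Vazirani machinery of finding a maximal set of shortest augmenting paths. The basic primitive: given a current matching $M$ known to both players (encoded in $O(n\log n)$ bits), the players can find one $M$-augmenting path, or certify that none exists, using $\widetilde{O}(n)$ bits of communication. To do this they simulate Edmonds' blossom algorithm (an alternating-forest search with blossom contraction) in a vertex-centric way. The search state consists of the forest, the blossom structure and the labels, and both players can maintain it with $O(n\log n)$ bits. At each step the search needs an edge from an outer vertex to some vertex with a given property (unlabeled, or outer in another tree or in the same tree). Each player looks for such an edge locally among their own edges and sends one if found, costing $O(\log n)$ bits. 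Every discovered edge either grows the forest, contracts a blossom, or produces an augmenting path, and each of these events happens at most $O(n)$ times per search. An outer vertex is scanned only once, and after contraction it is simply reinterpreted through the shared blossom structure. Hence one augmentation costs $\widetilde{O}(n)$ bits. If both players report no edge for every outer vertex, the forest is Hungarian and $M$ is maximum.

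\textbf{Phase 1 (reaching a near-maximum matching cheaply).} Repeating the primitive directly could take $n/2$ augmentations, which costs $\widetilde{O}(n^2)$. Instead we first obtain a matching of size at least $\nu - \sqrt{n}$, where $\nu$ is the maximum matching size, using $\widetilde{O}(n^{3/2})$ bits. The key fact is the standard Hopcroft--Karp observation: if the shortest augmenting path with respect to $M$ has length greater than $2k+1$, then $|M| \ge \nu - n/(k+1)$. This holds because $M \oplus M^*$ contains $\nu-|M|$ vertex-disjoint augmenting paths, each with more than $2k$ vertices. We do not compute maximal sets of shortest paths. Instead we use a simpler routine. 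Phase 1 proceeds in rounds, and in each round the players repeatedly find some augmenting path of length at most $2\sqrt{n}+1$ and augment along it. A length-bounded augmenting-path search (a bounded-depth blossom search) costs $\widetilde{O}(n)$ per path. The catch is that the number of such paths could be large. To charge communication to the matching size rather than to $n$, we make the search cost $\widetilde{O}(\sqrt{n})$ per path found, which is enough since there are at most $n$ augmentations. To achieve this we only reveal edges along the explored region and amortize exploration over a batch of paths.

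\textbf{Phase 2.} After Phase 1 at most $\sqrt{n}$ augmentations remain. Each costs $\widetilde{O}(n)$ via the primitive, for $\widetilde{O}(n^{3/2})$ in total.

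The main obstacle is Phase 1: bounding the cost of the short-path augmentations without the Micali--Vazirani maximal-disjoint-path structure. If amortized searching proves too delicate, the fallback is a direct amortization over the primitive itself. Each search from a free vertex explores only vertices that become labeled, so we charge the $O(\log n)$-bit messages to newly labeled vertices. We bound the number of messages by $\widetilde{O}(\sqrt{n})$ per augmentation by truncating any search whose forest exceeds depth $\sqrt{n}$. Such a truncation certifies that no short augmenting path exists from those roots, and those roots are then frozen until the next round.
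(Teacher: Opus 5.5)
\textbf{Gap: Phase~1 is not established.} Your overall architecture matches the paper: first reach a matching close to optimal cheaply, then finish with $\widetilde O(\sqrt n)$ blossom-search augmentations at $\widetilde O(n)$ bits each. Your Phase~2 primitive is also fine. But nothing in the proposal actually produces a matching of size $\nu-\sqrt n$ with $\widetilde O(n^{3/2})$ bits.

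``Amortize exploration over a batch of paths'' is exactly the Hopcroft--Karp/Micali--Vazirani phase structure: one shared search that yields a maximal set of vertex-disjoint shortest augmenting paths. You say you are avoiding this, and you do not supply a replacement. Your fallback fails concretely:
\begin{itemize}
\item Truncating by \emph{depth} does not bound the number of labeled vertices. A free root adjacent to $\Theta(n)$ matched vertices already gives an alternating tree of depth $2$ with $\Theta(n)$ vertices, each costing a revealed edge. So one search can still cost $\Theta(n\log n)$ bits, and $\Theta(n)$ augmentations cost $\Theta(n^2\log n)$.
\item In non-bipartite graphs, the depth of a vertex in the forest with contracted blossoms is not its alternating distance; a path through a contracted blossom can be much longer. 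So a depth-bounded blossom search neither guarantees that the path it returns has length at most $2\sqrt n+1$, nor certifies that no such path exists. This is precisely what makes Micali--Vazirani intricate.
\item Truncated searches are not charged to any augmentation.
\item Freezing roots is unsound. Augmenting along a short but non-shortest path can give a previously frozen root a new short augmenting path, already in bipartite graphs. So your number of rounds is unbounded. Hopcroft--Karp bounds the number of rounds only via monotonicity of shortest augmenting-path lengths under augmentation by a maximal set of shortest paths.
\end{itemize}

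\textbf{What the paper does instead.} It sidesteps augmenting-path structure in Phase~1 entirely. It reduces maximum matching to perfect matching by binary search over $r$, adding $n-2r$ auxiliary vertices joined to every original vertex. It then publicly maintains the family $\mathcal C$ of Tutte--Berge covers of value $\frac n2-k-1$ that cover the revealed graph $H$. In each step, each player announces its edge that invalidates the most covers in $\mathcal C$.

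Suppose $G$ has a perfect matching $M$. A cover $C$ covers at most $\delta(C)$ edges of any matching, so every cover in $\mathcal C$ is invalidated by at least $k+1$ edges of $M$. By averaging, some edge removes a $\frac{2k}{n}$ fraction of $\mathcal C$. Since $|\mathcal C|\le (n+1)^n$ initially, $\mathcal C$ becomes empty within $O(n^2\log n/k)$ revealed edges. At that point the Tutte--Berge formula gives a matching of size $\frac n2-k$ inside $H$. If $G$ has no perfect matching, the players correctly report that no perfect matching exists.

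Taking $k=\Theta(\sqrt{n\log n})$ makes both phases $O(n^{3/2}\log^{3/2}n)$. This counting argument over Tutte--Berge covers is the idea your Phase~1 is missing.
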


This $\widetilde{O}(n^{3/2})$-bit upper bound is not entirely new: it can also be obtained by simulating the classic $O(m\sqrt n)$-time algorithms such as Micali-Vazirani~\cite{MicaliV80} and Gabow~\cite{Gabow17}. However, these algorithms are quite intricate even in the sequential setting. Although local computation is free in the communication model (which makes sophisticated data structures less of a bottleneck), turning them into explicit communication-efficient protocols still requires a careful implementation of the many steps in these algorithms.

Our main contribution lies in the \emph{simplicity} of both the protocol and its correctness proof. At a high level, our protocol combines a greedy approach based on the Tutte-Berge formula with a careful simulation of Edmonds' Blossom algorithm~\cite{Edmonds65}. For the protocol description, it suffices to focus on the problem of deciding whether the input graph has a perfect matching. In the first phase, we maintain the set of low-value Tutte-Berge covers (\Cref{def:tutte-berge-cover}) that cover the edges revealed so far, and greedily reveal edges that invalidate as many such covers as possible. The first phase uses $\widetilde{O}(n^{3/2})$ bits of communication, and either correctly decides that no perfect matching exists, or returns a matching of size $\frac{n}{2}-\widetilde{O}(\sqrt n)$. In the second phase, we simulate $\widetilde{O}(\sqrt n)$ rounds of the Blossom algorithm to augment this partial matching to a perfect matching, using another $\widetilde{O}(n^{3/2})$ bits of communication. It is worth noting that our greedy first phase is conceptually simple and more communication-efficient than directly simulating the Blossom algorithm from scratch (which would require $\widetilde{O}(n^2)$ bits).

\subsubsection{Beyond Bipartite Matching}
We revisit the cutting-plane approach of \cite{BlikstadBEMN22}.
By abstracting away the problem-specific aspects of its application to bipartite matching, we show that it provides a general framework for designing communication-efficient protocols for graph problems.

We consider the following four tasks:
\begin{itemize}
\item $\cycle$: Given an unweighted directed graph, decide whether it contains a directed cycle.
\item $\negcycle$: Given a directed graph with integer weights in $[-W, W]$, decide whether it contains a negative-weight cycle.
\item $\SSSP$: Given a directed graph with integer weights in $[0, W]$ and a source vertex $s$, compute the shortest-path distances from $s$ to every vertex $v$.
\item $\negSSSP$: Given a directed graph with integer weights in $[-W, W]$ and a source vertex $s$, either detect a negative-weight cycle or, if none exists, compute the shortest-path distances from $s$ to every vertex $v$.
\end{itemize}

\begin{theorem}[Informal]
There are deterministic two-party communication protocols using $O(n \cdot \mathrm{polylog}(n, W))$ bits of communication for $\cycle$, $\negcycle$, $\SSSP$, and $\negSSSP$.
\end{theorem}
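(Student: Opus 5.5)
The plan is to cast all four tasks as feasibility questions for a vertex-potential LP and decide each with the cutting-plane framework of \cite{BlikstadBEMN22}, used as a black box. Assume the framework guarantees the following: Alice and Bob hold the constraints of a polytope $P\subseteq\mathbb{R}^n$ whose constraints are split between them. Each constraint is locally checkable and can be sent in $O(\log(nW))$ bits. Then in $O(n\,\mathrm{polylog}(n,W))$ bits they can either find a point of $P$ or certify that $P$ has tiny volume. Certifying tiny volume should also produce a small set of constraints that is jointly infeasible, or at least infeasible after slight relaxation. The core object is the potential polytope
\[
P=\{\,p\in[-nW,nW]^n : p(v)\le p(u)+w(u,v)\ \text{for all } (u,v)\in E\,\}.
\]
By the standard duality between potentials and cycles, $P\neq\emptyset$ if and only if $G$ has no negative cycle. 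So $\negcycle$ is the base case. $\cycle$ reduces to it by giving every edge weight $-1$. $\SSSP$ is the special case of $\negSSSP$ with nonnegative weights.

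The key steps, in order, are these.

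\textbf{(1) Volume gap.} Integrality alone gives no volume gap, since a feasible $P$ can be lower-dimensional (for example, when there is a zero-weight cycle). I would therefore scale the weights to $w'=(n+1)w+1$. A cycle of length $k\le n$ then has $w'(C)=(n+1)w(C)+k$. This is positive exactly when $w(C)\ge 0$, because $k\le n<n+1$. So negative cycles are preserved, and zero cycles now have weight at least $1$. Next relax every constraint by $\delta=1/(4n)$. If there is no negative cycle, the integral shortest-path potentials under $w'$ are strictly feasible, so the relaxed polytope contains a box of side $\delta$ and has volume at least $\delta^n=2^{-O(n\log n)}$. If some cycle has $w'(C)\le -1$, then summing the relaxed constraints around it gives $0\le w'(C)+n\delta<0$. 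That is a contradiction, so the relaxed polytope is empty. This volume gap is what the framework needs.

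\textbf{(2) Run the framework.} Running it costs $O(n\log^2(nW))$ bits, because each separating hyperplane is an edge constraint named by $O(\log n)$ bits. The outcome is either a feasible point or an emptiness certificate.

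\textbf{(3) Extract the answer.} This is the step I expect to be the main obstacle. The cutting-plane method tracks a center point, and the cuts used are edges that the center violated. From a feasible point $p$ I would round to integer potentials $\lfloor p\rfloor$ and check that every reduced weight $w(u,v)+p(u)-p(v)$ is nonnegative. I would try to deduce this from $\delta n<1$ and integrality after the scaling; if rounding is not clean, I would use the potentials only to reweight.

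For $\negSSSP$, once valid potentials exist, the reduced weights are nonnegative. Shortest-path distances are invariant under this reweighting up to known additive terms, so the problem reduces to $\SSSP$. For $\SSSP$ itself, I would again use an LP, not Dijkstra, which would cost $n^2$ bits. Maximizing $\sum_v d(v)$ subject to $d(s)=0$ and $d(v)\le d(u)+w(u,v)$ has the true distances as its unique optimum. I would solve it by binary search over objective thresholds, or by fixing the distances one vertex at a time. Either way the result is fed through the same feasibility oracle with an added objective cut. That cut is known to both players, so it costs no communication. Then $O(\log(nW))$ outer iterations keep the total at $O(n\,\mathrm{polylog}(n,W))$.

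For a negative-cycle certificate, I would locate a cycle among the $O(n\log(nW))$ revealed edge constraints. These edges are known to both players, and their Farkas combination is a nonnegative circulation with negative weight, which decomposes into cycles. The delicate points are the rounding to integrality and keeping the number of outer iterations polylogarithmic.
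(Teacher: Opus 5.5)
\textbf{Cycle and negative-cycle detection.} These parts are fine. Your treatment of $\negcycle$ is essentially the paper's: a potential polytope with each edge constraint relaxed by a small slack, decided by the cutting-plane method. Reducing $\cycle$ to it with all weights $-1$ is a valid alternative to the paper's Kosaraju simulation, at $O(n\log^2 n)$ instead of $O(n\log n)$ bits.

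The rescaling $w'=(n+1)w+1$ is unnecessary. For any exact potential $d$, tight edges included, the box $\prod_v[d(v),d(v)+\delta]$ already lies in the relaxed polytope. So strict feasibility is never needed, and shortest-path potentials are not strictly feasible anyway. If you keep the rescaling, the bounding box must grow to match $(n+1)W+1$.

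\textbf{Gap 1: step (3) for $\negSSSP$.} Flooring a point of the relaxed polytope does not give valid potentials. Take an edge of integer weight $c$ with $p(u)=1-\delta/2$ and $p(v)=1+c$. The relaxed constraint $p(v)-p(u)\le c+\delta$ holds, but $\lfloor p(v)\rfloor-\lfloor p(u)\rfloor=c+1$. No rule that rounds coordinates independently can avoid this. Your fallback, reweighting by $p$, only yields reduced weights $\ge-\delta$, so ``the problem reduces to $\SSSP$'' does not follow.

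The missing idea is the paper's \Cref{lem:approx_johnson}:
\begin{itemize}
\item keep $p$ fractional;
\item round each reduced weight $w(u,v)+p(u)-p(v)$ up to the next multiple of $1/n$, which is nonnegative since $\delta<1/n$;
\item solve nonnegative-weight SSSP on the $n$-scaled integer weights;
\item output $\lfloor d_{\widetilde G}(s,v)+p(v)-p(s)\rfloor$.
\end{itemize}
This is exact for two reasons. First, $d_{G_p}(s,v)\le d_{\widetilde G}(s,v)$. Second, along a shortest simple path the rounding adds less than $(n-1)/n<1$, and true distances are integers.

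\textbf{Gap 2: the $\SSSP$ step.} It rests on a false premise: simulating Dijkstra does not cost $n^2$ bits. In each of the $n-1$ rounds, each player sends only its best frontier candidate. That is a vertex $v\notin S$ with the minimum of $d(u)+w(u,v)$ over its own edges from $S$. The smaller of the two offers is exactly Dijkstra's next choice, giving $O(n\log(nW))$ bits in total. This is the paper's $\SSSP$ protocol, and it is also what the paper runs on the rounded reweighted graph for $\negSSSP$.

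Your LP substitute does not meet the framework's volume promise as written. At $\theta=\mathrm{OPT}$ the set $\{d: d(s)=0,\ d(v)\le d(u)+w(u,v),\ \sum_v d(v)\ge\theta\}$ is a single point. The equality $d(s)=0$ also makes every threshold polytope lower-dimensional. So there is no volume lower bound and nothing for the method to return. The alternative of ``fixing the distances one vertex at a time'' needs $\Omega(n)$ outer rounds, hence $\widetilde{\Omega}(n^2)$ bits.

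The binary-search variant can likely be rescued:
\begin{itemize}
\item eliminate $d(s)$ and restrict to vertices reachable from $s$;
\item relax each edge constraint by $\delta<1/(2n^2)$;
\item binary search over integer $\theta$;
\item round the final point coordinatewise, since each coordinate is then within $n^2\delta<1/2$ of the true distance.
\end{itemize}
That volume-gap and rounding argument is exactly what your proposal leaves open.
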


Two of these problems admit straightforward near-linear communication protocols via simulations of classical algorithms: $\cycle$ can be solved using Kosaraju's algorithm, and $\SSSP$ can be solved using Dijkstra's algorithm. We discuss these simulations in \Cref{subsec:kosaraju} and \Cref{subsec:dijkstra}.

The remaining two problems, $\negcycle$ and $\negSSSP$, are more challenging due to the presence of negative edge weights. In the sequential setting, both can be solved using the Bellman-Ford algorithm. However, a straightforward simulation of Bellman-Ford requires $\widetilde{\Omega}(n^2)$ bits in the communication model. To avoid this cost, we apply the cutting-plane framework of \cite{BlikstadBEMN22} to a vertex-potential polytope (which encodes the triangle inequalities for shortest-path distances). This gives a near-linear communication protocol for $\negcycle$, as we discuss in \Cref{subsec:negcycle}. Moreover, a feasible point of this polytope is a key ingredient in our protocol for $\negSSSP$.

The near-linear communication upper bound for $\negSSSP$ is not new: it was established in \cite{BlikstadBEMN22}. Their approach, however, proceeds through a chain of reductions, ultimately reducing $\negSSSP$ to bipartite min-cost $b$-matching, which they solve using $\widetilde{O}(n)$ bits of communication (see \Cref{fig:reduction}). These layers of reductions make it difficult to extract a self-contained protocol for $\negSSSP$ from their work.

\usetikzlibrary{arrows.meta, positioning}

\begin{figure}[ht]
\begin{center}
\begin{tikzpicture}[
    node distance=1.35cm,
    problem/.style={draw, rounded corners, align=center, minimum width=3.6cm, minimum height=1.35cm},
    arrow/.style={-{Stealth}, thick}
]

\node[problem] (negSSSP)    {Negative-weight \\ SSSP};
\node[problem] (tranship)   [right=of negSSSP] {Transshipment};
\node[problem] (vcflow)     [right=of tranship] {Vertex-capacitated \\ min-cost $s$-$t$ flow};
\node[problem] (bpmcb)      [below=of vcflow] {Bipartite perfect \\ min-cost $b$-matching};
\node[problem] (bmcb)       [left=of bpmcb] {Bipartite min-cost \\ $b$-matching};
\node[problem] (prot)       [left=of bmcb] {An $O(n \log^2(nW))$-bit \\ protocol by \cite{BlikstadBEMN22}};


\draw[arrow] (negSSSP) -- node[midway, above] {(*)} (tranship);
\draw[arrow] (tranship) -- node[midway, above] {(*)} (vcflow);
\draw[arrow] (vcflow) -- node[midway, right] {(*)} (bpmcb);
\draw[arrow] (bpmcb) -- node[midway, above] {(\#)} (bmcb);
\draw[arrow] (bmcb) -- node[midway, above] {(\#)} (prot);


\end{tikzpicture}
\caption{A sequence of reductions that computes negative-weight SSSP in $O(n \log^2(nW))$ bits of communication. The reductions (*) are given in \cite{BrandLNPSS0W20} and (\#) are given in \cite{BlikstadBEMN22}.}
\label{fig:reduction}
\end{center}
\end{figure}

Our contribution is a conceptually simple and direct protocol for $\negSSSP$ that avoids these reductions. At a high level, the protocol computes a feasible point of the vertex-potential polytope used for $\negcycle$. We then use this point as a proxy for the vertex potentials that would arise in Johnson's algorithm for all-pairs shortest paths. Reweighting the edges using these potentials essentially yields a communication-efficient reduction from $\negSSSP$ to $\SSSP$. Combined with an efficient simulation of Dijkstra's algorithm, this allows us to recover the exact shortest-path distances using near-linear communication. We describe our approach in detail in \Cref{subsec:negSSSP}.

\subsubsection{Combinatorial Protocol for Bipartite Matching} 

We present a \emph{combinatorial} variant of the near-linear communication protocol for bipartite matching in \cite{BlikstadBEMN22}.
Instead of tracking the continuous volume of the vertex-cover polytope, our protocol tracks a discrete analogue: the number of feasible grid points, corresponding to $\frac{1}{\mathrm{poly}(n)}$-integral vertex covers.
We show that this discrete ``volume'' is a sufficiently accurate proxy for continuous volume in the analysis.
This gives a near-linear communication protocol for bipartite matching whose implementation is purely combinatorial.
We discuss this in \Cref{sec:bipartite_matching}.

\subsection{Two-Party Communication Complexity of Graph Problems}

\begin{longtable}{|p{5.26cm}|p{3cm}|p{6.93cm}|}
\hline
\textbf{Problem} & \textbf{Communication Complexity} & \textbf{Remarks} \\
\hline
\endfirsthead
\multicolumn{3}{l}{\textbf{Table \thetable \ Continued}} \\
\hline
\textbf{Problem} & \textbf{Communication Complexity} & \textbf{Remarks} \\
\hline
\endhead
\noalign{\vskip1em} 
\caption{Two-party communication complexities of various graph problems.}
\label{tab:communication_complexities}
\endlastfoot
BFS, DFS, spanning forest, connectivity, reachability & $\Theta(n \log n)$ & Direct simulation of classic algorithms; lower bound by \cite{HajnalMT88}.\\
\hline
$k$-edge connectivity & $O(kn \log n)$ & Nagamochi-Ibaraki algorithm \cite{NagamochiI92} \newline using $k$ spanning forests. \\
\hline
$s$-$t$ vertex connectivity & $\widetilde{O}(n)$ & A reduction to vertex-capacitated max flow \cite{BlikstadBEMN22}. \\
\hline
Global vertex connectivity & $\Theta(n^2)$ (det.) \newline $\widetilde{\Theta}(n^{3/2})$ (rand.) & {Combining local connectivity (det.); \newline \cite{BlikstadJMY25} (rand.). } \\
\hline
Bipartite maximum matching & $O(n \log^2 n)$ & Continuous LP-based protocol \cite{BlikstadBEMN22}. {\bf We give a combinatorial protocol in \Cref{subsec:comb_BMM}}. \\
\hline
General maximum matching & $\widetilde{O}(n^{3/2})$ & Simulation of \cite{MicaliV80, Gabow17}. {\bf We give a simple protocol in \Cref{sec:general_matching}}. \\
\hline
Strongly connected components & $O(n \log n)$ & Kosaraju's algorithm (see \Cref{subsec:kosaraju}). \\
\hline
Directed cycle detection & $O(n \log n)$ & Kosaraju's algorithm and DFS \newline (see \Cref{subsec:kosaraju}). \\
\hline
Negative cycle detection & $O(n \log^2(nW))$ & A chain of reductions \cite{BlikstadBEMN22}. {\bf We give a simple protocol in \Cref{subsec:negcycle}}. \\
\hline
Single-source shortest paths \newline (SSSP) & $O(n \log(nW))$ & Dijkstra's algorithm (see \Cref{subsec:dijkstra}). \\
\hline
Negative-weight SSSP & $O(n \log^2(nW))$ & Reducing to bipartite \newline min-cost $b$-matching \cite{BlikstadBEMN22}. {\bf We give a simple protocol in \Cref{subsec:negSSSP}}. \\
\hline
Global min-cut & $\widetilde{O}(n)$ & Derandomized reduction to cut-query \newline algorithms of \cite{RubinsteinSW18, ApersEGLMN22, KennethMordochK25}. \\
\hline
Exact min $s$-$t$ cut and undirected uncapacitated max flow & $\widetilde{O}(n^{11/7})$ (det.) \newline $\widetilde{O}(n^{3/2})$ (rand.) & \cite{JiangNS26} (det.); \newline \cite{GholizadehJ25, Kenneth-MordochK26, vdBrandSW26} (rand.). \\
\hline
Cut sparsification & $\widetilde{O}(n/\varepsilon^2)$ & Combining local cut sparsifiers of size $\widetilde{O}(n/\varepsilon^2)$ from both players. \\
\hline
$\varepsilon$-approximate min $s$-$t$ cut & $\widetilde{O}(n/\varepsilon^2)$ & Taking a min $s$-$t$ cut of a cut sparsifier. \\
\hline
Minimum spanning tree & $\widetilde{O}(n)$ & Simulating Kruskal's algorithm. \\
\hline
$(\Delta+1)$-coloring & $\widetilde{O}(n)$ (det.) \newline $\Theta(n)$ (rand.) & \cite{AssadiCGS23} (det.); \newline \cite{FlinM25} (rand.). \\
\hline
\end{longtable}

\section{Maximum Matching in General Graphs}
\label{sec:general_matching}

The problem of \emph{perfect matching in general graphs} ($\mathsf{PM}$) is defined in the communication model as follows: Alice and Bob are given a disjoint edge partition of a graph $G=(V,E)$ where $n = |V|$ is even.
Their goal is to determine whether $G$ has a perfect matching and, if so, to output one.

In this section, we present a deterministic $\widetilde O(n^{3/2})$-bit communication protocol for $\mathsf{PM}$.

\begin{theorem}
\label{thm:matching}
There is a deterministic protocol $\protocol{PM}$ that solves $\mathsf{PM}$ using $O(n^{3/2}\log^{3/2} n)$ bits of communication.
\end{theorem}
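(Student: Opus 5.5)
The protocol runs in two phases, following the outline in the introduction. By the Tutte--Berge formula, the maximum matching size equals $\min_{U\subseteq V}\frac{1}{2}\bigl(n+|U|-\mathrm{odd}(G-U)\bigr)$. A perfect matching fails to exist iff some \emph{Tutte-Berge cover} witnesses a deficiency. Here a cover is a set $U$ together with a partition of $V\setminus U$ into parts. Every edge must either touch $U$ or lie inside a part, and the value is $|U|$ minus the number of odd parts plus $n/2$ (or a similar normalization). \Cref{def:tutte-berge-cover} gives the precise definition.

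\textbf{Phase 1 (greedy revealing).} Alice and Bob maintain a public edge set $F$, initially empty. Let $\mathcal{C}_k$ be the set of covers of value at most $n/2-k$ (deficiency at least $k$) that cover all edges of $F$, for a threshold $k=\Theta(\sqrt{n\log n})$. We will show that every deficient cover is determined by about $O(n\log n)$ bits, so $|\mathcal{C}_k|\le 2^{O(n\log n)}$ initially. In each round, each player checks whether some private edge, or small batch of edges, is left uncovered by at least a $1/\mathrm{poly}$ or constant fraction of the covers in $\mathcal{C}_k$. If so, the player announces that edge, at cost $O(\log n)$ bits, which shrinks $|\mathcal{C}_k|$ by a constant factor.

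The key lemma is as follows. If no single edge kills a constant fraction, then some cover $C\in\mathcal{C}_k$ covers all edges of $G$ and certifies that $\nu(G)\le n/2-k$. Alternatively, one shows that $\nu(F)\ge n/2-O(k)$. I would try to prove this by averaging over $\mathcal{C}_k$. Suppose a max matching $M$ of $G$ is near perfect. Each cover with deficiency $k$ leaves at least about $k$ edges of $M$ uncovered, since a cover of value $v$ covers at most $v$ matching edges. So on average a random cover in $\mathcal{C}_k$ misses $\Omega(k)$ edges of $M$, and some edge of $M$ is missed by an $\Omega(k/n)$ fraction. Each announced edge therefore shrinks $|\mathcal{C}_k|$ by a factor $1-\Omega(k/n)$. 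The number of rounds is then $O\bigl(\frac{n}{k}\cdot n\log n\bigr)$, costing $O(\frac{n^2\log^2 n}{k})$ bits in total. Once $\mathcal{C}_k$ is empty, $\nu(F)> n/2-k$ by Tutte--Berge applied to $F$. If instead a deficient cover of $G$ is found, the players halt and report that $G$ has no perfect matching.

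\textbf{Phase 2 (augmentation).} Starting from a matching of size $n/2-k$, the players run $k$ augmentation rounds of Edmonds' blossom algorithm. Each search for an augmenting path is simulated as an alternating BFS/DFS with blossom contractions, all computed locally on shared state. A player only announces an edge from a current outer vertex to an unvisited or outer vertex, using $O(\log n)$ bits per discovery. Each vertex is discovered or merged into a blossom $O(1)$ times, so one search costs $O(n\log n)$ bits, and Phase 2 costs $O(kn\log n)$ in total. Balancing $\frac{n^2\log^2 n}{k}=kn\log n$ gives $k=\sqrt{n\log n}$ and total cost $O(n^{3/2}\log^{3/2}n)$. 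This matches the theorem.

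The main obstacle is Phase 1. We must count the covers so that $\log|\mathcal{C}_k|=O(n\log n)$, which is immediate since a partition of $n$ elements has about $n\log n$ bits of description. The harder part is the averaging argument. The fractional-kill bound must use the true matching $M$ of $G$ even though neither player knows $M$. It suffices that some edge of $G$ held by one of the players kills many covers, and each player can check this locally because the maintained family is common knowledge. I would carefully verify the inequality that any cover of deficiency $\ge k$ misses at least $k-(\text{deficiency of }M)$ edges of $M$. In the case where $G$ has a perfect matching, this gives the $\Omega(k/n)$ shrink factor per announced edge.
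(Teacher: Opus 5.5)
Your proposal is correct and is essentially the paper's proof. You greedily reveal edges that invalidate low-value Tutte--Berge covers (at most $(n+1)^n$ of them), use the fact that a cover of value $\delta$ covers at most $\delta$ edges of a perfect matching to shrink the family by a $(1-\frac{2k}{n})$ factor over $O(\frac{n^2\log n}{k})$ rounds, then run $k$ blossom-algorithm augmentations at $O(n\log n)$ bits each, with $k=\Theta(\sqrt{n\log n})$.

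One thing to clean up is the ``constant fraction'' version of your key lemma: it is not justified as stated, and it is not what your averaging gives. Replace it with the paper's stopping rules. If no unrevealed edge invalidates any surviving cover, that cover covers $G$ and certifies there is no perfect matching. If covers survive all $T=\lceil n^2\ln n/k\rceil$ rounds, your $\frac{2k}{n}$ averaging argument rules out a perfect matching.
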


As a corollary, the protocol can be extended to compute a \emph{maximum matching} with an additional $O(\log n)$ factor in communication.

\begin{corollary}
There is a deterministic protocol $\protocol{MM}$ that computes a maximum matching
in $G$ using $O(n^{3/2}\log^{5/2} n)$ bits of communication.
\end{corollary}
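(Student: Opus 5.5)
We reduce maximum matching to $O(\log n)$ invocations of $\protocol{PM}$ from \Cref{thm:matching}, binary searching on the matching size $k$. The standard gadget: for $0 \le k \le \lfloor n/2 \rfloor$, build $G_k$ by adding a set $D_k$ of $n-2k$ new dummy vertices, each joined to every original vertex, and no edges among the dummies. If $n-2k$ is odd we add one extra isolated-free dummy adjustment (e.g. add one further original-side dummy adjacent to all dummies) so the vertex count is even; alternatively we first add one isolated auxiliary vertex pair to make $n$ even. Then $G_k$ has a perfect matching if and only if $G$ has a matching of size at least $k$. Given a matching $M$ of size $k$ in $G$, the $n-2k$ unmatched original vertices are matched to the dummies. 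Conversely, in a perfect matching of $G_k$ every dummy is matched to a distinct original vertex, since there are no dummy--dummy edges. This leaves $2k$ original vertices matched among themselves, which gives a matching of size $k$ in $G$.

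To run this in the communication model, the dummy edges are assigned to Alice, say. Both players know $n$ and $k$, so Alice can construct them locally with no communication. The graph $G_k$ has at most $2n$ vertices, so one call to $\protocol{PM}$ on $G_k$ costs $O((2n)^{3/2}\log^{3/2}(2n)) = O(n^{3/2}\log^{3/2} n)$ bits.

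The property ``$G$ has a matching of size $\ge k$'' is monotone in $k$. Binary search over $k \in \{0,\dots,\lfloor n/2\rfloor\}$ therefore finds the maximum matching size $\nu(G)$ with $O(\log n)$ calls, for a total of $O(n^{3/2}\log^{5/2} n)$ bits. A final call on $G_{\nu(G)}$ returns a perfect matching of that graph. Deleting the edges incident to dummies yields a matching of size $\nu(G)$ in $G$, which both players know from the protocol's output.

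The only delicate point is the parity requirement that $\mathsf{PM}$ be defined for an even number of vertices. $G_k$ has $n + (n-2k) = 2n-2k$ vertices, which is always even, so no adjustment is actually needed. The argument is therefore a routine black-box reduction, and I expect no real obstacle beyond verifying the gadget's correctness as above.
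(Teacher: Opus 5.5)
Your proposal is correct and is essentially the paper's proof: the same gadget adding $n-2k$ auxiliary vertices joined to all original vertices, a binary search using $O(\log n)$ calls to $\protocol{PM}$ on graphs with at most $2n$ vertices, and keeping the original--original edges of the final perfect matching. Delete the tentative parity ``adjustment'' in your first paragraph: as you note at the end, $2n-2k$ is always even, and adding one extra vertex would in fact make the vertex count odd.
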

\begin{proof}
For $0 \le r \le \frac{n}{2}$, let $G_r$ be the graph obtained by adding $n-2r$ new auxiliary vertices to $G$, connecting each auxiliary vertex to every original vertex (and no edges between auxiliary vertices).
Observe that $G_r$ has a perfect matching if and only if $G$ has a matching of size at least $r$.\footnote{Given a size-$r$ matching in $G$, the remaining $n-2r$ original vertices can be matched to the auxiliary vertices. Conversely, a perfect matching of $G_r$ matches each auxiliary vertex to a distinct original vertex, leaving exactly $2r$ original vertices to be matched among themselves.}
Thus, Alice and Bob can binary search over $r$ to find the size of a maximum matching in $G$ using $O(\log n)$ calls to $\protocol{PM}$ on graphs $G_r$ with at most $2n$ vertices.
For the final value of $r$, the original-original edges in a perfect matching of $G_r$ form a maximum matching of $G$.
By \Cref{thm:matching}, the total communication is $O(\log n) \cdot O(n^{3/2}\log^{3/2}n) = O(n^{3/2}\log^{5/2} n)$ bits.
\end{proof}

\paragraph{A Two-Phase Protocol for Perfect Matching.}
We now describe the protocol for \Cref{thm:matching}.
Fix an integer parameter $1 \le k < \frac{n}{2}$ to be optimized later.
The protocol proceeds in two phases.

\begin{itemize}
\item In Phase 1, Alice and Bob either find a matching of size $\frac{n}{2}-k$, or correctly conclude that $G$ has no perfect matching.
Phase 1 requires $O\!\left(\frac{n^2\log^2 n}{k}\right)$ bits of communication.
\item In Phase 2, starting from a matching of size $\frac{n}{2}-k$, Alice and Bob repeatedly find augmenting paths.
This requires $k$ iterations to reach a perfect matching if one exists.
By simulating the blossom algorithm~\cite{Edmonds65}, each augmenting path can be found using $O(n\log n)$ bits of communication.
Phase 2 requires $O(kn\log n)$ bits of communication.
\end{itemize}

Choosing $k=\Theta(n^{1/2} \log^{1/2} n)$ gives total communication
\[
O\!\left(\frac{n^2\log^2 n}{k}+kn\log n\right) = O(n^{3/2}\log^{3/2} n),
\]
which proves \Cref{thm:matching}.

We next describe Phase 1 and analyze its correctness and communication cost in
\Cref{subsec:general_matching_phase1}.
The implementation and analysis of
Phase 2 are deferred to \Cref{subsec:phase2}.

\subsection{Phase One: Eliminating Tutte-Berge Covers}
\label{subsec:general_matching_phase1}

In this section, we show how to implement Phase 1 of the protocol.

\begin{lemma}
\label{lem:PM_phase1_correctness}
Let $G = (V, E)$ be a general graph where $n = |V|$ is even.
Let $1 \le k < \frac{n}{2}$.
There is a deterministic protocol that, using $O\!\left(\frac{n^2\log^2 n}{k}\right)$ bits of communication, either finds a matching of $G$ of size $\frac{n}{2}-k$, or correctly concludes that $G$ has no perfect matching.
\end{lemma}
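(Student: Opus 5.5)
We use the Tutte--Berge formula and a greedy edge-revealing scheme. A \emph{Tutte--Berge cover} is a set $U\subseteq V$ together with a partition of $V\setminus U$ into parts, such that every edge has an endpoint in $U$ or lies inside one part. Its \emph{value} is $\frac12\bigl(n+|U|-\mathrm{odd}(U)\bigr)$, where $\mathrm{odd}(U)$ counts the odd parts. By the Tutte--Berge formula, the maximum matching size of any graph equals the minimum value of a cover of its edges. Since only the component structure matters, we take the parts to be the connected components of $G-U$, or any coarsening of them. The value is an integer, so the number of such covers is at most $2^n\cdot n^n=2^{O(n\log n)}$.

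Alice and Bob maintain a public set $F\subseteq E$ of revealed edges and a greedy matching $M$ inside $F$. They will run a cover-elimination process. Let $\mathcal{C}$ be the set of covers of value $<\frac n2$ that are consistent with $F$, meaning every edge of $F$ is covered. Both players can compute $\mathcal{C}$ locally, since local computation is free.

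The process is iterated as follows.
\begin{enumerate}
\item If $\mathcal{C}=\emptyset$, then $F$ has a perfect matching, because the maximum matching of $F$ is $\ge \frac n2$. Both players compute it locally and output it, which is of size $\ge\frac n2-k$.
\item Otherwise, each player checks whether some private edge is uncovered by at least a $\frac{k}{n^2}$ fraction of the covers in $\mathcal{C}$. If so, that player sends it using $O(\log n)$ bits, and it is added to $F$. This shrinks $|\mathcal{C}|$ by a factor $(1-\frac{k}{n^2})$. Since $|\mathcal{C}|\le 2^{O(n\log n)}$, this step happens at most $O\!\left(\frac{n^2}{k}\cdot n\log n\right)$ times.
\end{enumerate}
That count is too many by a factor $n/\log n$. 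To fix it, we must pick the threshold more cleverly, or instead count covers weighted by how far their value falls below $\frac n2$.

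The refined plan restricts attention to covers of value exactly $\mathrm{mm}(F)$, the current maximum matching size of $F$. We stop as soon as $\mathrm{mm}(F)\ge \frac n2-k$. Each revealed edge is chosen to eliminate at least a $\frac{1}{\mathrm{poly}}$ fraction of the current minimum covers. Once all minimum covers are eliminated, $\mathrm{mm}(F)$ increases by one, and by the Tutte--Berge formula this can happen at most $\frac n2$ times. The key structural claim is the following.

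\emph{Claim.} If $\mathrm{mm}(F)<\frac n2-k$ but $G$ has a perfect matching, then for any probability distribution over the covers of $F$ with value $<\frac n2$, some edge of $E$ is uncovered with probability $\ge \Omega(k/n^2)$.

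The claim follows by averaging over a perfect matching $M^*$ of $G$. Each cover $C$ of value $\nu(C)$ leaves at least $\frac n2-\nu(C)$ edges of $M^*$ uncovered, because $M^*$ is a matching of size $\frac n2$ and any matching covered by $C$ has size at most $\nu(C)$. Restricting to covers with $\nu(C)\le\frac n2-k$, each leaves at least $k$ of the $\frac n2$ matching edges uncovered, so some edge of $M^*$ is uncovered with probability $\ge\frac{2k}{n}$. The edge is held by Alice or Bob, so the responsible player sees it.

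With fraction $\frac{2k}{n}$, each revealed edge shrinks $|\mathcal{C}|$ by a factor $(1-\frac{2k}{n})$. Hence $O\!\left(\frac{n}{k}\cdot n\log n\right)=O\!\left(\frac{n^2\log n}{k}\right)$ edges suffice, each costing $O(\log n)$ bits, for a total of $O\!\left(\frac{n^2\log^2 n}{k}\right)$.

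So the final protocol uses $\mathcal{C}=$ the set of covers of $F$ with value $\le \frac n2-k$. It repeats until $\mathcal{C}=\emptyset$, at which point $\mathrm{mm}(F)>\frac n2-k$ and we output a maximum matching of $F$, trimmed to size $\frac n2-k$. Alternatively, it stops when neither player holds an edge uncovered by a $\frac{2k}{n}$ fraction. In that case the averaging argument shows $G$ has no perfect matching, and the players answer accordingly.

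The main obstacle is the precise averaging step. A cover with value $\nu$ leaves at least $\frac n2-\nu$ edges of $M^*$ uncovered, since otherwise $M^*$ minus the uncovered edges would be a matching covered by $C$ of size exceeding $\nu$. Applying the same bound to the subgraph of covered edges makes this rigorous, and we will verify it carefully. We must also check that the cover count $2^{O(n\log n)}$ is correct.
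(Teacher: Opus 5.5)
Your final protocol is correct and is essentially the paper's argument. You maintain the low-value Tutte--Berge covers consistent with the revealed edges and average over a perfect matching $M^*$, which shows that some unrevealed edge invalidates a $\frac{2k}{n}$ fraction of them; this gives $O\!\left(\frac{n^2\log n}{k}\right)$ rounds of $O(\log n)$ bits each. The differences are cosmetic (a threshold rule instead of greedy, covers of value $\le \frac n2-k$ rather than exactly $\frac n2-k-1$, which sidesteps the paper's value-raising footnote, and parts allowed to be even), and the intermediate $\mathrm{mm}(F)$-phase plan should simply be dropped from the write-up.
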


The key idea is based on the Tutte-Berge formula~\cite{Tutte47,Berge58} (\Cref{thm:tutte_berge_formula}), which gives a min-max characterization of the maximum matching size in general graphs.
Motivated by this, we define the notion of Tutte-Berge covers (\Cref{def:tutte-berge-cover}), which play a similar role as vertex covers in bipartite graphs: a low-value Tutte-Berge cover of $G$ certifies that $G$ cannot have a large matching.

Phase 1 maintains a public graph $H\subseteq G$ and the set $\mathcal{C}$ of low-value Tutte-Berge covers that cover $H$.
Alice and Bob repeatedly reveal edges to invalidate covers in $\mathcal{C}$.
If $\mathcal{C}$ becomes empty, then the Tutte-Berge formula implies that $H$ has a matching of the target size.
If some cover in $\mathcal{C}$ survives all edges of $G$, then it certifies that $G$ has no perfect matching.
The main point of the analysis is that, when $G$ has a perfect matching $M$, every remaining cover is invalidated by many edges of $M$, so the players can eliminate covers quickly.

\begin{definition}[Tutte-Berge Cover]
\label{def:tutte-berge-cover}
Let $V$ be a vertex set.
A Tutte-Berge cover is a tuple $C = (A, S_1, \ldots, S_t)$ such that
$A, S_1, \ldots, S_t$ form a partition of $V$, and each $S_i$ has odd size.

The value of $C$ is defined as
\[
\delta(C) := |A| + \sum_{i=1}^t \frac{|S_i|-1}{2}.
\]

We say that $C$ covers an edge $e = (u, v)$ if either $u \in A$ or $v \in A$,
or if both $u$ and $v$ are in $S_i$ for some $i \in [t]$.
For a graph $H$, we say that $C$ covers $H$ if it covers every edge in $E(H)$.
We say an edge $e$ invalidates $C$ if $C$ does not cover $e$.
\end{definition}

\begin{figure}[h]
\centering{
\begin{tikzpicture}[scale=0.9, auto, thick]
    \node[node, fill=blue!15] (v1) at (-3, 1.5) {$v_1$};
    \node[node] (v2) at (0, 1.5) {$v_2$};
    \node[node, fill=green!15] (v3) at (3, 1.5) {$v_3$};
    \node[node, fill=blue!15] (v4) at (-3, -1.5) {$v_4$};
    \node[node, fill=blue!15] (v5) at (0, -1.5) {$v_5$};
    \node[node, fill=red!15] (v6) at (3, -1.5) {$v_6$};

    \draw[edge] (v2) to (v1);
    \draw[edge] (v2) to (v3);
    \draw[edge] (v2) to (v4);
    \draw[edge] (v2) to (v5);
    \draw[edge] (v2) to (v6);

    \draw[edge] (v1) to (v4);
    \draw[edge] (v1) to (v5);
    \draw[edge] (v4) to (v5);

    \draw[dashed] (v3) to (v6);
\end{tikzpicture}
}
\caption{An example of a Tutte-Berge cover $C=(A=\{v_2\}, S_1=\{v_1,v_4,v_5\}, S_2=\{v_3\}, S_3=\{v_6\})$, whose value is $\delta(C)=|A|+\sum_i \frac{|S_i|-1}{2}=2$.
The solid edges are exactly the edges covered by $C$.
An edge with endpoints in two different odd sets invalidates $C$, e.g., the dashed
edge $(v_3,v_6)$.}
\label{fig:TB-cover}
\end{figure}

We use the following classical result as a black box.
\begin{theorem}[Tutte-Berge Formula~\cite{Tutte47,Berge58}]
\label{thm:tutte_berge_formula}
For every graph $G$,
\[
\max_{\textnormal{ matching $M$ of $G$}} |M|
=
\min_{\textnormal{ Tutte-Berge cover $C$ of $G$}} \delta(C).
\]
\end{theorem}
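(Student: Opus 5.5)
We prove the two inequalities separately. The inequality $\max_M |M| \le \min_C \delta(C)$ (weak duality) is a direct counting argument. The reverse inequality we reduce to the deficiency version of Tutte's theorem, and then convert a Tutte set into a Tutte-Berge cover of the same value.

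\emph{Weak duality.} Fix a matching $M$ of $G$ and a Tutte-Berge cover $C=(A,S_1,\ldots,S_t)$ of $G$. Every edge of $M$ is covered by $C$, so it either has an endpoint in $A$ or lies inside some $S_i$.
\begin{itemize}
\item Since $M$ is a matching, at most $|A|$ of its edges touch $A$.
\item The edges of $M$ with both endpoints in $S_i$ are disjoint pairs inside $S_i$. Because $|S_i|$ is odd, there are at most $\frac{|S_i|-1}{2}$ of them.
\end{itemize}
Summing gives $|M|\le \delta(C)$.

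\emph{Strong duality.} For $X\subseteq V$, let $\mathrm{odd}(G-X)$ denote the number of odd components of $G-X$. We use Berge's deficiency form of Tutte's theorem:
\[
\nu(G)=\min_{X\subseteq V}\tfrac12\bigl(n+|X|-\mathrm{odd}(G-X)\bigr).
\]
To derive it, let $d=\max_X(\mathrm{odd}(G-X)-|X|)$, which is at least $0$ (take $X=\emptyset$), and form $G'$ by adding $d$ new vertices adjacent to everything and to each other. Parity shows $n+d$ is even. One checks that $G'$ satisfies Tutte's condition, so $G'$ has a perfect matching. Deleting the $d$ added vertices destroys at most $d$ matching edges, so $G$ has a matching of size at least $\frac{n-d}{2}$, which is the claimed value.

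Tutte's theorem itself is the only nontrivial ingredient. If it is not taken as known, we would prove it with Lovász's edge-maximal counterexample argument:
\begin{itemize}
\item Take an edge-maximal graph satisfying Tutte's condition but having no perfect matching.
\item Let $X$ be its set of universal vertices.
\item Show that every component of $G-X$ is a clique, using an alternating-cycle swap whenever two non-adjacent vertices share a neighbour.
\item Then construct a perfect matching directly, which is a contradiction.
\end{itemize}
Alternatively, we could run Edmonds' blossom algorithm to termination and read off $X$ as the inner vertices of the final Hungarian forest.

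\emph{From a Tutte set to a cover.} Take $X$ attaining the minimum above and build $C$ as follows.
\begin{itemize}
\item Put $X$ into $A$.
\item Make each odd component of $G-X$ one set $S_i$.
\item For each even component $D$, pick any $v\in D$, put $v$ into $A$, and make $D\setminus\{v\}$ an odd set $S_i$.
\end{itemize}
Every edge of $G$ either touches $X$, or lies inside a single component; in the second case it is inside some $S_i$ or touches the chosen $v\in A$. So $C$ covers $G$. Its value is
\[
|X|+\sum_{\text{odd }D}\tfrac{|D|-1}{2}+\sum_{\text{even }D}\Bigl(1+\tfrac{|D|-2}{2}\Bigr)=\tfrac12\bigl(n+|X|-\mathrm{odd}(G-X)\bigr)=\nu(G).
\]
Together with weak duality this proves the formula.

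The main obstacle is the strong-duality direction, specifically Tutte's theorem. The bookkeeping for weak duality and for converting $X$ into a cover is routine once we notice that even components are handled by moving one vertex into $A$.
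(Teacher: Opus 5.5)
Your proof is correct. The paper does not prove this theorem; it explicitly takes the Tutte--Berge formula as a black box, so there is no route of the paper's to match. The only part the paper argues itself is your weak-duality half, which is exactly its \Cref{lem:tutte_berge_progress}: at most $|A|$ matching edges touch $A$, and at most $\frac{|S_i|-1}{2}$ lie inside each odd $S_i$.

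For the hard direction you black-box strictly less than the paper does, namely only Tutte's theorem, with the Lov\'asz sketch as backup. You reduce to it via Berge's deficiency trick, adding $d=\max_X(\mathrm{odd}(G-X)-|X|)$ universal vertices. The parity check $d\equiv n \pmod 2$ holds. The verification of Tutte's condition in $G'$ also goes through, splitting on whether $Y$ contains all added vertices (then $G'-Y=G-X$) or not (then $G'-Y$ is connected).

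Your conversion of an optimal Tutte set $X$ into a cover, by moving one vertex of each even component into $A$, is worth highlighting. The paper's cover formulation, a partition into $A$ and \emph{odd} sets only, is not literally the classical $\min_X \frac12(n+|X|-\mathrm{odd}(G-X))$ statement. Your construction, with value $|X|+\sum_{\text{odd }D}\frac{|D|-1}{2}+\sum_{\text{even }D}\frac{|D|}{2}$, is precisely what justifies the paper's phrasing of the cited result.

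One small presentational point: you state Berge's formula as an equality but only derive $\nu(G)\ge\frac{n-d}{2}$. That is all you need. Weak duality applied to your constructed cover gives the chain $\min_C\delta(C)\le\frac{n-d}{2}\le\nu(G)\le\min_C\delta(C)$, which closes the argument.
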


We now describe Phase 1 of the protocol (\Cref{alg:PM_phase1}).

\begin{algorithm}[H]
\caption{Phase 1 of $\mathcal{P}^{\mathsf{PM}}$}
\label{alg:PM_phase1}
\KwIn{A graph $G = (V, E)$ where $n = |V|$ is even, and an integer $1 \le k < \frac{n}{2}$.}
\KwOut{A matching of $G$ of size $\frac{n}{2}-k$, or conclude that $G$ has no perfect matching.}
$H \gets (V,\emptyset)$\;
$\mathcal{C} \gets \left\{
C : C \text{ is a Tutte-Berge cover with } \delta(C)= \frac{n}{2} - k - 1 \right\}$\;
$T \gets \bigl\lceil \frac{n^2\ln n}{k}\bigr\rceil$\;
\For{$i = 0, \ldots, T-1$}{
    Find an edge $e_i\in (G \setminus H)$ that invalidates the largest number
    of covers in $\mathcal{C}$\;
    \If{no edge invalidates any cover in $\mathcal{C}$}{
        \Return that $G$ has no perfect matching\; \label{line:phase1-no-edge}
    }
    $H = H \cup \{e_i\}$\;
    Remove from $\mathcal{C}$ every cover invalidated by $e_i$\;
    \If{$\mathcal{C}=\emptyset$}{
        \Return a matching of $H$ of size $\frac{n}{2}-k$\; \label{line:phase1-cover-empty}
    }
}
\Return that $G$ has no perfect matching\; \label{line:phase1-loop-exhausted}
\end{algorithm}

The following lemma will be useful for showing that the players can make good progress.

\begin{lemma}
\label{lem:tutte_berge_progress}
Let $C$ and $M$ be a Tutte-Berge cover and a matching on the same vertex set $V$, respectively.
Then $C$ covers at most $\delta(C)$ edges of $M$.

In particular, if $|V| = n$ is even, $M$ is a perfect matching, and $\delta(C) = \frac{n}{2} - k - 1$, then at least $k + 1$ edges of $M$ invalidate $C$.
\end{lemma}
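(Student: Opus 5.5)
We prove the first claim by a direct counting argument that charges each covered edge of $M$ to a unit of $\delta(C)$. Write $C=(A,S_1,\ldots,S_t)$ and let $M_C\subseteq M$ be the set of edges of $M$ covered by $C$. By \Cref{def:tutte-berge-cover}, every $e\in M_C$ falls into one of two classes.
\begin{itemize}
\item \emph{Type 1:} $e$ has at least one endpoint in $A$.
\item \emph{Type 2:} both endpoints of $e$ lie in the same odd set $S_i$.
\end{itemize}
We bound the two classes separately.

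For type 1, each such edge contains a vertex of $A$. Since $M$ is a matching, distinct edges of $M$ share no endpoints, so assigning each type-1 edge one of its endpoints in $A$ is injective. Hence there are at most $|A|$ type-1 edges.

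For type 2, fix $i\in[t]$. The type-2 edges inside $S_i$ form a matching on the vertex set $S_i$, so there are at most $\lfloor |S_i|/2\rfloor$ of them. Because $|S_i|$ is odd, $\lfloor |S_i|/2\rfloor = \frac{|S_i|-1}{2}$. No edge is double counted: an edge with both endpoints in some $S_i$ has no endpoint in $A$, and it lies in only one $S_i$ since the parts are disjoint. Summing over the two classes gives
\[
|M_C| \le |A| + \sum_{i=1}^t \frac{|S_i|-1}{2} = \delta(C).
\]

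For the ``in particular'' part, a perfect matching on $n$ vertices has $|M| = n/2$ edges. By the first part at most $\delta(C) = \frac{n}{2}-k-1$ of them are covered by $C$. Every other edge of $M$ is not covered by $C$, i.e., it invalidates $C$. So at least $\frac{n}{2} - (\frac{n}{2}-k-1) = k+1$ edges of $M$ invalidate $C$.

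The only step needing any care is the parity step $\lfloor |S_i|/2\rfloor = (|S_i|-1)/2$ together with confirming the two classes are disjoint. Both are immediate, so I expect no real obstacle. Note that the argument does not need \Cref{thm:tutte_berge_formula}; it is the easy direction of that formula.
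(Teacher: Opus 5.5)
Your proof is correct and follows the paper's own argument: you bound the covered edges of $M$ incident to $A$ by $|A|$ and those inside each odd set $S_i$ by $\frac{|S_i|-1}{2}$, then subtract $\delta(C)$ from the $\frac{n}{2}$ edges of a perfect matching.
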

\begin{proof}
Let $C = (A, S_1, \ldots, S_t)$.
By definition, $C$ covers an edge only if it is incident to $A$ or lies inside some $S_i$.
Because $M$ is a matching, at most $|A|$ edges of $M$ are incident to $A$, and at most $\frac{|S_i|-1}{2}$ edges lie inside each $S_i$ as $|S_i|$ is odd.
Therefore, $C$ can cover at most $|A| + \sum_{i=1}^t \frac{|S_i|-1}{2} = \delta(C)$ edges of $M$.

For the second claim, a perfect matching has $\frac{n}{2}$ edges, of which at most $\delta(C) = \frac{n}{2} - k - 1$ are covered.
The remaining $\frac{n}{2} - \bigl(\frac{n}{2} - k - 1\bigr) = k + 1$ edges are uncovered by, and each invalidates $C$.
\end{proof}

We are now ready to prove the correctness of \Cref{alg:PM_phase1} and analyze the communication cost of simulating it.

\begin{proof}[Proof of \Cref{lem:PM_phase1_correctness}]
Throughout \Cref{lem:PM_phase1_correctness}, $\mathcal{C}$ is maintained to be exactly the set of Tutte-Berge covers of value $\frac{n}{2}-k-1$ that cover the current graph $H$.
We consider each of the algorithm's return statements.

\textbf{Case 1}: The algorithm returns at Line~\ref{line:phase1-no-edge} in iteration $i$.
Then $\mathcal{C} \ne \emptyset$, otherwise the algorithm would have returned at Line~\ref{line:phase1-cover-empty} in the previous iteration.
Fix any $C \in \mathcal{C}$.
We know $C$ covers $H$, and since no edge in $G \setminus H$ invalidates $C$, $C$ covers $G$.
By \Cref{lem:tutte_berge_progress}, every matching of $G$ has size at most $\delta(C) = \frac{n}{2}-k-1 < \frac{n}{2}$.
The algorithm correctly decides that $G$ has no perfect matching.

\textbf{Case 2}: The algorithm returns at Line~\ref{line:phase1-cover-empty} in iteration $i$.
Then $H$ has no Tutte-Berge covers of value at most $\frac{n}{2}-k-1$.\footnote{For a Tutte-Berge cover $C$, we can increase $\delta(C)$ by $1$ without reducing its edge coverage, e.g., by moving a size-$1$ odd set into $A$ or moving two vertices from a larger odd set into $A$.}
Consequently, by \Cref{thm:tutte_berge_formula}, the maximum matching size of $H \subseteq G$ is at least $\frac{n}{2}-k$.
The algorithm correctly returns such a matching.

\textbf{Case 3}: The algorithm returns at Line~\ref{line:phase1-loop-exhausted} after $T$ iterations.
Suppose for contradiction that $G$ has a perfect matching $M$.

Let $\mathcal{C}_i$ denote $\mathcal{C}$ at the start of iteration $i$.
Then $\mathcal{C}_i \ne \emptyset$ for all $i = 0, \ldots, T-1$.
Fix any $i$.
By \Cref{lem:tutte_berge_progress}, each $C \in \mathcal{C}_i$ is invalidated by at least $k+1$ edges of $M$.
By double counting and averaging, some edge $e \in M$ invalidates at least a $\frac{2k}{n}$-fraction of the covers in $\mathcal{C}_i$.

The algorithm greedily chooses the next edge, which eliminates at least this many covers, so $|\mathcal{C}_{i+1}| \le \left(1-\frac{2k}{n}\right)|\mathcal{C}_i|$.
Since $|\mathcal{C}_0| \le (n+1)^n$, after $T$ iterations
\[
|\mathcal{C}_{T}| \le \left(1-\frac{2k}{n}\right)^T (n+1)^n \le \exp\!\left(-\frac{2kT}{n}\right)(n+1)^n < 1,
\]
using $1-x \le e^{-x}$ and our choice of $T = \bigl\lceil \frac{n^2\ln n}{k}\bigr\rceil$.
This contradicts $\mathcal{C}_T \ne \emptyset$.
Therefore, the algorithm correctly concludes that $G$ has no perfect matching.

Finally, we analyze the communication cost.
Since $H$ and $\mathcal{C}$ are public, each player can locally compute how many covers in $\mathcal{C}$ each of their own edge invalidates.
In each iteration, Alice sends her best edge and Bob sends his, and they take whichever invalidates more covers (breaking ties by taking Alice's edge).
Each iteration requires $O(\log n)$ bits of communication.
There are $T = O\!\left(\frac{n^2 \log n}{k}\right)$ iterations, so the total communication is $O(T \log n) = O\!\left(\frac{n^2 \log^2 n}{k}\right)$ bits.
\end{proof}

\section{A Cutting-Plane Framework for Polytope Emptiness Testing}
\label{sec:cutting_plane}

In this section, we describe a cutting-plane framework for testing whether a polytope is empty.
The core idea is closely related to classical optimization techniques such as the center-of-gravity and ellipsoid methods.
We do not claim novelty for this framework, as we distill it from~\cite{BlikstadBEMN22}, where it was used to obtain a near-linear communication protocol for bipartite matching.
By abstracting away the problem-specific aspects of bipartite matching, we present it as a general framework that can be applied to other communication problems.

Consider the following ``emptiness-testing'' problem: given a convex polytope $P \subseteq \mathbb{R}^n$ via a separation oracle, we are promised that either $P = \emptyset$ or $\vol(P) \ge \tau$ for some known volume threshold $\tau > 0$.
The goal is to determine which of the two cases holds using a small number of queries to the separation oracle of $P$.

\begin{algorithm}[ht]
\caption{A cutting-plane algorithm for polytope emptiness testing}
\label{alg:cutting_plane}
\KwIn{Access to a separation oracle for a convex polytope $P \subseteq \mathbb{R}^n$, a convex polytope $P_0 \supseteq P$ with known volume $\vol(P_0) > 0$, and a volume threshold $0 < \tau < \vol(P_0)$. It is promised that either $P = \emptyset$ or $\vol(P) \ge \tau$.}
\KwOut{Decide whether $P = \emptyset$ or $\vol(P) \ge \tau$. In the latter case, output a point $p \in P$.}
$T \gets \left\lceil 3 \ln(\vol(P_0)/\tau) \right\rceil$\;
\For{$i = 0, \ldots, T-1$}{
    $p_i \gets$ center of mass of $P_i$\;
    Query the separation oracle for $P$ at $p_i$: it either asserts that $p_i \in P$, or returns a linear constraint $\langle a^{(i)}, x\rangle \le b^{(i)}$ that is satisfied by every point of $P$ but violated by $p_i$\;
    \If{the oracle asserts that $p_i \in P$}{
        \Return that $\vol(P) \ge \tau$ and $p_i \in P$\; \label{line:feasible}
    }
    $P_{i+1} \gets P_{i} \cap \{x : \langle a^{(i)}, x\rangle \le b^{(i)}\}$\;
    \If{$P_{i+1} = \emptyset$}{
        \Return that $P = \emptyset$\; \label{line:infeasible-loop}
    }
}
\Return that $P = \emptyset$\; \label{line:infeasible-final}
\end{algorithm}

\begin{lemma}
\label{lem:cutting_plane_iterations}
Let $P_0 \subseteq \mathbb{R}^n$ be a convex polytope with volume $\vol(P_0) > 0$.
Let $0 < \tau < \vol(P_0)$.
Let $P \subseteq P_0$ be a convex polytope with the promise that either $P = \emptyset$ or $\vol(P) \ge \tau$.
Then \Cref{alg:cutting_plane} correctly distinguishes between the two cases after making at most $O\left(\log(\vol(P_0)/\tau)\right)$ queries to the separation oracle for $P$.
Moreover, in the case $\vol(P) \ge \tau$, \Cref{alg:cutting_plane} also returns a point $p \in P$.
\end{lemma}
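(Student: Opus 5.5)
The plan is to use Grünbaum's theorem: any halfspace whose boundary passes through the centroid of a convex body $K\subseteq\mathbb{R}^n$ contains at least a $1/e$ fraction of $\vol(K)$. I will track the invariant $P \subseteq P_i$ for all $i$ and show that $\vol(P_i)$ shrinks geometrically.

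First, the invariant holds by induction. $P\subseteq P_0$ by assumption, and each cut $\langle a^{(i)},x\rangle\le b^{(i)}$ is satisfied by every point of $P$. Hence $P\subseteq P_i\cap\{x:\langle a^{(i)},x\rangle\le b^{(i)}\} = P_{i+1}$.

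Next I check each return statement.
\begin{itemize}
\item If the algorithm returns at Line~\ref{line:feasible}, the oracle certified $p_i\in P$. So $P\neq\emptyset$, the promise gives $\vol(P)\ge\tau$, and $p_i$ is a valid output point.
\item If it returns at Line~\ref{line:infeasible-loop}, then $P\subseteq P_{i+1}=\emptyset$, so the answer is correct.
\item The remaining case is Line~\ref{line:infeasible-final}. Suppose for contradiction that $\vol(P)\ge\tau$.
\end{itemize}

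In that last case, every $P_i$ contains $P$ and so has positive volume, which makes its center of mass well-defined. The cut at iteration $i$ is violated by $p_i$, so $\langle a^{(i)},p_i\rangle > b^{(i)}$. Therefore $P_{i+1}$ lies in the closed halfspace $\{x:\langle a^{(i)},x\rangle\le \langle a^{(i)},p_i\rangle\}$ intersected with $P_i$, and in fact misses the complementary halfspace through $p_i$. By Grünbaum's theorem that complementary part has volume at least $\vol(P_i)/e$, so $\vol(P_{i+1})\le(1-1/e)\vol(P_i)$.

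Iterating this bound $T$ times gives $\vol(P_T)\le(1-1/e)^T\vol(P_0)$. Since $\ln\frac{1}{1-1/e}\approx 0.4587>1/3$, the choice $T=\lceil 3\ln(\vol(P_0)/\tau)\rceil$ yields $(1-1/e)^T<\tau/\vol(P_0)$. Thus $\vol(P_T)<\tau\le\vol(P)$, contradicting $P\subseteq P_T$. Each iteration makes one oracle query, so the total is $T=O(\log(\vol(P_0)/\tau))$ queries.

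The main obstacle is Grünbaum's theorem. I would cite it as a black box rather than reprove it; its proof is a Brunn–Minkowski / symmetrization argument reducing to the cone case. The rest is bookkeeping: I need the strict constant comparison $3\ln\frac{1}{1-1/e}>1$, and I need centroids to be defined, which the positive-volume observation above guarantees.
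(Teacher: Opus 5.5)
Your proof is correct and follows the paper's argument essentially step for step. It uses the invariant $P \subseteq P_i$, the same three-way case analysis on the return statements, and Gr\"unbaum's theorem to get $\vol(P_{i+1}) \le (1-1/e)\vol(P_i)$. The only differences are cosmetic. You argue the final case by contradiction, which conveniently guarantees every $P_i$ is full-dimensional so the centroids are defined. You also verify the constant via $\ln\frac{1}{1-1/e} > \frac13$, whereas the paper uses $(1-\frac1e)^T \le e^{-T/e}$ together with $3 > e$.
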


We defer the proof of \Cref{lem:cutting_plane_iterations} to
\Cref{sec:cutting_plane_deferred_proofs}.
\Cref{lem:cutting_plane_iterations} suggests the following general recipe for solving a graph problem $\mathcal{T}$ via polytope emptiness testing:
\begin{enumerate}
    \item Reformulate a relevant (sub)problem as testing whether a convex polytope $P$ is empty or has volume at least $\tau$, for appropriate choices of $P$ and $\tau$.
    \item Show that a separation oracle for $P$ can be efficiently simulated in the communication model.
    \item Apply \Cref{alg:cutting_plane} to distinguish between $P = \emptyset$ and $\vol(P) \ge \tau$.
    \item Translate the outcome back to the original problem $\mathcal{T}$.
\end{enumerate}

This recipe is particularly well-suited for the two-party model.
Primal LP formulations of many graph problems (e.g., bipartite matching or shortest paths) have variables corresponding to edges, which are not jointly known by the players.
The key observation is that the dual LPs often have one \emph{constraint} per edge.
In many cases, each player can locally check the constraints corresponding to their own edges, and a violated constraint can be encoded by a short message.

As concrete examples, \cite{BlikstadBEMN22} applied this framework to a fractional vertex-cover polytope (the dual of fractional matching).
By K\H{o}nig's theorem, this polytope is empty if and only if the input graph has a perfect matching (see \Cref{subsec:BMM_cutting_plane}).
In \Cref{sec:negcycle-negSSSP}, we apply the same framework to a vertex-potential polytope (the dual of shortest paths) to obtain new communication protocols for negative cycle detection and negative-weight SSSP.

\section{Negative Cycle Detection and Negative-Weight Shortest Paths}
\label{sec:negcycle-negSSSP}

\subsection{Negative Cycle Detection}
\label{subsec:negcycle}
The problem of \emph{negative cycle detection} ($\negcycle$) is defined as follows: Alice and Bob are given a disjoint edge partition of a directed weighted graph $G = (V, E, w)$ with integer edge weights in $[-W, W]$. Their goal is to decide whether $G$ has a negative cycle.

Let $n = |V|$.
We consider the following vertex-potential polytope, whose constraints encode approximate triangle inequalities.
\begin{equation}
\label{eqn:triangle-ineq-polytope}
P^{\triangle}(G) := \left\{ x \in \mathbb{R}^n \; \middle| \;
\begin{aligned}
&x_v - x_u \le w(u, v) + \tfrac{1}{n^2} && \forall (u,v) \in E \\
&0 \le x_v \le nW && \forall v \in V
\end{aligned} \right\}.
\end{equation}

The additive $\frac{1}{n^2}$ slack will allow us to lower bound the volume of $P^{\triangle}(G)$ when it is nonempty.
At the same time, for integer
edge weights, this slack does not affect negative-cycle detection: we show that $P^{\triangle}(G)$ is empty if and only if $G$ has a negative cycle.

\begin{lemma}
\label{lem:P_neg_cycle}
Let $G = (V,E,w)$ be a directed weighted graph on $n \ge 2$ vertices with integer edge weights in $[-W, W]$.
Let $P^{\triangle}(G)$ be the polytope defined in \Cref{eqn:triangle-ineq-polytope}.
\begin{enumerate}
    \item If $G$ has a negative cycle, then $P^{\triangle}(G) = \emptyset$.
    \item If $G$ has no negative cycles, then $\vol\!\left(P^{\triangle}(G)\right) \ge n^{-2n}$.
\end{enumerate}
\end{lemma}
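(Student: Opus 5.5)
For part 1, I will sum the constraints around the cycle. Suppose $C = (v_0, v_1, \ldots, v_{\ell-1}, v_0)$ is a negative cycle with $\ell \le n$ edges, and suppose some $x \in P^{\triangle}(G)$ exists. Summing $x_{v_{j+1}} - x_{v_j} \le w(v_j, v_{j+1}) + \frac{1}{n^2}$ over the cycle edges, the left side telescopes to $0$. This gives $0 \le w(C) + \frac{\ell}{n^2} \le w(C) + \frac{1}{n}$. Since the weights are integers, $w(C) \le -1$, so the right side is at most $-1 + \frac1n < 0$, a contradiction. (We may take $C$ to be simple; if it is not, some simple subcycle is already negative.)

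For part 2, the plan is to exhibit an explicit integral feasible point $x^*$ satisfying the constraints with no slack used, and then show that a small axis-aligned cube around it stays inside the polytope.

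\textbf{Constructing $x^*$.} Add a virtual source $s$ with zero-weight edges to every vertex. Since $G$ has no negative cycles, the distances $d(v) := \operatorname{dist}(s, v)$ are well defined integers. They satisfy $d(v) - d(u) \le w(u,v)$ for every edge $(u,v)$, and $-(n-1)W \le d(v) \le 0$, because a shortest path from $s$ uses at most $n-1$ real edges of weight at least $-W$. Set $x^*_v := d(v) + (n-1)W$. Then $0 \le x^*_v \le (n-1)W \le nW - W$, and the triangle inequalities hold exactly.

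\textbf{Fitting a cube.} Take the cube $Q := x^* + [0, \frac{1}{n^2}]^n$. For $x \in Q$ and an edge $(u,v)$, we have $x_v - x_u \le x^*_v - x^*_u + \frac{1}{n^2} \le w(u,v) + \frac1{n^2}$. For the box constraints, $0 \le x_v \le (n-1)W + \frac1{n^2} \le nW$, which uses $W \ge 1$. If $W = 0$, I would instead note $x^* = 0$ and replace the box by $[0, \frac{1}{n^2}]^n$; that case is degenerate anyway, so I would assume $W \ge 1$. Hence $Q \subseteq P^{\triangle}(G)$ and $\vol(P^{\triangle}(G)) \ge n^{-2n}$.

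I expect the only delicate step to be the feasibility bookkeeping: shifting the distances so that they land in $[0, nW]$ with room to spare, and placing the cube on the correct side of $x^*$. One-sided perturbation in $[0, \frac1{n^2}]$ works because the edge slack absorbs a difference of at most $\frac1{n^2}$.
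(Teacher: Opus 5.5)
Your proof is correct and is essentially the paper's: part 1 is the same telescoping sum around a (simple) negative cycle, and in part 2 your zero-weight virtual source shifted by $(n-1)W$ gives exactly the paper's potentials $d_{G'}(s',v)$ (the paper puts weight $(n-1)W$ on the source edges directly), followed by the same cube $x^*+[0,\frac{1}{n^2}]^n$. One small correction to your $W=0$ aside: you cannot ``replace the box,'' because for $W=0$ the polytope is exactly $\{0\}$ and has volume $0$, so the lemma genuinely requires $W\ge 1$ --- the paper makes the same assumption tacitly in the step $(n-1)W+\frac{1}{n^2}\le nW$.
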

\begin{proof}
We first prove (1).
Suppose for contradiction that $P^{\triangle}(G) \ne \emptyset$, and let $x \in P^{\triangle}(G)$.
Let $(v_1,\ldots,v_k)$ be a negative cycle in $G$ where $k \le n$.
Then for each $i \in [k]$ (with $v_{k+1} = v_1$), we have $x_{v_{i+1}} - x_{v_i} \le w(v_i, v_{i+1}) + \frac{1}{n^2}$.
Since the cycle is negative and all edge weights are integers, $\sum_{i \in [k]} w(v_i, v_{i+1}) \le -1$.
Summing over $i \in [k]$, we have
$0 = \sum_{i \in [k]} (x_{v_{i+1}} - x_{v_i}) \le \sum_{i \in [k]} w(v_i, v_{i+1}) + \frac{k}{n^2} \le -1 + \frac{1}{n} < 0$, which is a contradiction.

We now prove (2).
When $G$ has no negative cycles, we construct a polytope $Q \subseteq P^{\triangle}(G)$ with volume $n^{-2n}$.
Let $G'$ be the graph obtained from $G$ by adding a new vertex $s'$ and directed edges $(s', v)$ of weight $(n-1)W$ for all $v \in V$.
Since $G'$ also has no negative cycles, shortest-path distances from $s'$ are well-defined.
Let $d(v) := d_{G'}(s',v)$ for each $v \in V$.

We note two properties of $d(\cdot)$.
First, $d(\cdot)$ satisfies the triangle inequality: $d(v) \le d(u) + w(u, v)$ for all $(u,v) \in E$.
Second, $d(v) \in [0, (n-1)W]$ for all $v \in V$.
The upper bound follows from the direct edge $(s',v)$, and for the lower bound, there is a shortest path $(s',v_1,\ldots,v_\ell = v)$ from $s'$ to $v$ with $\ell \le n$, so $d(v) = (n-1)W + \sum_{i=1}^{\ell-1} w(v_i, v_{i+1}) \ge (n-1)W + (\ell-1)(-W) \ge 0$.

Consider the $n$-dimensional cube $Q := \prod_{v \in V}\!\left[d(v), d(v) + \frac{1}{n^2}\right]$.
We can verify that $Q \subseteq P^{\triangle}(G)$.
Fix any $q = (q_v)_{v \in V} \in Q$.
We have $0 \le d(v) \le q_v \le d(v) + \frac{1}{n^2} \le (n-1)W + \frac{1}{n^2} \le nW$.
Moreover, for all $(u,v) \in E$, $q_v - q_u \le d(v) + \frac{1}{n^2} - d(u) \le w(u, v) + \frac{1}{n^2}$ by the triangle inequality for $d(\cdot)$.
Therefore, $\vol\!\left(P^{\triangle}(G)\right) \ge \vol(Q) = n^{-2n}$.
\end{proof}

We now present our protocol for $\negcycle$.
Note that when $G$ has no negative cycles, the protocol outputs a vector $p \in P^{\triangle}(G)$, which we will use in \Cref{subsec:negSSSP} to solve $\negSSSP$.

\begin{theorem}
\label{thm:P_neg_cycle}
There is a deterministic protocol $\mathcal{P}^{\negcycle}$ that solves $\negcycle$ using $O(n \log^2(nW))$ bits of communication.
Moreover, if the input graph $G$ has no negative cycles, the protocol outputs a vector $p \in P^{\triangle}(G)$, where $P^{\triangle}(G)$ is defined in \Cref{eqn:triangle-ineq-polytope}.
\end{theorem}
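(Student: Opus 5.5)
We will apply the cutting-plane framework of \Cref{alg:cutting_plane} to $P = P^{\triangle}(G)$. For the starting polytope we take the box $P_0 := [0, nW]^n$, which contains $P^{\triangle}(G)$ and has volume $(nW)^n$. The threshold is $\tau := n^{-2n}$; by \Cref{lem:P_neg_cycle}, the promise of \Cref{lem:cutting_plane_iterations} holds with this $\tau$. In the degenerate case $\tau \ge \vol(P_0)$ (e.g.\ $W=0$), the problem is trivial: there is no negative cycle, and $p=0$ lies in $P^{\triangle}(G)$. Otherwise \Cref{lem:cutting_plane_iterations} gives correctness after at most
\[
O\bigl(\log(\vol(P_0)/\tau)\bigr) = O\bigl(n\log(nW) + 2n\log n\bigr) = O(n\log(nW))
\]
oracle queries. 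In the nonempty case it also returns a point $p \in P^{\triangle}(G)$, which gives the ``moreover'' part. Translation back to $\negcycle$ is immediate: ``$P=\emptyset$'' means there is a negative cycle, and ``$\vol(P)\ge\tau$'' means there is none.

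\textbf{Simulating the oracle.} The polytopes $P_i$ are public: each is $P_0$ intersected with the cuts announced so far. Hence both players can locally compute the same center of mass $p_i$ (local computation is free), and no real numbers ever need to be sent. To answer a query at $p_i$, the box constraints are public and $P_0$ already enforces them. Each edge constraint $x_v - x_u \le w(u,v) + \frac{1}{n^2}$ is known to the player owning $(u,v)$. So Alice checks her edges and either sends one violated edge $(u,v)$ together with $w(u,v)$, or sends a single ``all satisfied'' bit; Bob then does the same. An announced edge and its weight determine the cut, which the other player can reconstruct exactly. One message costs $O(\log n + \log W) = O(\log(nW))$ bits. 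If neither player finds a violation, then $p_i \in P$ and the protocol outputs $p_i$.

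\textbf{Total cost and the main subtlety.} Multiplying the per-query cost by the number of queries gives $O(n\log(nW)) \cdot O(\log(nW)) = O(n\log^2(nW))$ bits. The only step needing care is the one just handled: the oracle response must be a short, exactly encodable message rather than an arbitrary real vector. This works here because every constraint of $P^{\triangle}(G)$ is indexed by an edge with an integer weight, so naming the edge suffices. Determinism is preserved because the center of mass is computed identically by both players from public data, and the check ``$P_{i+1}=\emptyset$'' is likewise local.
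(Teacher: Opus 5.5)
Your proposal is correct and matches the paper's proof: the same simulation of \Cref{alg:cutting_plane} with $P_0=[0,nW]^n$ and $\tau=n^{-2n}$, the same local check of edge constraints with a violated edge announced together with its weight in $O(\log(nW))$ bits, and the same bound of $O(n\log(nW))$ queries. Your separate handling of the degenerate case $W=0$ is extra care the paper leaves implicit: there $\vol(P_0)=0$, so \Cref{lem:cutting_plane_iterations} does not apply, and the volume bound of \Cref{lem:P_neg_cycle} in fact fails.
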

\begin{proof}
By \Cref{lem:P_neg_cycle}, it suffices to distinguish whether $P^{\triangle}(G) = \emptyset$ or $\vol\!\left(P^{\triangle}(G)\right) \ge n^{-2n}$.
Alice and Bob can do this by simulating \Cref{alg:cutting_plane} with $P = P^{\triangle}(G)$, $P_0 = [0,nW]^n$, and $\tau = n^{-2n}$.

Note that $P_0$ is public and independent of $G$.
Since $P_i$ is entirely determined by $P_0$ and previously communicated constraints, both players know $P_i$ and can locally compute the center of mass $p_i$ without communication.
The constraints $0 \le x_v \le nW$ are already captured by $P_0$.
To simulate the separation oracle for $P$ at $p_i$, Alice and Bob only need to check the edge constraints $x_v - x_u \le w(u, v) + \tfrac{1}{n^2}$, each of which can be checked locally by the player holding the edge $(u,v)$.
Both players send a violated constraint if they have one (or a \emph{null} message otherwise).
If both players send a violated constraint, they break ties by using Alice's.
If neither player sends a violated constraint, the oracle asserts $p_i \in P$.

By \Cref{lem:cutting_plane_iterations}, this protocol correctly distinguishes between the two promised cases after $O(\log(\vol(P_0)/\tau)) = O(\log(n^{3n}W^n)) = O(n \log(nW))$ queries to the separation oracle, and outputs a point $p \in P^{\triangle}(G)$ when $P^{\triangle}(G) \ne \emptyset$.
Each query to the separation oracle requires each player to communicate at most one constraint, and each constraint can be encoded using $O(\log(nW))$ bits (i.e., an edge $(u,v)$ and its weight).
Therefore, the total communication is $O(n \log^2(nW))$ bits.
\end{proof}

\subsection{Negative-Weight Single-Source Shortest Paths}
\label{subsec:negSSSP}

The problem of \emph{negative-weight single-source shortest paths} ($\negSSSP$) is defined as follows: Alice and Bob are given a disjoint edge partition of a directed weighted graph $G = (V, E, w)$ with integer edge weights in $[-W, W]$, and a source vertex $s\in V$.
Their goal is to either decide that $G$ has a negative cycle or compute the single-source shortest-path distances $d_G(s,v)$ for all $v\in V$.

We use a vertex-potential reweighting idea, familiar from Johnson's algorithm for all-pairs shortest paths~\cite{Johnson77}.
For a function $\phi: V \to \mathbb{R}$, let
\[
w_\phi(u,v) := w(u,v) + \phi(u) - \phi(v), \qquad \forall (u,v) \in E,
\]
and let $G_\phi := (V, E, w_\phi)$.
Along any path from $s$ to $v$, the $\phi(\cdot)$ terms telescope. Taking the minimum over all $s$-$v$ paths shows
\[
d_{G_\phi}(s,v) = d_G(s,v) + \phi(s) - \phi(v) \qquad \forall v \in V.
\]
In particular, if $w_\phi(e) \ge 0$ for every edge $e \in E$, one can compute $d_{G_\phi}(s, \cdot)$ using any nonnegative-weight shortest-path algorithm and recover $d_G(s,v)$.

In our application, setting $\phi$ to the vector $p \in P^{\triangle}(G)$ returned by \Cref{thm:P_neg_cycle} only guarantees $w_\phi(u, v) \ge -\frac{1}{n^2}$.
To obtain nonnegative edge weights, we round $w_\phi(u, v)$ up to the nearest multiple of $\frac{1}{n}$.
This rounding has only a mild effect on shortest-path distances: it increases the weight of any single edge by at most $\frac{1}{n}$ and the weight of any simple path by at most $\frac{n-1}{n} < 1$.
Since $d_G(s,v)$ is an integer, we can then run a nonnegative-weight
shortest-path algorithm and recover the original distances by taking floors at the end.\footnote{Analogous rounding arguments appear in the near-linear-time algorithm of~\cite{BernsteinNW22} for
negative-weight SSSP.}

\begin{lemma}[Reweighting with approximate potentials]
\label{lem:approx_johnson}
Let $G=(V,E,w)$ be a directed graph on $n\ge 2$ vertices with integer edge weights and no negative cycles.
Suppose $\phi:V\to\mathbb{R}$ satisfies $w(u,v) + \phi(u) - \phi(v) \ge -\frac{1}{n^2}$ for all $(u,v)\in E$.
Let $\widetilde G = (V, E, \widetilde w)$, where 
\[
\widetilde w(u,v) := \frac{1}{n} \Bigl\lceil n \bigl( w(u,v) + \phi(u) - \phi(v) \bigr) \Bigr\rceil \qquad \forall (u,v)\in E.
\]
Then, $\widetilde w(u,v)\ge 0$ for every $(u,v)\in E$.
Moreover, $G$ and $\widetilde G$ have the same reachability relation, and for
every vertex $v \in V$ reachable from $s$,
\[
d_G(s, v) = \Bigl\lfloor d_{\widetilde G}(s, v) + \phi(v) - \phi(s) \Bigr\rfloor.
\]
\end{lemma}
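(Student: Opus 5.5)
Write $w_\phi(u,v) := w(u,v)+\phi(u)-\phi(v)$ as in the preceding discussion. We prove the three claims in order: nonnegativity, reachability, and the distance formula.

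\textbf{Nonnegativity.} The hypothesis gives $n\,w_\phi(u,v) \ge -\frac{1}{n} > -1$. Hence $\lceil n\,w_\phi(u,v)\rceil \ge 0$, and so $\widetilde w(u,v)\ge 0$.

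\textbf{Reachability.} $G$ and $\widetilde G$ have the same edge set $E$; only the weights differ. So the reachability relations coincide.

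\textbf{Distance formula.} Fix $v$ reachable from $s$, and sandwich $D := d_{\widetilde G}(s,v)+\phi(v)-\phi(s)$ between $d_G(s,v)$ and $d_G(s,v)+1$.

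\emph{Lower bound.} Since $\widetilde G$ has nonnegative weights, $d_{\widetilde G}(s,v)$ is attained by some simple path $P$ from $s$ to $v$. Each edge satisfies $\widetilde w(e)\ge w_\phi(e)$. Telescoping the potentials along $P$ gives
\[
D \;\ge\; \sum_{e\in P} w(e) \;\ge\; d_G(s,v).
\]
The last inequality holds because $G$ has no negative cycles, so $d_G(s,v)$ is finite and at most the weight of any $s$-$v$ path.

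\emph{Upper bound.} Let $P^*$ be a shortest $s$-$v$ path in $G$. Since $G$ has no negative cycles, we may take $P^*$ simple, so it has at most $n-1$ edges. Rounding up to a multiple of $\frac1n$ adds strictly less than $\frac1n$ per edge, so $\widetilde w(e) < w_\phi(e)+\frac1n$. Therefore
\[
d_{\widetilde G}(s,v) \le \widetilde w(P^*) < w_\phi(P^*) + \tfrac{n-1}{n} = d_G(s,v)+\phi(s)-\phi(v)+\tfrac{n-1}{n}.
\]
This gives $D < d_G(s,v)+1$.

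\emph{Conclusion.} Combining the two bounds, $d_G(s,v)\le D<d_G(s,v)+1$. Since $d_G(s,v)$ is an integer (integer weights), $\lfloor D\rfloor = d_G(s,v)$.

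The only delicate step is the upper bound. It needs the strict inequality from rounding, together with the existence of a simple shortest path, so that the total rounding error stays below $1$.
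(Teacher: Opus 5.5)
Your proof is correct and uses essentially the same sandwich argument as the paper: you apply the rounding lower bound $\widetilde w(e)\ge w_\phi(e)$ to a shortest path in $\widetilde G$, and the rounding upper bound to a simple shortest path with at most $n-1$ edges. The only cosmetic difference is that you telescope the potentials along explicit paths in $G$, whereas the paper goes through the identity $d_{G_\phi}(s,v)=d_G(s,v)+\phi(s)-\phi(v)$ and notes that $G_\phi$ has no negative cycles; also, the strict rounding inequality is not actually needed, since $\frac{n-1}{n}<1$ already gives $D<d_G(s,v)+1$.
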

\begin{proof}
Let $w_\phi(u,v)=w(u,v)+\phi(u)-\phi(v)$ and $G_\phi = (V, E, w_\phi)$.
Since $w_\phi(u,v) \ge -\frac{1}{n^2}$, we have $n w_\phi(u,v) \ge -\frac{1}{n} > -1$.
Therefore, $\lceil n w_\phi(u,v)\rceil\ge 0$ and $\widetilde w(u,v)\ge 0$.

Because $\widetilde G$, $G_\phi$, and $G$ all share the same edge set $E$, they share the same reachability relation.

Fix any $v \in V$ reachable from $s$.
For every $e \in E$, $\widetilde w(e)$ rounds up $w_\phi(e)$ to the nearest multiple of $\frac{1}{n}$, so
\begin{equation}
\label{eqn:w_phi_w_tilde}
w_\phi(e) \le \widetilde w(e) < w_\phi(e) + \frac{1}{n}.
\end{equation}

Applying the left inequality of \Cref{eqn:w_phi_w_tilde} to a shortest $s$-$v$ path in $\widetilde G$ gives
\[
    d_{G_\phi}(s,v) \le d_{\widetilde G}(s,v).
\]
Since reweighting preserves cycle weights and $G$ has no negative cycles,
$G_\phi$ also has no negative cycles.
Therefore, there is a shortest simple $s$-$v$ path in $G_\phi$ with at most $n-1$ edges.
Applying the right inequality of \Cref{eqn:w_phi_w_tilde} to this path gives
\[
d_{\widetilde G}(s, v) < d_{G_\phi}(s, v) + \frac{n-1}{n} < d_{G_\phi}(s, v) + 1.
\]

Combining these inequalities with $d_{G_\phi}(s, v) = d_G(s, v) + \phi(s) - \phi(v)$, we have
\[
d_G(s, v) \le d_{\widetilde G}(s, v) + \phi(v) - \phi(s) < d_G(s, v)+1.
\]
Since $d_G(s,v)$ is an integer, taking the floor proves the lemma.
\end{proof}

We conclude this section with a deterministic protocol for $\negSSSP$.
\begin{theorem}
\label{thm:negSSSP}
There is a deterministic protocol $\mathcal{P}^{\negSSSP}$ that solves $\negSSSP$ using $O(n \log^2(nW))$ bits of communication.
\end{theorem}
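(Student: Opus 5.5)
The protocol has three steps. First, Alice and Bob run $\mathcal{P}^{\negcycle}$ from \Cref{thm:P_neg_cycle}, which costs $O(n\log^2(nW))$ bits. If it reports a negative cycle, they output that and stop. Otherwise both players hold the same public vector $p \in P^{\triangle}(G)$. We set $\phi := p$. The edge constraints of $P^{\triangle}(G)$ give $w(u,v) + \phi(u) - \phi(v) \ge -\frac{1}{n^2}$ for every $(u,v)\in E$, which is exactly the hypothesis of \Cref{lem:approx_johnson}.

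Second, each player reweights its own edges locally, with no communication, to $\widetilde w(u,v) = \frac{1}{n}\lceil n(w(u,v)+\phi(u)-\phi(v))\rceil$. Together these define $\widetilde G$, which has nonnegative weights by \Cref{lem:approx_johnson}. The players then simulate Dijkstra's algorithm on $\widetilde G$ from $s$, as in \Cref{subsec:dijkstra}. In each of the $n$ rounds, each player sends its best tentative candidate: the unsettled vertex $v$ minimizing $d(u)+\widetilde w(u,v)$ over its own edges $(u,v)$ leaving the settled set, together with that value. They settle the smaller of the two candidates, and both players update their public distance tables. Unreachable vertices show up when neither player has a candidate; they get distance $\infty$, which is correct because $G$ and $\widetilde G$ have the same reachability. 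Finally, both players output $d_G(s,v) = \lfloor d_{\widetilde G}(s,v) + \phi(v) - \phi(s) \rfloor$ for each reachable $v$, which is correct by \Cref{lem:approx_johnson}.

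The step I expect to be the main obstacle is the bit complexity of the Dijkstra simulation. It must stay within $O(n\log(nW))$ bits per run, or at least within $O(n\log^2(nW))$ overall. The issue is that $\widetilde w$ depends on $\phi$, whose coordinates are centers of mass and could in principle be long rationals. I avoid sending any weights. Since $d_{\widetilde G}(s,v)+\phi(v)-\phi(s)$ has the form (integer) $+ \frac{1}{n}\cdot$(integer) $+\phi(v)-\phi(s)$, a candidate value can be sent as the integer $n\cdot d_{\widetilde G}(s,v) - n\phi(v)+n\phi(s)$. Equivalently, the players send the offset of $d_{\widetilde G}$ relative to the public quantity $\phi(s)-\phi(v)$, which is a multiple of $\frac{1}{n}$. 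Its magnitude is bounded by $n\cdot O(n^2 W)$ using $\phi\in[0,nW]^n$ and $|w|\le W$, so it needs $O(\log(nW))$ bits.

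A simpler alternative is to have each player send only the vertex identity and the predecessor edge, $O(\log n)$ bits. The receiver can recompute the candidate distance locally from public data, namely the settled distances and $\phi$, plus the revealed edge's weight, which costs $O(\log W)$ more bits. Either way a round costs $O(\log(nW))$ bits, so Dijkstra costs $O(n\log(nW))$ bits. The total communication is therefore dominated by $\mathcal{P}^{\negcycle}$ and is $O(n\log^2(nW))$ bits.
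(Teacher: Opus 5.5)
Your proposal is correct and follows the paper's proof essentially step for step: obtain $p \in P^{\triangle}(G)$ from $\mathcal{P}^{\negcycle}$, reweight each edge locally to $\widetilde w$, run a Dijkstra simulation on $\widetilde G$, and output $\lfloor d_{\widetilde G}(s,v)+\phi(v)-\phi(s)\rfloor$ via \Cref{lem:approx_johnson}, with the total cost dominated by the $O(n\log^2(nW))$ cutting-plane phase. The only difference is how you encode the Dijkstra messages. The paper notes that $n\widetilde w$ is a nonnegative integer of size $O(n^2W)$ (using $p\in[0,nW]^n$) and calls $\mathcal{P}^{\mathsf{SSSP}}$ as a black box; this is the clean form of your first encoding, where the integer to send is simply $n$ times the tentative $\widetilde G$-distance, whereas the expression $n\,d_{\widetilde G}(s,v)-n\phi(v)+n\phi(s)$ you wrote is not an integer in general because $\phi$ is arbitrary. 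Your second encoding, which sends the predecessor edge and its original weight so the receiver recomputes the candidate, is also valid and does not even need the bound on $\phi$.
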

\begin{proof}
The protocol proceeds as follows.
\begin{enumerate}
    \item Run $\mathcal{P}^{\negcycle}$ (\Cref{thm:P_neg_cycle}) on $G$. If a negative cycle is detected, terminate. Otherwise, both players obtain the same point $p \in P^\triangle(G)$.
    \item Let $\phi(v) := p_v$ for all $v \in V$.
          Let $\widetilde{G} = (V, E, \widetilde{w})$ where $\widetilde w(u,v) := \frac{1}{n} \left\lceil n \left( w(u,v) + \phi(u) - \phi(v) \right) \right\rceil$, as in \Cref{lem:approx_johnson}.
          Since both players know $p$, the edge weight $\widetilde{w}(u,v)$ can be computed locally by the player holding the edge $(u,v)$.
    \item Run $\mathcal{P}^{\mathsf{SSSP}}$ (\Cref{thm:SSSP}) to compute $d_{\widetilde{G}}(s,v)$ for all $v \in V$.
          Formally, $\mathcal{P}^{\mathsf{SSSP}}$ requires integer weights, so the players scale up $\widetilde w$ by a factor of $n$, run $\mathcal{P}^{\mathsf{SSSP}}$ on $(V, E, n \widetilde{w})$ to compute $n \cdot d_{\widetilde{G}}(s,v)$ for all $v \in V$, and then divide by $n$ to get $d_{\widetilde{G}}(s,v)$.
    \item Output $d_G(s,v) := \lfloor d_{\widetilde{G}}(s,v) + \phi(v) - \phi(s) \rfloor$ for every $v$ reachable from $s$, and $d_G(s,v) := \infty$ otherwise.
\end{enumerate}

For correctness, the constraints of $P^\triangle(G)$ as defined in \Cref{eqn:triangle-ineq-polytope} guarantee that $p_v - p_u \le w(u, v) + \frac{1}{n^2}$, or equivalently, $w_\phi(u, v) \ge -\frac{1}{n^2}$, so $\phi$ satisfies the prerequisite of \Cref{lem:approx_johnson}.
By \Cref{lem:approx_johnson}, the edge weights $\widetilde w(u, v)$ are nonnegative and multiples of $\frac{1}{n}$, which become nonnegative integers after scaling up by a factor of $n$.
By \Cref{thm:SSSP}, the players correctly compute $d_{\widetilde{G}}(s, v)$ via $\mathcal{P}^{\mathsf{SSSP}}$.
The correctness of $\mathcal{P}^{\negSSSP}$ then follows from \Cref{lem:approx_johnson}.

It remains to analyze the communication cost of the protocol.
In Step 1, $\mathcal{P}^{\negcycle}$ requires $O(n \log^2(nW))$ bits of communication by \Cref{thm:P_neg_cycle}.
In Step 3, since the constraints of $P^\triangle(G)$ guarantee that $0 \le p_v \le nW$ for all $v \in V$, we have $w_\phi(u,v) = w(u, v) + p_u - p_v \le (n+1)W$, and hence the scaled weights $n\widetilde{w}(u,v)$ are nonnegative integers that are $O(n^2 W)$, so $\mathcal{P}^{\mathsf{SSSP}}$ requires $O(n \log(nW))$ bits by \Cref{thm:SSSP}.
The total communication cost is $O(n \log^2(nW))$ bits.
\end{proof}

\section{Maximum Matching in Bipartite Graphs}
\label{sec:bipartite_matching}

In this section, we first restate the $O(n \log^2 n)$-bit protocol in \cite{BlikstadBEMN22} for bipartite matching, and then we present our combinatorial protocol (resulted from discretizing the said work) in \Cref{subsec:comb_BMM}.

\subsection{A Near-Linear Communication Protocol for Bipartite Matching}
\label{subsec:BMM_cutting_plane}

In the problem of \emph{bipartite perfect matching} ($\mathsf{BPM}$), Alice and Bob are given a disjoint edge partition of an undirected bipartite graph $G = (V, E)$. Their goal is to decide whether $G$ admits a perfect matching and output one if it does.
We assume without loss of generality that $n = |V|$ is even and that each side of $G$ has exactly $\frac{n}{2}$ vertices.
The breakthrough result of \cite{BlikstadBEMN22} gave a protocol for this problem with near-linear communication complexity.

\begin{theorem}[\cite{BlikstadBEMN22}]
There is a deterministic protocol $\mathcal{P}^{\mathsf{BPM}}$ which solves $\mathsf{BPM}$ using $O(n \log^2 n)$ bits of communication.
\label{thm:BPM}
\end{theorem}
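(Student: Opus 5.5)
We prove \Cref{thm:BPM} with the cutting-plane recipe of \Cref{sec:cutting_plane}, applied to a slackened fractional vertex-cover polytope. Let $L,R$ be the two sides, each of size $\frac{n}{2}$. Define
\[
P^{\mathsf{VC}}(G) := \Bigl\{ x \in \mathbb{R}^V \;\Bigm|\; x_u + x_v \ge 1 - \tfrac{1}{n^2} \ \ \forall (u,v)\in E,\ \ 0 \le x_v \le 1 \ \ \forall v,\ \ \textstyle\sum_{v} x_v \le \frac{n}{2} - \frac12 \Bigr\}.
\]
The claim to establish is an analogue of \Cref{lem:P_neg_cycle}.

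\textbf{(i)} If $G$ has a perfect matching $M$, then $P^{\mathsf{VC}}(G)=\emptyset$. Summing the edge constraints over the $\frac{n}{2}$ edges of $M$, which are disjoint and cover $V$, gives $\sum_v x_v \ge \frac{n}{2}(1-\frac{1}{n^2}) > \frac{n}{2}-\frac12$, a contradiction.

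\textbf{(ii)} If $G$ has no perfect matching, then by K\H{o}nig's theorem there is an integral vertex cover $S$ with $|S| \le \frac{n}{2}-1$. Take the box $x_v\in[1-\frac{1}{2n^2},\,1-\frac{1}{4n^2}]$ for $v\in S$ and $x_v\in[0,\frac{1}{4n^2}]$ otherwise. Every edge has an endpoint in $S$, so its constraint holds. The sum is at most $\frac{n}{2}-1+\frac{n}{4n^2}\le \frac{n}{2}-\frac12$. Hence the volume is at least $(4n^2)^{-n}=n^{-O(n)}$.

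The protocol then runs \Cref{alg:cutting_plane} with $P_0=[0,1]^V$ and $\tau=(4n^2)^{-n}$. The sum constraint is public and can be added to $P_0$; this changes $\vol(P_0)$ only by a factor of at most $n^{O(n)}$, or one can simply treat it as a publicly checkable constraint in the oracle. The separation oracle is simulated exactly as in \Cref{thm:P_neg_cycle}: each player checks the edge constraints for their own edges at the public point $p_i$ and sends a violated edge using $O(\log n)$ bits. By \Cref{lem:cutting_plane_iterations}, this uses $O(\log(\vol(P_0)/\tau))=O(n\log n)$ queries, so $O(n\log^2 n)$ bits in total. The outcome decides whether a perfect matching exists.

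\textbf{Main obstacle: outputting the perfect matching.} The feasibility test only decides existence, so a search-to-decision step is needed that keeps communication near-linear. Running the decision procedure once per edge would be far too costly. I would instead use a self-reduction that fixes the matching one vertex at a time, with binary search over edges. Fix a vertex $u\in L$. For a subset $F$ of $u$'s incident edges, let $G_F$ denote $G$ with $u$'s incident edges restricted to $F$. Binary-search for a single edge $(u,v)$ such that $G_F$ still has a perfect matching. Each step halves the candidate set; both players can agree on the halving by exchanging degree counts. After fixing $(u,v)$, delete $u$ and $v$ and recurse.

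The difficulty is that this costs $O(\log n)$ decision calls per vertex, for $\widetilde O(n^2)$ total, which is too much. The fix I would try first is to reuse the cutting-plane state rather than restart, and to reduce to the decision version by a different route. Use the dual slack to show that a near-feasible center point, obtained after $O(n\log n)$ iterations, yields a fractional matching of the right size via the returned separating constraints: the cuts collected during the run form a certificate of infeasibility whose multipliers define a fractional perfect matching supported on the $O(n\log n)$ revealed edges. Once that holds, both players know a subgraph with $O(n\log n)$ edges that contains a perfect matching, and they can compute one locally for free.

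Proving that the revealed cut edges support a perfect matching is the step I expect to be hardest. I would argue it by LP duality: the polytope cut out by $P_0$ together with only the revealed constraints is already empty, or has volume below $\tau$ and therefore contains no box of the form in (ii). Applying the contrapositive of (ii) to the revealed subgraph $H$ then gives that $H$ has no vertex cover of size $\le\frac{n}{2}-1$, so by K\H{o}nig's theorem $H$ has a perfect matching.
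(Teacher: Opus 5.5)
Your proposal is correct and is essentially the paper's proof: cutting planes on the fractional vertex-cover polytope, with each player reporting a violated edge constraint, give $O(n\log n)$ queries of $O(\log n)$ bits each. Your final paragraph's argument is exactly how the paper outputs the matching: the polytope cut out by $P_0$ and the revealed edges is empty or has volume below $\tau$, so by the contrapositive of (ii) the revealed subgraph $H$ has a perfect matching, which the players then compute locally. The self-reduction and Farkas-multiplier detours are unnecessary, as is your $\frac{1}{n^2}$ slack on the edge constraints; the paper keeps $x_u+x_v\ge 1$ and relies only on the $\frac{n}{2}-\frac{1}{2}$ slack in the sum constraint, which gives $\tau=(4n)^{-n}$.
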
 

We briefly describe the protocol from \cite{BlikstadBEMN22}, before sketching a proof for \Cref{thm:BPM}.

Consider the following vertex-cover polytope of $G$:
\begin{equation}
\label{eqn:fractional_VC_polytope}
P^{\textnormal{VC}}(G) :=
\left\{ x \in \mathbb{R}^n \; \middle| \;
\begin{aligned}
\, & x_u + x_v \ge 1 && \forall (u, v) \in E, \\
\, & \textstyle\sum_{v\in V} x_v \le \frac{n}{2} - \frac{1}{2}, \\
\, & 0\le x_v \le 1 && \forall v \in V
\end{aligned}
\right\}.
\end{equation}

The following lemma is from~\cite{BlikstadBEMN22}.
For completeness, we provide its proof in \Cref{sec:bipartite_matching_proofs}.

\begin{restatable}[\cite{BlikstadBEMN22}]{lem}{volumegap}
Let $G$ be a bipartite graph with $\frac{n}{2}$ vertices on each side.
Let $P^{\textnormal{VC}}(G)$ be the polytope defined in \Cref{eqn:fractional_VC_polytope}.
\begin{enumerate}
    \item If $G$ has a perfect matching, then $P^{\textnormal{VC}}(G) = \emptyset$. 
    \item If $G$ has no perfect matching, then $\vol(P^{\textnormal{VC}}(G)) \ge (4n)^{-n}$. 
\end{enumerate}
\label{lem:P_vertex}
\end{restatable}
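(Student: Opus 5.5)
For part (1), I will use LP weak duality, argued directly. Let $M$ be a perfect matching of $G$, so $|M| = \frac{n}{2}$ and every vertex is covered exactly once. Suppose $x \in P^{\textnormal{VC}}(G)$. Summing the edge constraints $x_u + x_v \ge 1$ over the edges $(u,v) \in M$ gives
\[
\textstyle\sum_{v \in V} x_v = \sum_{(u,v)\in M}(x_u+x_v) \ge \frac{n}{2}.
\]
This contradicts $\sum_v x_v \le \frac{n}{2}-\frac{1}{2}$, so $P^{\textnormal{VC}}(G) = \emptyset$.

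For part (2), I will mimic the proof of \Cref{lem:P_neg_cycle}(2): exhibit an integral feasible-ish point and place a small box of feasible points around it.

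By K\H{o}nig's theorem, if $G$ has no perfect matching then it has a vertex cover $C$ with $|C| \le \frac{n}{2}-1$. Let $\chi \in \{0,1\}^n$ be its indicator vector. The naive box $\chi + [0,\epsilon]^n$ fails, because coordinates already equal to $1$ cannot increase. So I will shift inward first. Define
\[
y_v := 1-\epsilon \ \text{ for } v\in C, \qquad y_v := 0 \ \text{ for } v \notin C,
\]
for a small $\epsilon$, and consider the box
\[
Q := \prod_{v}[y_v, y_v+\epsilon'].
\]
The feasibility checks are as follows. For any edge $(u,v)$, at least one endpoint, say $u$, lies in $C$. Then any $q \in Q$ satisfies $q_u + q_v \ge 1-\epsilon + 0$, which is short of $1$.

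This shortfall is the main obstacle, and I will fix it by raising the non-cover coordinates as well. Set $y_v := \epsilon$ for $v \notin C$ and $y_v := 1-\epsilon$ for $v \in C$, and take the box $[y_v, y_v+\epsilon]$ in every coordinate. Then every $q\in Q$ satisfies the following.
\begin{itemize}
\item $q_u + q_v \ge (1-\epsilon) + \epsilon = 1$ for every edge, since each edge has an endpoint in $C$ and its other endpoint has $q \ge \epsilon$ whether or not it lies in $C$.
\item $0 \le q_v \le 1$.
\item $\sum_v q_v \le |C| + 2\epsilon(n-|C|) \le \frac{n}{2}-1 + 2\epsilon n$. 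This is at most $\frac{n}{2}-\frac12$ provided $\epsilon \le \frac{1}{4n}$.
\end{itemize}

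Choosing $\epsilon = \frac{1}{4n}$ makes $Q \subseteq P^{\textnormal{VC}}(G)$, so
\[
\vol(P^{\textnormal{VC}}(G)) \ge \vol(Q) = \epsilon^n = (4n)^{-n}.
\]
The only step needing care is this two-sided shift, so that both the edge constraints and the box constraints hold throughout $Q$; the budget slack of $\frac12$ is exactly what absorbs the $2\epsilon n$ increase.
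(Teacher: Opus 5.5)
Your proof is correct and matches the paper's argument. With $\epsilon=\frac{1}{4n}$, your final box is exactly the paper's box $B$ (intervals $[\frac{1}{4n},\frac{1}{2n}]$ off the cover and $[1-\frac{1}{4n},1]$ on it). Your part (1) is the same LP-duality fact the paper invokes, made explicit and self-contained by summing the edge constraints over the perfect matching (you can drop the two abandoned box attempts from the write-up).
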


\begin{proof}[Proof sketch of \Cref{thm:BPM}]
By \Cref{lem:P_vertex}, it is sufficient to decide whether $P^{\textnormal{VC}}(G) = \emptyset$ or $\vol(P^{\textnormal{VC}}(G)) \ge (4n)^{-n}$.
Alice and Bob can do this by simulating \Cref{alg:cutting_plane} with $P = P^{\textnormal{VC}}(G)$, $P_0 = [0,1]^n \cap \{x: \sum_v x_v \le \frac{n}{2} - \frac{1}{2} \}$, and $\tau = (4n)^{-n}$.

Both players know $P_0$ and can compute the center of mass of each $P_i$ without communication.
To simulate the separation oracle for $P$, Alice and Bob only need to check the constraints $x_u + x_v \ge 1$, each of which can be checked locally by the player holding the edge $(u,v)$.
By \Cref{lem:cutting_plane_iterations}, this protocol correctly distinguishes the two promised cases after $O(\log(\vol(P_0)/\tau)) = O(n \log n)$ queries to the separation oracle.
Each query requires each player to communicate at most one constraint, and each constraint can be encoded using $O(\log n)$ bits (i.e., an edge $(u, v)$).
Therefore, the total communication is $O(n \log^2 n)$ bits.

Let $H$ be the graph formed by the communicated edges.
If \Cref{alg:cutting_plane} returns $P^{\textnormal{VC}}(G) = \emptyset$ because $\vol\!\left(P^{\textnormal{VC}}(H)\right) < (4n)^{-n}$, then by \Cref{lem:P_vertex}, $H$ must have a perfect matching.
\end{proof}

\subsection{A Combinatorial Protocol for Bipartite Perfect Matching}
\label{subsec:comb_BMM}
The cutting-plane framework (\Cref{alg:cutting_plane}) is inherently continuous: it operates over a polytope, tracks progress using its volume, and computes the center of mass to find a violated constraint that eliminates a constant fraction of the volume.
In this section, we present a combinatorial protocol for $\mathsf{BPM}$ that avoids these continuous operations.

\begin{theorem}
\label{thm:BPM_comb}
There is a deterministic protocol $\mathcal{P}^{\mathsf{BPM}}$ that does not rely on the cutting-plane framework (\Cref{alg:cutting_plane}) and solves $\mathsf{BPM}$ using $O(n \log^2 n)$ bits of communication.
\end{theorem}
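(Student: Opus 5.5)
We will discretize the vertex-cover polytope $P^{\textnormal{VC}}(G)$ and mirror the greedy argument of Phase 1 in \Cref{alg:PM_phase1}. Fix a grid resolution $\varepsilon = \frac{1}{N}$ with $N = \Theta(n^2)$, and let $\mathcal{X}$ be the set of points $x \in \{0, \varepsilon, 2\varepsilon, \ldots, 1\}^n$ with $\sum_v x_v \le \frac{n}{2} - \frac{1}{2}$. For a public graph $H$ of revealed edges, let $\mathcal{X}(H)$ be the grid points satisfying $x_u + x_v \ge 1$ for every $(u,v) \in E(H)$; this is the discrete "volume". Start with $H = (V, \emptyset)$. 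In each round, each player computes locally, over his or her unrevealed edges, the edge whose constraint is violated by the largest number of points in $\mathcal{X}(H)$. Each player sends that edge and its count, the edge with the larger count is added to $H$, and the round uses $O(\log n)$ bits. The protocol stops when $\mathcal{X}(H) = \emptyset$ or no remaining edge eliminates any point. We will show that $O(n\log n)$ rounds suffice, for $O(n\log^2 n)$ bits in total.

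\textbf{Correctness.} If the protocol stops because no edge of $G$ eliminates a point of $\mathcal{X}(H)$, then some grid point is a fractional vertex cover of $G$ of total value below $\frac n2$. By LP duality (as in \Cref{lem:P_vertex}(1)), $G$ then has no perfect matching. If instead $\mathcal{X}(H)$ becomes empty, we need $H$ to have a perfect matching. We prove the contrapositive, a discrete analogue of \Cref{lem:P_vertex}(2): if $H$ has no perfect matching, then $\mathcal{X}(H)$ is nonempty, and in fact $|\mathcal{X}(H)| \ge 1$ robustly. By K\H{o}nig's theorem, $H$ has an integral vertex cover $c$ with $|c| \le \frac n2 - 1$. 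We perturb $c$ on the grid: each covered vertex takes a value in $[1-\frac{1}{4n}, 1]$ and each uncovered vertex takes a value in $[0, \frac{1}{4n}]$, with the two ranges coordinated so that every edge constraint still holds. For example, we can use $x_v = 1 - \delta_v$ on cover vertices and $x_v = \delta'_v$ elsewhere, subject to $\delta'_u \ge \delta_v$ on edges. The simplest choice is to put all cover vertices exactly at $1$ and the others in $[0, \frac1{2n}]$. This yields at least $(N/(2n))^{n/2}$ or so grid points, and the sum stays at most $\frac n2 - 1 + \frac12 \le \frac n2 - \frac12$. In particular the set is nonempty, and the final output is a perfect matching of $H$, which can be computed locally.

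\textbf{Progress (main obstacle).} We need a per-round shrinkage: whenever $G$ has a perfect matching $M$, some edge of $M$ eliminates at least a $\frac{1}{\mathrm{poly}(n)}$, ideally $\Omega(1/n)$, fraction of $\mathcal{X}(H)$. Naively, summing $x_u + x_v$ over the $\frac n2$ edges of $M$ gives $\sum_v x_v \le \frac n2 - \frac12$, so every point violates some edge of $M$. By averaging, some edge of $M$ eliminates at least a $\frac{2}{n}$ fraction of $\mathcal{X}(H)$, exactly as in \Cref{lem:tutte_berge_progress}. This gives $|\mathcal{X}(H_{i+1})| \le (1 - \frac 2n)|\mathcal{X}(H_i)|$. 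Since $|\mathcal{X}| \le (N+1)^n = n^{O(n)}$, after $O(n \cdot n\log n)$ rounds the set is empty, which is too many rounds by a factor of $n$.

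To recover the $O(n\log n)$ bound, we will imitate the centre-of-gravity guarantee. Let $\bar x$ be the grid-average of $\mathcal{X}(H)$. Since $\sum_v \bar x_v \le \frac n2 - \frac12$, some edge $(u,v) \in M$ has $\bar x_u + \bar x_v < 1$. The halfspace $x_u + x_v < 1$ then contains the centroid of the discrete set. We need a discrete Grünbaum inequality: for the lattice points of the convex body $\mathrm{conv}$-restricted set $\mathcal{X}(H)$, any halfspace through the centroid contains at least a constant fraction (say $\frac{1}{3e}$) of the points. The plan is to prove this by comparing $|\mathcal{X}(H)|$ with the volume of $\mathcal{X}(H) + [0,\varepsilon]^n$ and applying the continuous Grünbaum theorem. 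This works up to an $e^{O(n\varepsilon\cdot\mathrm{poly})}$ distortion, which is a constant if $N$ is a large enough polynomial, provided the body is not too thin. That is exactly why we keep the robust $(N/2n)^{n/2}$-style lower bound. We would maintain that as long as $H$ has no perfect matching, the surviving set contains a full-dimensional grid cube of side $\Omega(1/n)$. We would also use the fact that edge constraints are $\{0,\pm1\}$-sums, so slightly shifting a halfspace changes the count by a controlled factor. This counting comparison is the delicate step. If it goes through, each round removes a constant fraction, $O(n\log n)$ rounds suffice, and the total communication is $O(n\log^2 n)$ bits.
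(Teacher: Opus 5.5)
Your plan is essentially the paper's: grid points of $P^{\textnormal{VC}}(H)$ serve as a volume proxy, the K\H{o}nig box gives robustness, and Gr\"unbaum gives a constant-factor drop. The correctness part is fine. However, the step that carries the whole theorem is left conditional (``if it goes through''), so this is not yet a proof.

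\textbf{The missing lemma.} What you need is a lattice-count versus volume comparison. If $P\subseteq\mathbb{R}^n$ has a point $p$ at distance at least $\beta$ from every facet hyperplane and $N\ge\beta^{-1}n^{6}$, then $N^{-n}\,|P\cap(\mathbb{Z}/N)^n|$ equals $\vol(P)$ up to a factor $1\pm n^{-3}$. The paper proves this by sandwiching the union of grid cubes between $p+(1-\eta)(P-p)$ and $p+(1+\eta)(P-p)$, with $\eta=n/(\beta N)$. A point of a grid cube lying outside $P$ is within $n/N$ of a grid point of $P$, so by similar triangles through $p$ it lies in the $(1+\eta)$-dilate. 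Conversely, a point at distance at least $\eta\beta$ from all facet hyperplanes has its nearest grid point inside $P$. Here $\beta=\frac{1}{8n}$, witnessed by the full-dimensional box with cover coordinates in $[1-\frac{1}{4n},1]$ and the rest in $[\frac{1}{4n},\frac{1}{2n}]$. Your ``simplest choice'' pins cover vertices at $1$, which gives a lower-dimensional set and cannot serve as this witness. Also, $N=\Theta(n^2)$ is too coarse: with $\beta=\Theta(1/n)$ the sandwich only gives distortion $(1+\Theta(n^{3/2}/N))^n=e^{\Theta(\sqrt n)}$. The paper takes $d=8n^7$, which costs nothing since $\log N=O(\log n)$.

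\textbf{Cutting at the grid average is an unnecessary detour.} The set $\mathcal{X}(H)+[0,\frac1N]^n$ is a union of cubes, not a convex body, and its centroid $\bar x+\frac{1}{2N}\mathbf{1}$ is not the centroid of $P^{\textnormal{VC}}(H)$. So Gr\"unbaum does not apply directly, and you would also need a displacement bound in direction $e_u+e_v$ plus a robust Gr\"unbaum inequality. None of this is needed. Use the continuous center of mass $z$ of $P^{\textnormal{VC}}(H)$ only to \emph{select} the edge:
\begin{itemize}
\item Since $\sum_v z_v\le\frac n2-\frac12$, some $f=(a,b)\in M$ has $z_a+z_b<1$, hence $f\notin H$.
\item Gr\"unbaum gives $\vol(P^{\textnormal{VC}}(H\cup\{f\}))\le(1-\frac1e)\vol(P^{\textnormal{VC}}(H))$.
\item Apply the count--volume lemma to both polytopes. $P^{\textnormal{VC}}(H\cup\{f\})$ is either empty or again $\frac{1}{8n}$-balanced, because nonemptiness forces $H\cup\{f\}$ to have no perfect matching.
\end{itemize}
This gives $|\mathcal{X}(H\cup\{f\})|\le\frac{1+n^{-3}}{1-n^{-3}}(1-\frac1e)\,|\mathcal{X}(H)|\le0.9\,|\mathcal{X}(H)|$, and your greedy edge does at least as well.

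\textbf{Two small protocol repairs.} First, do not transmit counts. They can be as large as $(N+1)^n$, i.e.\ $\Theta(n\log N)$ bits, and both players can recompute them from the revealed edges. Second, cap the number of rounds at $T=O(n\log N)$ and output ``no perfect matching'' if the cap is reached, or use the paper's threshold rule. This is needed because your shrinkage bound only holds when $G$ has a perfect matching.
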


Our idea is to discretize the vertex-cover polytope $P^{\textnormal{VC}}(G)$.
Consider a set of fractional vertex covers $Q^{\textnormal{VC}}(G)$ defined as follows:
\begin{equation}
\label{eqn:fractional_VC_lattice}
Q^{\textnormal{VC}}(G) :=
\left\{ x \in \mathbb{R}^n \; \middle| \;
\begin{aligned}
\, & x_u + x_v \ge 1 && \forall (u, v) \in E, \\
\, & \textstyle\sum_{v\in V} x_v \le \frac{n}{2} - \frac{1}{2}, \\
\, & x_v \in \left\{0, \tfrac{1}{d}, \ldots, 1\right\} && \forall v \in V
\end{aligned}
\right\}.
\end{equation}

We say that a nonnegative vector in $\mathbb{R}^n$ is \emph{$d$-uniform} if each of its coordinates is a multiple of $\frac{1}{d}$.
With this notation, $Q^{\textnormal{VC}}(G)$ is the set of $d$-uniform fractional vertex covers of $G$ of size at most $\frac{n}{2}-\frac{1}{2}$.
For sufficiently large $d = \mathrm{poly}(n)$, the discrete quantity $|Q^{\textnormal{VC}}(H)|$ is essentially proportional to $\vol(P^{\textnormal{VC}}(H))$ throughout the protocol.
As the players communicate edges and add them to a shared public graph, we can use this discrete quantity as a proxy for volume to measure progress.

We formally state our protocol for $\mathsf{BPM}$ in \Cref{alg:discrete_BMM}.

\begin{algorithm}[H]
\caption{A combinatorial protocol for bipartite matching}
\label{alg:discrete_BMM}
\KwIn{A bipartite graph $G = (V,E)$ with $n \ge 2$ vertices.}
\KwOut{Decide whether $G$ has a perfect matching, and output one if it does.}
$d \gets 8n^7$\;
$H_0 \gets (V,\emptyset)$\;
$T \gets \left\lceil 10 \, n \ln(d+1) \right\rceil$\;
\For{$i = 0, \ldots, T-1$}{
    Find an edge $e_i \in (G \setminus H_i)$ such that $|Q^{\textnormal{VC}}(H_{i} \cup \{e_i\})| \le 0.9 \cdot |Q^{\textnormal{VC}}(H_{i})|$, where $Q^{\textnormal{VC}}(H)$ is the set of $d$-uniform fractional vertex covers of $H$ as defined in \Cref{eqn:fractional_VC_lattice}\;
    \If{there is no such $e_i$}{
        \Return that $G$ has no perfect matching\; 
        \label{line:BMM-no-edge}
    }
    $H_{i+1} \gets H_{i} \cup \{e_i\}$\;
    \If{$Q^{\textnormal{VC}}(H_{i+1}) = \emptyset$}{
        \Return that $G$ has a perfect matching, together with a perfect matching of $H_{i+1}$\; \label{line:BMM-empty-cover}
    }
}
\end{algorithm}

\begin{restatable}[Existence of a progress-making edge]{lem}{discreteBMMprogress}
\label{lem:discrete_BMM_progress}
Let $G$ be a bipartite graph with a perfect matching and $\frac{n}{2}$ vertices on each side.
Let $H$ be any subgraph of $G$ on the same vertices with $Q^{\textnormal{VC}}(H) \ne \emptyset$, where $Q^{\textnormal{VC}}(H)$ is defined in \Cref{eqn:fractional_VC_lattice} for $d=8n^7$.
Then, there is an edge $e \in (G \setminus H)$ such that
\[
\left|Q^{\textnormal{VC}}(H \cup \{e\})\right| \le 0.9 \cdot \left|Q^{\textnormal{VC}}(H)\right|.
\]
\end{restatable}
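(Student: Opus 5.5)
Let $Q := Q^{\textnormal{VC}}(H)$, which is nonempty by hypothesis. We want an edge $e=(u,v)\in M\setminus H$, where $M$ is a perfect matching of $G$, such that at least a $0.1$ fraction of $Q$ violates $x_u+x_v\ge 1$. The approach mirrors the center-of-mass argument of \Cref{lem:cutting_plane_iterations}, carried out on the grid: take the discrete centroid $c := \frac{1}{|Q|}\sum_{x\in Q} x$ and show that some matching edge is violated by $c$ with slack. Once that holds, a discrete Grünbaum-type inequality should give that the halfspace $\{x_u+x_v\ge 1\}$ retains at most a $0.9$ fraction of the grid points.

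\emph{Step 1 (a violated edge at the centroid).} Each $x\in Q$ has $\sum_v x_v\le \frac n2-\frac12$, so $\sum_v c_v\le \frac n2-\frac12$. Summing over the $\frac n2$ edges of $M$ gives $\sum_{(u,v)\in M}(c_u+c_v)\le \frac n2-\frac12$. Hence some $(u,v)\in M$ has $c_u+c_v\le 1-\frac1n$. This edge is not in $H$, since every $x\in Q$ satisfies the constraints of $H$, and so $c$ does too. So the centroid violates the constraint of $e$ by a margin of at least $\frac1n$.

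\emph{Step 2 (from discrete to continuous).} Place a cube of side $\frac1d$ at each grid point of $Q$ and let $K$ be the union of these cubes. Then $\vol(K)=|Q|d^{-n}$, and the centroid of $K$ is $c$ shifted by $\frac{1}{2d}\mathbf 1$. The set $K$ is not convex, so I would instead compare $Q$ with convex polytopes. Let $\widehat P$ be the relaxation of $P^{\textnormal{VC}}(H)$ with the edge constraints loosened to $x_u+x_v\ge 1-\frac{2}{d}$ and the size constraint loosened by $\frac nd$. Let $P^-$ be the polytope tightened by the same amounts. Then the cubes around points of $Q$ lie in $\widehat P$, and every grid cube meeting $P^-$ lies in $Q$'s cubes. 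This gives
\[
d^{-n}|Q|\in[\vol(P^-),\vol(\widehat P)].
\]
Both polytopes contain a point of $P^{\textnormal{VC}}(H)$, whose volume is at least $(4n)^{-n}$ by (the proof of) \Cref{lem:P_vertex} applied to $H$. So a homothety argument (scaling about an interior point by a factor $1\pm O(n^2/d)$) shows that $\vol(\widehat P)/\vol(P^-)\le (1+O(n^2/d))^n=1+O(n^3/d)$, which is negligible for $d=8n^7$. The centroids of $\widehat P$ and $P^-$ are then close to $c$, within $O(\mathrm{poly}(n)/d)\ll\frac1n$ in $\ell_\infty$. I would prove this by bounding the centroid shift via the small-volume difference region, whose diameter is $O(\sqrt n)$.

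\emph{Step 3 (Grünbaum).} By Step 2, the centroid of $\widehat P$ violates $x_u+x_v\ge 1-\frac{1}{2n}$. By Grünbaum's theorem, the halfspace $x_u+x_v\ge 1$ therefore contains at most a $1-\frac1e$ fraction of $\vol(\widehat P)$. Translating back,
\[
|Q^{\textnormal{VC}}(H\cup\{e\})|\,d^{-n}\le (1-\tfrac1e)\vol(\widehat P)\le (1-\tfrac1e)(1+O(n^3/d))\,|Q|\,d^{-n}\le 0.9|Q|\,d^{-n}.
\]

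The main obstacle is Step 2: the nonconvexity of the grid set, together with rigorously controlling both the volume sandwich and the centroid displacement. I would first check that $P^-$ is nonempty with comparable volume. The homothety toward the point $\frac12\mathbf 1$ does not stay inside the size constraint, so I would instead use the point from \Cref{lem:P_vertex}'s volume construction, which has slack $\Omega(1/n)$ in all constraints.
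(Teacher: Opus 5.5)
Your argument is correct in outline, but it takes a different route from the paper. The paper never uses a discrete centroid. It takes the continuous center of mass $z$ of $P^{\textnormal{VC}}(H)$ and, by the same averaging over $M$ as your Step 1, finds $f\in M$ with $z_a+z_b<1$. Gr\"unbaum then gives $\vol(P^{\textnormal{VC}}(H\cup\{f\}))\le(1-\frac1e)\vol(P^{\textnormal{VC}}(H))$ directly. Both volumes are converted to grid counts by a general lemma: if a polytope has a point at distance at least $\beta$ from every facet hyperplane and $d=\lceil\beta^{-1}n^6\rceil$, then $d^{-n}|Q|$ is within a factor $1\pm n^{-3}$ of its volume. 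That lemma is proved by the same homothety sandwich you propose. The paper combines it with the fact that $P^{\textnormal{VC}}(\cdot)$ is $\frac{1}{8n}$-balanced whenever nonempty, witnessed by the box from \Cref{lem:P_vertex}. Since the lemma is also applied to $H\cup\{f\}$, the paper needs a case split on whether $P^{\textnormal{VC}}(H\cup\{f\})$ is empty. Your sandwich is cleaner in one respect: the new graph needs no structure, because the cubes around points of $Q^{\textnormal{VC}}(H\cup\{e\})$ simply lie in $\widehat P\cap\{x_u+x_v\ge 1\}$. This holds exactly with lower-corner cubes, as your $\frac{1}{2d}\mathbf 1$ shift suggests; with centered cubes use $x_u+x_v\ge 1-\frac1d$. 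On the other hand, the step you flag as the main obstacle can be skipped entirely. The centroid $g$ of $\widehat P$ itself satisfies $\sum_v g_v\le\frac n2-\frac12+\frac nd$, so averaging over $M$ directly gives an edge with $g_u+g_v\le 1-\frac1n+\frac2d<1-\frac2d$. That edge is automatically outside $H$, and the discrete centroid $c$ never has to appear. Your displacement bound would also go through, since $\vol(\widehat P\setminus K)\le\vol(\widehat P)-\vol(P^-)$ is only a $\mathrm{poly}(n)/d$ fraction and all coordinates lie in an interval of length $1+O(\frac1d)$. But the lemma is purely existential, so the detour buys nothing. One small omission: the box constraints $0\le x_v\le 1$ must also be loosened in $\widehat P$ and tightened in $P^-$ by $O(\frac1d)$. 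Otherwise cubes at grid points on the boundary of $[0,1]^n$ stick out of $\widehat P$, and the containment $P^-\subseteq K$ fails near coordinates $0$.
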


We defer the proof of \Cref{lem:discrete_BMM_progress} to \Cref{sec:discrete_BMM_progress_proofs} and first use it to prove \Cref{thm:BPM_comb}.

\begin{proof}[Proof of \Cref{thm:BPM_comb}]
We first show that \Cref{alg:discrete_BMM} always returns an answer.
Suppose for contradiction that it finishes all $T$ iterations without reaching Line~\ref{line:BMM-no-edge} or Line~\ref{line:BMM-empty-cover}.
Then
\[
\left|Q^{\textnormal{VC}}(H_T)\right| \le (0.9)^T \cdot \left|Q^{\textnormal{VC}}(H_0)\right| \le (0.9)^T \cdot (d+1)^n < 1,
\]
where the last inequality follows from our choice of $T = \lceil 10\, n \ln(d+1) \rceil$.
This implies $\left|Q^{\textnormal{VC}}(H_T)\right| = 0$, but then the algorithm would have returned in Line~\ref{line:BMM-empty-cover} of the final iteration, which is a contradiction.
We proceed by verifying correctness at the algorithm's return statements.

\textbf{Case 1}: \Cref{alg:discrete_BMM} returns at Line~\ref{line:BMM-no-edge} in iteration $i$.
Then $Q^{\textnormal{VC}}(H_i) \ne \emptyset$: this is true for $i = 0$, and for $i > 0$ the algorithm would have returned at Line~\ref{line:BMM-empty-cover} in the previous iteration.
If $G$ had a perfect matching, \Cref{lem:discrete_BMM_progress} would guarantee the existence of an edge $e_i$ satisfying the required condition.
Hence $G$ has no perfect matching, and the algorithm correctly decides this.

\textbf{Case 2}: \Cref{alg:discrete_BMM} returns at Line~\ref{line:BMM-empty-cover} in iteration $i$.
Then $Q^{\textnormal{VC}}(H_{i+1}) = \emptyset$.
Observe that any integral vertex cover of $H_{i+1}$ of size $\frac{n}{2} - 1$ is a feasible point in $Q^{\textnormal{VC}}(H_{i+1})$.
Therefore, $H_{i+1}$ has no vertex cover of size $\frac{n}{2} - 1$.
By K\H{o}nig's theorem, $H_{i+1}$ has a perfect matching.
Because $H_{i+1} \subseteq G$, this is also a perfect matching of $G$, which the algorithm outputs.

We now analyze the communication cost of simulating \Cref{alg:discrete_BMM} in the two-party setting.
Note that each $H_i$ is public.
In each iteration $i$, both players locally check if they have an edge $e \notin H_i$ satisfying $\left|Q^{\textnormal{VC}}(H_i \cup \{e\})\right| \le 0.9 \cdot \left|Q^{\textnormal{VC}}(H_i)\right|$, and send such an edge if they have one.
If both players send an edge, they break ties by using Alice's.
If neither player sends an edge, the algorithm returns at Line~\ref{line:BMM-no-edge}.
The agreed-upon edge is then added to form $H_{i+1}$, and both players can locally check whether $Q^{\textnormal{VC}}(H_{i+1}) = \emptyset$.
Therefore, each iteration requires each player to communicate at most one edge, which can be encoded using $O(\log n)$ bits.
Since $d = 8n^7$ and $T = \left\lceil 10\, n \ln(d+1) \right\rceil = O(n \log n)$, the total communication is $O(n \log^2 n)$ bits.
\end{proof}

\printbibliography

\appendix
\section{Finding Augmenting Paths via the Blossom Algorithm}
\label{subsec:phase2}
Fix an arbitrary matching $M \subseteq E$. An \emph{augmenting path} is a path in $G$ that begins and ends at distinct unmatched vertices, and alternates between edges not in $M$ and edges in $M$. By flipping the matching status of each edge along such a path, we increase the size of the matching by one. The following fact about augmenting paths is standard.
\begin{fact}
\label{fact:augmenting_path_max}
$G$ contains an augmenting path if and only if $M$ is not a maximum matching.
\end{fact}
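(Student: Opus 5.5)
The statement is \Cref{fact:augmenting_path_max}, Berge's lemma. I will prove the two directions separately.

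\emph{If $G$ has an augmenting path, then $M$ is not maximum.} Let $P$ be an augmenting path with respect to $M$. Because $P$ alternates and both of its endpoints are unmatched, it has one more non-matching edge than matching edges. Consider $M' := M \triangle E(P)$. It is still a matching. Each interior vertex of $P$ loses its $M$-edge on $P$ and gains exactly one edge of $P \setminus M$. The two endpoints were unmatched before and now receive one edge each. Vertices off $P$ are unaffected. Hence $|M'| = |M| + 1$, and $M$ is not maximum.

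\emph{If $M$ is not maximum, then $G$ has an augmenting path.} Let $M^*$ be a matching with $|M^*| > |M|$, and set $D := M \triangle M^*$. Every vertex has degree at most $2$ in $D$: at most one edge from $M$ and at most one from $M^*$. So each connected component of $D$ is a path or a cycle, and its edges alternate between $M$ and $M^*$. An alternating cycle has even length and therefore contains equally many edges of $M$ and of $M^*$. Since $|M^* \setminus M| > |M \setminus M^*|$, some component has more $M^*$-edges than $M$-edges. By the previous observation it must be a path, and the alternation forces it to start and end with $M^*$-edges.

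It remains to check that both endpoints of this path $P$ are unmatched in $M$. Let $x$ be an endpoint. Its path edge lies in $M^* \setminus M$. If $x$ were covered by some edge $f \in M$, then $f \notin M^*$, because $x$ already carries an $M^*$-edge different from $f$. So $f \in D$, and $x$ would have degree $2$ in $D$, contradicting that $x$ is an endpoint of $P$. Thus both endpoints are $M$-unmatched, and they are distinct because $P$ is a path with at least one edge. Therefore $P$ is an augmenting path with respect to $M$.

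The only delicate point is this endpoint argument. Everything else is routine parity and counting in the symmetric difference.
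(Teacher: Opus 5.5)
Your proof is correct. The paper gives no proof of this fact and simply calls it standard; your argument is the classical proof of Berge's lemma: flip along $M \triangle E(P)$ for one direction, and for the other decompose $M \triangle M^*$ into alternating paths and even cycles and check that the endpoints of an $M^*$-heavy path are $M$-unmatched, which you do correctly.
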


In this section, we describe a procedure for finding an augmenting path in general graphs using $O(n \log n)$ bits. Hence, by repeating the process (up to) $k$ times starting from a matching of size $\frac{n}{2}-k$ obtained from Phase 1, one can obtain a maximum matching.

To identify an augmenting path, we follow Edmonds' Blossom Algorithm~\cite{Edmonds65} by constructing a \emph{maximal alternating forest}. Initially, the forest is empty, and all vertices are marked as ``U'' (unvisited). The forest grows from the unmatched vertices as roots by alternately exploring unmatched and matched edges. Each vertex, once added to the forest, is dynamically re-labeled based on the parity of its distance from its root: vertices at even distances are marked ``E'', and those at odd distances are marked ``O''. For convenience, we refer to an edge $(a, b)$ as an \emph{E-O edge} if $a$ is marked ``E'' and $b$ is marked ``O''. This notation extends analogously to other type combinations. The primary goal is to incrementally grow a maximal alternating forest and assign parity labels to vertices such that no E-U edge remains in the graph (hence the term ``maximal''). An example of such a maximal alternating forest is illustrated in \Cref{fig:maximal_alt_forest}.

\begin{enumerate}

\item[(1)] Mark every unmatched vertex as ``E''; these vertices are also the roots of the alternating forest. Mark all remaining vertices as ``U''. Let $F$ denote an edge set that is initially empty.

\item[(2)] Repeat the following steps until no E-U edge remains in $G$:
\begin{enumerate}
    \item[(2a)] Pick any E-U edge $(a, b) \in G \setminus F$.
    
    \item[(2b)] Add edge $(a, b)$ to $F$ and label vertex $b$ as ``O''.  
    
    \item[(2c)] Let $b'$ be the unique neighbor of $b$ such that $(b, b') \in M$. Note that such a $b'$ must exist, as otherwise $b$ would have been unmatched and thus labeled ``E'' in Step 1.
    
    \item[(2d)] Add edge $(b, b')$ to $F$ and label vertex $b'$ as ``E''.
\end{enumerate}

\item[(3)] \hyperlink{step3}{To be continued.} 
\end{enumerate}



\begin{figure}[ht]
\begin{center}
\begin{tikzpicture}[scale=0.7, thick
  node/.style={circle, draw, minimum size=8mm},
  every edge/.style={draw=blue, thick},
   E/.style={circle, draw, minimum size=5mm, fill=cyan!30},
    O/.style={circle, draw, minimum size=5mm, fill=orange!30},
    U/.style={circle, draw, minimum size=5mm, fill=green!30},
  reddash/.style={draw=red, dashed}
]


\node[E] (E1) at (0,0) {};
\node[O] (O1) at (0,-1.5) {};
\node[E] (E2) at (0,-3) {};
\node[O] (O2) at (-1.1,-4.5) {};
\node[O] (O3) at (1.1,-4.5) {};
\node[E] (E3) at (-1.1,-6) {};
\node[E] (E4) at (1.1,-6) {};
\node[U] (U1) at (1.4,-1.5) {};

\draw (E1) -- (O1);
\draw[ultra thick] (O1) -- (E2);
\draw (E2) -- (O2);
\draw[ultra thick] (O2) -- (E3);
\draw (E2) -- (O3);
\draw[ultra thick] (O3) -- (E4);
\draw[dashed] (O1) -- (U1);

\node[E] (E5) at (4,0) {};
\node[O] (O4) at (2.9,-1.5) {};
\node[O] (O5) at (5.1,-1.5) {};
\node[E] (E6) at (2.9,-3) {};
\node[E] (E7) at (5.1,-3) {};
\node[O] (O6) at (2.9,-4.5) {};
\node[O] (O7) at (4,-4.5) {};
\node[O] (O8) at (6.2,-4.5) {};
\node[E] (E8) at (4,-6) {};
\node[E] (E9) at (6.2,-6) {};
\node[O] (O9) at (3,-7.5) {};
\node[O] (O10) at (5.1,-7.5) {};
\node[O] (O11) at (6.2,-7.5) {};
\node[E] (E10) at (3,-9) {};
\node[E] (E11) at (5.1,-9) {};
\node[E] (E12) at (6.2,-9) {};
\node[U] (U2) at (6.5,-3) {};

\draw (E5) -- (O4);
\draw[ultra thick] (O4) -- (E6);
\draw(E6) -- (O6);
\draw (E5) -- (O5);
\draw[ultra thick] (O5) -- (E7);
\draw (E7) --(O7);
\draw[ultra thick] (O7) -- (E8);
\draw (E7) -- (O8);
\draw[ultra thick] (O8) -- (E9);
\draw (E9) -- (O11);
\draw[ultra thick] (O11) -- (E12);
\draw (E8) --(O9);
\draw[ultra thick] (O9) -- (E10);
\draw (E8) --(O10);
\draw[ultra thick] (O10) -- (E11);
\draw[dashed] (E4) -- (E6);
\draw[dashed] (E8) -- (E9);
\draw[dashed] (O8) -- (U2);

\node[above] at ($(E4)!0.5!(E6)$) {$e_1 \ $};
\node[above] at ($(E8)!0.5!(E9)$) {$e_2$};

\end{tikzpicture}
\caption{An example of a maximal alternating forest resulting from Steps 1-3. The blue vertices represent even vertices (E), with the top two being the roots of their respective trees. The orange vertices represent odd vertices (O), while the green vertices are unmarked (U). Solid edges are those included in the forest, with thick edges specifically indicating edges in the matching. Dashed edges denote edges in the graph that are not included in the forest.}
\label{fig:maximal_alt_forest}
\end{center}
\end{figure}

Such a process can be efficiently simulated in the setting of two-party communication. It requires only one of the players to locally identify their E-U edge and communicate it to the other player. The remaining steps of the procedure can be executed jointly.

Following this procedure, some vertices are re-labeled according to their parity in the alternating forest $F$. At the end of this phase, no E-U edges remain in the graph $G$. The following fact captures an important structural consequence of this property.

\begin{fact}
Suppose that $G$ contains an augmenting path and that no E-U edges remain in $G$. Then, there must exist an E-E edge in $G \setminus F$.
\label{fact:E-E_exists}
\end{fact}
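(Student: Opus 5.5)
We argue by contradiction: assume $G$ contains an augmenting path $P=(v_0,v_1,\ldots,v_{2\ell+1})$ with respect to $M$, that no E-U edge remains, and that every E-E edge of $G$ lies in $F$. The goal is a parity argument along $P$ showing that some edge of $P$ must be an E-E edge outside $F$.

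\textbf{Labels at the endpoints of $P$.} Both $v_0$ and $v_{2\ell+1}$ are unmatched, so Step (1) labels them ``E''. Labels are assigned by depth parity in $F$. A vertex labeled ``O'' is always matched, and its $M$-partner is labeled ``E'' in Step (2d).

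\textbf{No E-E edge lies in $F$.} Every edge of $F$ is added either in Step (2b), as an E-U edge that then becomes E-O, or in Step (2d), as an O-E matching edge. Since each vertex is labeled once, when it leaves ``U'', these labels are stable. So any E-E edge of $G$ is automatically outside $F$, and it suffices to exhibit an E-E edge anywhere in $G$.

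\textbf{Parity walk along $P$.} Suppose for contradiction that there is no E-E edge. Since there is also no E-U edge, every neighbor of an ``E'' vertex is labeled ``O''. We show by induction that $v_{2j}$ is ``E'' and $v_{2j+1}$ is ``O'' for all $j$.
\begin{itemize}
\item The base case holds because $v_0$ is ``E''.
\item If $v_{2j}$ is ``E'', then its neighbor $v_{2j+1}$ must be ``O''.
\item The edge $(v_{2j+1},v_{2j+2})$ lies in $M$ (for $j<\ell$), and the matched partner of an ``O'' vertex is ``E'' by Step (2d). Hence $v_{2j+2}$ is ``E''.
\end{itemize}
This forces $v_{2\ell+1}$ to be ``O'', contradicting that it is unmatched and hence ``E''.

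\textbf{Main obstacle.} The delicate point is the claim that an ``O'' vertex's $M$-partner is exactly the vertex labeled ``E'' through Step (2d). This holds because each ``O'' vertex $b$ is matched (it is not a root), and Step (2c) takes its unique $M$-neighbor $b'$. That $b'$ was still ``U'' when it was labeled, since otherwise $b$ would already be in the forest via $b'$'s tree. I would check this carefully, along with the invariant that labels are never changed once assigned, since the fact depends on both.
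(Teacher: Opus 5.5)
Your proof is correct and takes essentially the same approach as the paper: note that $F$ contains no E-E edge, assume no E-E edge exists in $G$, and walk along the augmenting path using that every neighbor of an ``E'' vertex is ``O'' and every ``O'' vertex's matched partner is ``E''. This forces the unmatched endpoint to be ``O'', a contradiction; your uniform induction (which absorbs the paper's separate $\ell=1$ case) and your justification that matching edges are U-U or E-O are just slightly more explicit than the paper's ``by construction'' remark.
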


\begin{proof}
Observe that there is no E-E edge in $F$. Assume for contradiction that every edge incident to a vertex labeled ``E'' has its other endpoint labeled ``O''. Furthermore, by construction, every matching edge in $M$ must be either a U-U edge or an E-O edge. Now consider an augmenting path $(a_1, b_1, a_2, b_2, \ldots, a_\ell, b_\ell)$, where $a_1$ and $b_\ell$ are unmatched vertices which have label ``E'' via Step 1. If $\ell=1$, then $(a_1,b_1)$ is already an E-E edge, a contradiction. Then assume $\ell\geq 2$. By the contradiction assumption, $b_1$ is labeled ``O''. Since $(b_1, a_2)$ is a matching edge, $a_2$ is labeled ``E''. Continuing this inductively, $b_\ell$ must be labeled ``O'', which leads to a contradiction.
\end{proof}

This fact allows us to continue the algorithm. If we are able to find an E-E edge in the graph, we proceed by adding it to the alternating forest. One of two outcomes will then occur (demonstrated in \Cref{fig:added_E-E}):
\begin{itemize}
    \item We discover an augmenting path, in which case the algorithm terminates successfully.
    \item We encounter a structure known as a \emph{blossom}---an odd-length cycle that includes exactly one vertex, called the \emph{tip}, which is incident to two non-matching edges, while the remaining edges of the cycle alternate between matched and unmatched edges.
\end{itemize}

\begin{figure}[ht]
\begin{center}
\begin{tikzpicture}[scale=0.7, thick
  node/.style={circle, draw, minimum size=8mm},
  every edge/.style={draw=blue, thick},
   E/.style={circle, draw, minimum size=5mm, fill=cyan!30},
    O/.style={circle, draw, minimum size=5mm, fill=orange!30},
    U/.style={circle, draw, minimum size=5mm, fill=green!30},
  rededge/.style={draw=red, thick}
]


\node[E] (E1) at (0,0) {};
\node[O] (O1) at (0,-1.5) {};
\node[E] (E2) at (0,-3) {};
\node[O] (O2) at (-1.1,-4.5) {};
\node[O] (O3) at (1.1,-4.5) {};
\node[E] (E3) at (-1.1,-6) {};
\node[E] (E4) at (1.1,-6) {};
\node[U] (U1) at (1.4,-1.5) {};

\node[E] (E5) at (4,0) {};
\node[O] (O4) at (2.9,-1.5) {};
\node[O] (O5) at (5.1,-1.5) {};
\node[E] (E6) at (2.9,-3) {};
\node[E] (E7) at (5.1,-3) {};
\node[O] (O6) at (2.9,-4.5) {};
\node[O] (O7) at (4,-4.5) {};
\node[O] (O8) at (6.2,-4.5) {};
\node[E] (E8) at (4,-6) {};
\node[E] (E9) at (6.2,-6) {};
\node[O] (O9) at (3,-7.5) {};
\node[O] (O10) at (5.1,-7.5) {};
\node[O] (O11) at (6.2,-7.5) {};
\node[E] (E10) at (3,-9) {};
\node[E] (E11) at (5.1,-9) {};
\node[E] (E12) at (6.2,-9) {};
\node[U] (U2) at (6.5,-3) {};

\draw[red, thick] (E1) -- (O1);
\draw[ultra thick, red] (O1) -- (E2);
\draw (E2) -- (O2);
\draw[ultra thick] (O2) -- (E3);
\draw[red, thick] (E2) -- (O3);
\draw[ultra thick, red] (O3) -- (E4);
\draw[dashed] (O1) -- (U1);

\draw[thick, red] (E5) -- (O4);
\draw[ultra thick, red] (O4) -- (E6);
\draw(E6) -- (O6);
\draw (E5) -- (O5);
\draw[ultra thick] (O5) -- (E7);
\draw (E7) --(O7);
\draw[ultra thick] (O7) -- (E8);
\draw (E7) -- (O8);
\draw[ultra thick] (O8) -- (E9);
\draw (E9) -- (O11);
\draw[ultra thick] (O11) -- (E12);
\draw (E8) --(O9);
\draw[ultra thick] (O9) -- (E10);
\draw (E8) --(O10);
\draw[ultra thick] (O10) -- (E11);
\draw[thick, red] (E4) -- (E6);
\draw[dashed] (E8) -- (E9);
\draw[dashed] (O8) -- (U2);

\node[E] (E1') at (10,0) {};
\node[O] (O1') at (10,-1.5) {};
\node[E] (E2') at (10,-3) {};
\node[O] (O2') at (8.9,-4.5) {};
\node[O] (O3') at (11.1,-4.5) {};
\node[E] (E3') at (8.9,-6) {};
\node[E] (E4') at (11.1,-6) {};
\node[U] (U1') at (11.4,-1.5) {};

\draw (E1') -- (O1');
\draw[ultra thick] (O1') -- (E2');
\draw (E2') -- (O2');
\draw[ultra thick] (O2') -- (E3');
\draw (E2') -- (O3');
\draw[ultra thick] (O3') -- (E4');
\draw[dashed] (O1') -- (U1');

\node[E] (E5') at (14,0) {};
\node[O] (O4') at (12.9,-1.5) {};
\node[O] (O5') at (15.1,-1.5) {};
\node[E] (E6') at (12.9,-3) {};
\node[E] (E7') at (15.1,-3) {};
\node[O] (O6') at (12.9,-4.5) {};
\node[O] (O7') at (14,-4.5) {};
\node[O] (O8') at (15.1,-4.5) {};
\node[O] (O9') at (16.2,-4.5) {};
\node[E] (E8') at (14,-6) {};
\node[E] (E9') at (15.1,-6) {};
\node[E] (E10') at (16.2,-6) {};
\node[U] (U2') at (16.5,-3) {};

\draw (E5') -- (O4');
\draw[ultra thick] (O4') -- (E6');
\draw (E6') -- (O6');
\draw (E5') -- (O5');
\draw[ultra thick] (O5')-- (E7');
\draw (E7') --(O7');
\draw[ultra thick] (O7') -- (E8');
\draw (E7') --(O8');
\draw[ultra thick] (O8') -- (E9');
\draw (E7') --(O9');
\draw[ultra thick] (O9')-- (E10');
\draw[dashed] (E4') -- (E6');
\draw[dashed, red] (E7') -- (U2');

\end{tikzpicture}
\caption{An illustration of two possible outcomes when an E-E edge is added to a maximal alternating forest, exemplified by edges $e_1$ and $e_2$ from \Cref{fig:maximal_alt_forest}. On the left, the two even vertices of $e_1$ have different roots. In this case, the newly added edge completes an augmenting path, highlighted in red. On the right, the two even vertices of $e_2$ share the same root. In this case, the newly added edge completes a blossom. We then contract this blossom into a single even super-vertex, positioned at its tip. This contraction may introduce a new E-U edge (shown as a red dashed edge), which in turn triggers the algorithm to recurse to Step 2 to continue growing a maximal alternating forest.}
\label{fig:added_E-E}
\end{center}
\end{figure}

We emphasize the latter scenario. Blossoms are useful objects because they preserve the existence of augmenting paths even when contracted into a single supervertex. The following fact formalizes this intuition:

\begin{fact}[\cite{Edmonds65}]
\label{fact:blossom_preservation}
Let $B$ be a blossom in $G$ that is discovered by the addition of an E-E edge to a maximal alternating forest. Then $G$ contains an augmenting path (w.r.t. $M$) if and only if the graph obtained by contracting $B$ into a single supervertex also contains an augmenting path (wrt. the induced matching in the contracted graph).
\end{fact}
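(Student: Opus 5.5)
The plan is to prove both directions of \Cref{fact:blossom_preservation} directly, working with a blossom $B$ of odd length $2r+1$ with tip $t$, and writing $G/B$ for the contracted graph with supervertex $b$ and matching $M/B$. First I will record the basic structure of $B$. Because $B$ is closed by an E-E edge between two even vertices in the same tree, $t$ is their lowest common ancestor in $F$. The cycle therefore consists of $r$ matched edges together with two unmatched edges at $t$. Moreover, $t$ is either a root (hence unmatched) or matched to its parent in $F$ by the edge of $M$ leaving $B$. The key structural property I need is this: for every vertex $v \in B$, the cycle contains an even-length alternating path from $t$ to $v$ that ends with a matched edge at $v$, obtained by going around $B$ in the appropriate direction. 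Consequently $M/B$ is a matching, and $b$ is unmatched in $G/B$ if and only if $t$ is unmatched in $G$.

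For the direction ``$G/B$ has an augmenting path $\Rightarrow$ $G$ has one,'' take an augmenting path $P'$ in $G/B$. If $P'$ avoids $b$, it lifts unchanged. Otherwise $b$ lies on $P'$. In $G/B$, the edge of $P'$ at $b$ that is matched (if any) corresponds to the matched edge at $t$. The other edge at $b$ enters $B$ at some vertex $v$ via an unmatched edge. I will replace $b$ by the even alternating path inside $B$ from $v$ to $t$ that starts with a matched edge at $v$. This yields an alternating path in $G$ with the same endpoints, or with endpoint $t$ if $b$ was an unmatched endpoint of $P'$. The result is an augmenting path in $G$.

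For the converse, I will avoid a delicate path surgery. Instead I will use \Cref{fact:augmenting_path_max} together with a size argument. Let $P_0$ be the even alternating path in $F$ from the root $\rho$ of $t$'s tree to $t$. Flipping $M$ along $P_0$ gives a matching $M'$ with $|M'| = |M|$ in which $t$ is unmatched and $B$ is still a blossom with tip $t$; contracting also commutes with this flip. If $G$ has an augmenting path w.r.t.\ $M$, then $M$ is not maximum, so neither is $M'$. I will then show that $G/B$ has a matching larger than $M'/B$ by the following count. Let $N$ be a matching of $G$ with $|N| > |M'|$. At most one edge of $N$ joins $B$ to the outside, and the edges of $N$ inside $B$ number at most $r$. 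Hence $N/B$, keeping one edge at $b$, has size at least $|N| - r > |M'| - r = |M'/B|$. By \Cref{fact:augmenting_path_max}, $G/B$ has an augmenting path w.r.t.\ $M'/B$. Since $M'/B$ and $M/B$ differ by flipping the even alternating path $P_0/B$ that starts at the unmatched root, they have equal size. Applying \Cref{fact:augmenting_path_max} once more, $G/B$ has an augmenting path w.r.t.\ $M/B$.

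The main obstacle is justifying the size bookkeeping in the converse. I must check that $|M'/B| = |M'| - r$, which holds since exactly $r$ matched edges lie inside $B$ and $t$ is free under $M'$. I must also check that contracting a general matching $N$ loses at most $r$ edges, which needs care when several $N$-edges leave $B$ (keep only one). Finally, I need the flip along $P_0$ to preserve the blossom structure. The flip does preserve it because $P_0$ meets $B$ only at $t$, since $t$ is the lowest common ancestor in $F$.
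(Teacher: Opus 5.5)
Your lifting direction ($G/B \Rightarrow G$), the flip along the stem $P_0$, and the final transfer from $M'/B$ to $M/B$ via \Cref{fact:augmenting_path_max} are all sound. The gap is the counting step in the converse, which is false.

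The assertion that at most one edge of $N$ joins $B$ to the outside holds for $M'$, but not for an arbitrary matching $N$, and keeping only one such edge does not save the bound. If $N$ has $a$ edges inside $B$ and $c\ge 1$ edges leaving $B$, contraction destroys $a+c-1$ edges. The only constraint is $2a+c\le 2r+1$, so up to $2r$ edges can be lost (every vertex of $B$ matched outward). That yields only $|N/B|\ge |M'/B|-r+1$.

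Here is a concrete instance:
\begin{itemize}
\item Let $B=\{t,c_1,c_2\}$ be a triangle, and add vertices $p_1,q_1,p_2,q_2,z$ with edges $c_1p_1$, $c_2p_2$, $p_1q_1$, $p_2q_2$, $q_1q_2$, $tz$.
\item Let $M=M'=\{c_1c_2,\,p_1q_1,\,p_2q_2\}$.
\item Growing the tree of $t$ via $tc_1$, $c_2p_2$, $q_2q_1$ and then picking the E-E edge $c_2t$ discovers $B$.
\item The perfect matching $N=\{tz,\,c_1p_1,\,c_2p_2,\,q_1q_2\}$ has $|N|=4>3=|M'|$, yet $|N/B|=2=|M'/B|$.
\end{itemize}
So choosing $N$ maximum does not help. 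Taking $N=M'\oplus P$ for a single augmenting path $P$ does not help either: if $P$ runs through two matched edges of $B$, entering and leaving each time, contraction loses $r+1$ edges. In fact the inequality $\nu(G/B)\ge \nu(G)-r$ that your count would imply is false in general. A triangle with a pendant at every vertex has $\nu(G)=3$ but $\nu(G/B)=1$.

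The repair is exactly the path surgery you wanted to avoid, and after your flip it is short. Under $M'$, the tip $t$ is the only exposed vertex of $B$, and no $M'$-edge has exactly one endpoint in $B$. Since $M'$ is not maximum, \Cref{fact:augmenting_path_max} gives an augmenting path $P$ with respect to $M'$.
\begin{itemize}
\item If $P$ avoids $B$, it survives contraction unchanged.
\item Otherwise, its two distinct exposed endpoints cannot both lie in $B$, so some endpoint $s$ satisfies $s\notin B$. Follow $P$ from $s$ to its first vertex $w\in B$. The edge entering $w$ is unmatched.
\item So this prefix becomes, in $G/B$, an alternating path from the exposed $s$ to the exposed $b$ with unmatched first and last edges. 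That is an augmenting path with respect to $M'/B$.
\end{itemize}
Your equal-size argument then transfers this to $M/B$. The paper itself gives no proof of this fact and simply cites Edmonds; this truncation argument is the standard proof.
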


With this fact in mind, the algorithm proceeds to:

\begin{enumerate}
\setcounter{enumi}{2}
\item [\hypertarget{step3}{(3)}] Attempt to find an E-E edge $(a,b) \in G \setminus F$.
\begin{enumerate}
\item[(3a)] If no such E-E edge exists, terminate and confirm that $M$ is a maximum matching.
\item[(3b)] If $(a,b)$ is an E-E edge and $a$ and $b$ lie in alternating trees with different roots, then an augmenting path is found. Terminate and augment $M$.
\item[(3c)] If $(a,b)$ is an E-E edge and both vertices share the same tree root, then a blossom is found. Contract this blossom into a single supervertex (with the tip as its representative), and assign it the label ``E''. Note that this contraction may introduce new E-U edges. Recurse to Step (2) to continue growing the maximal alternating forest on the contracted graph.
\end{enumerate}
\end{enumerate}

Similar to before, this process only requires the players to locally identify an E-E edge and communicate it to the other player. Then, upon terminating via Step (3b), among the edges revealed so far, there must exist an augmenting path wrt. $M$, and both players can locally find a consistent augmenting path and augment $M$.

\subsection{Analysis}
Correctness follows from the standard analysis of the blossom algorithm.
We focus on bounding the communication cost.

We claim that the following potential function never increases during the execution of the algorithm before reaching Step (3a) or Step (3b) after which it terminates.
\begin{align*}\Phi \ := & \ \underbrace{(\text{\# edges communicated})}_{\Phi_1} \ + \ \underbrace{(\text{\# unmarked vertices})}_{\Phi_2} \ + \ \underbrace{(n - \text{\# blossoms contracted})}_{\Phi_3}.
\end{align*}

To justify this, we examine the two key steps where the potential may change:

\begin{itemize}
    \item In Step 2, whenever an E-U edge is revealed by one player, two new vertices are marked. Hence, $\Phi_1$ increases by $1$ and  $\Phi_2$ decreases by $2$, so $\Phi$ decreases by $1$.
    
    \item In Step 3c, when an E-E edge with the same root is found, this completes a blossom. One edge is communicated to complete the cycle, increasing $\Phi_1$ by 1. The number of blossoms also increases by 1, reducing $\Phi_3$ by 1. Thus, the net change in $\Phi$ is zero.
\end{itemize}

Initially, no edges have been communicated, at most $n$ vertices are unmarked, and no blossoms have been contracted. Thus, the starting potential is $\Phi \leq 2n$ and afterwards it only increases by one if the algorithm terminates via Step (3b). At termination, the number of edges communicated is exactly $\Phi_1$. Moreover, the number of unmarked vertices is nonnegative (i.e., $\Phi_2 \geq 0$), and the number of blossoms is at most $n$ (i.e., $\Phi_3 \geq 0$), as each blossom reduces the number of supervertices by at least one. Thus, the number of edges communicated is bounded by $\Phi_1 \leq \Phi \leq 2n+1$. Since each communicated edge requires $O(\log n)$ bits to describe, this implies the procedure of finding an augmenting path can be simulated using $O(n \log n)$ bits of communication between Alice and Bob.
\section{Omitted Proofs from Section~\ref{sec:cutting_plane}}
\label{sec:cutting_plane_deferred_proofs}

\begin{proof}[Proof of \Cref{lem:cutting_plane_iterations}]
\Cref{alg:cutting_plane} runs for at most $T$ iterations and queries the separation oracle exactly once per iteration, so it makes at most $T = \left\lceil 3 \ln(\vol(P_0)/\tau) \right\rceil = O(\log(\vol(P_0)/\tau))$ queries in total.
It remains to show that its output is always correct.

Note that $P \subseteq P_i$ for every $0 \le i \le T$.
This holds initially by the assumption $P \subseteq P_0$, and continues to hold because $P_{i+1}$ is formed by intersecting $P_i$ with a halfspace that contains every point of $P$.
We proceed by verifying correctness at each of the algorithm's three return statements.

\textbf{Case 1}: \Cref{alg:cutting_plane} returns at Line~\ref{line:feasible} when the separation oracle asserts that $p_i \in P$.
Then $P \ne \emptyset$, and by the promise on $P$, the algorithm correctly decides $\vol(P) \ge \tau$ and returns a point $p_i \in P$.

\textbf{Case 2}: \Cref{alg:cutting_plane} returns at Line~\ref{line:infeasible-loop} in iteration $i$.
Then $P \subseteq P_{i+1} = \emptyset$, and the algorithm correctly decides $P = \emptyset$.

\textbf{Case 3}: \Cref{alg:cutting_plane} returns at Line~\ref{line:infeasible-final} after $T$ iterations.
Then $P_T \ne \emptyset$.
Because
\[
P_{i+1} = P_i \cap \{x : \langle a^{(i)}, x \rangle \le b^{(i)}\}
\]
and the center of mass $p_i$ of $P_i$ violates this constraint, by Gr\"unbaum's Theorem \cite{Grunbaum1960}\footnote{Specifically, Gr\"unbaum~\cite{Grunbaum1960} showed that for any convex body $K$ and any halfspace $H$ that does not contain the center of mass of $K$, $\vol(K \cap H) \le \left(1 - \frac{1}{e}\right) \vol(K)$.}, we have
\[
\vol(P_{i+1}) \le \left(1 - \frac{1}{e}\right) \vol(P_i).
\]
By iteratively applying this inequality, we get
\[
\vol(P_T) \le \left(1 - \frac{1}{e}\right)^T \vol(P_0) \le e^{-T/e} \cdot \vol(P_0) < \tau,
\]
where we used $1 - x \le e^{-x}$, $\vol(P_0)/\tau > 1$, and $T \ge 3 \ln(\vol(P_0)/\tau) > e \ln(\vol(P_0)/\tau)$.
Because $P \subseteq P_T$, we have $\vol(P) \le \vol(P_T) < \tau$.
By the promise on $P$, we must have $P = \emptyset$, and the algorithm correctly decides that.
\end{proof}
\section{Directed Cycle Detection}
\label{subsec:kosaraju}

The problem of \emph{directed cycle detection} ($\cycle$) is defined as follows: the players are given an edge partition of a directed graph $G = (V, E)$ without self-loops and with $n = |V|$. Their goal is to determine whether $G$ contains a directed cycle.

\begin{theorem}
\label{thm:P_cycle}
There is a deterministic protocol $\mathcal{P}^{\mathsf{Cycle}}$ which solves $\cycle$ using $O(n \log n)$ bits of communication.
\end{theorem}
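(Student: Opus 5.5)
The plan is to simulate a depth-first search jointly and use DFS finishing order to detect a back edge. The tool is the standard two-party implementation of DFS in which every vertex is discovered through exactly one communicated message. Both players maintain a public state consisting of:
\begin{itemize}
\item the set of visited vertices;
\item the current DFS stack;
\item the order in which vertices finished.
\end{itemize}
When the top of the stack is $u$, each player checks locally whether they hold an edge $(u,v)$ with $v$ unvisited. If so, they announce $v$ using $O(\log n)$ bits, with ties broken in Alice's favor, and $v$ is pushed. If neither player has such an edge, each sends a one-bit ``none'' message and $u$ is popped and marked finished. Whenever the stack empties, the players restart from the smallest-indexed unvisited vertex, which needs no communication.

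Cost: each vertex is pushed once, costing $O(\log n)$ bits, and popped once, costing $O(1)$ bits. The whole DFS forest, together with the discovery and finishing times, therefore becomes public using $O(n\log n)$ bits.

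To decide whether a cycle exists, I will use the classical fact that $G$ has a directed cycle if and only if some edge $(u,v)$ satisfies $\mathrm{fin}(v) \ge \mathrm{fin}(u)$. Such an edge is necessarily a back edge (or a self-loop, which is excluded). The equivalence holds because tree, forward and cross edges all go from a later-finishing vertex to an earlier-finishing one. Hence if no back edge exists, reverse finishing order is a topological order. Conversely, a back edge $(u,v)$, together with the tree path from $v$ to $u$, closes a cycle.

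After the DFS, each player checks this condition locally on their own edges against the public finishing times and sends one bit. The answer is ``cycle'' iff either bit is $1$. This adds only $O(1)$ bits.

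If one wants to mirror the paper's mention of Kosaraju, the same DFS primitive can be run a second time on the transpose graph in decreasing finishing order to compute strongly connected components, again at $O(n\log n)$ bits. A cycle exists iff some SCC has size at least $2$, since self-loops are excluded. Either route suffices; I would present the back-edge check because it is shorter.

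The only step needing care is making sure the DFS simulation is faithful: when a vertex is popped, no unvisited out-neighbour remains in either player's edge set. That is exactly what the joint ``none'' replies certify. The correctness of the finishing-time characterization is standard, and I would cite it rather than reprove it.
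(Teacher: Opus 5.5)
Your proposal is correct, but your detection step differs from the paper's. The paper simulates Kosaraju's algorithm. It runs a first DFS on $G$ to get a finishing order, then a second DFS on $G^\top$ in decreasing finishing order, which makes all strongly connected components public. It then declares a cycle iff some SCC has at least two vertices, with no further communication.

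You run only one DFS, make the finishing times public, and finish with a one-bit-per-player local test for an edge $(u,v)$ with $\mathrm{fin}(v)\ge\mathrm{fin}(u)$. Your route is lighter: one traversal, no transpose, and correctness from an elementary fact rather than from Kosaraju's algorithm. You do not even need to cite the edge classification. Your pop rule guarantees that when $u$ is popped, every out-neighbour $v$ of $u$ is already visited. So $\mathrm{fin}(v)>\mathrm{fin}(u)$ forces $v$ to be on the stack. Each stack entry was pushed along an edge from the entry below it, so the stack holds a directed path from $v$ to $u$, which the edge $(u,v)$ closes into a cycle. Conversely, if $\mathrm{fin}$ strictly decreases along every edge, no cycle can exist. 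Your non-strict inequality would also catch self-loops, which the SCC-size criterion would miss.

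For the same $O(n\log n)$ cost, the paper's route gives a richer output: the full SCC decomposition, which it also cites for the SCC entry in its table. The two DFS simulations cost the same. In yours, both players answer at each stack step; in the paper's, Alice exhausts her edges out of the current vertex, sends ``done'', and hands over to Bob.
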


We simulate Kosaraju's algorithm to compute all strongly connected components (SCCs) of $G$. Recall that Kosaraju's algorithm performs two depth-first searches (DFS). We first discuss how to simulate DFS on a directed graph with a specific vertex ordering (\Cref{alg:dfs}).

\SetKwData{Visited}{Visited}
\SetKwFunction{Search}{Search}
\begin{algorithm}[ht]
\caption{Depth-First Search (DFS)}
\label{alg:dfs}
\KwIn{A directed graph $G = (V,E)$ and an ordering $L = (v_1, \ldots, v_n)$ of $V$.}
\KwOut{Runs a DFS on $G$ starting from vertices in the order $L$.}
\SetKwProg{Fn}{Function}{}{}
\Fn{\Search{$v$}}{
    $\Visited[v] \gets \mathrm{true}$\;
    \For{each edge $(v,u) \in E$}{ \label{line:dfs-check-vu}
        \If{$\Visited[u] = \mathrm{false}$}{
            \Search{$u$}\; \label{line:dfs_inner_call}
        }
    }
    \Return\; \label{line:dfs_return}
}
$\Visited[v] \gets \mathrm{false}$ for all $v \in V$\;
\For{$i = 1, \ldots, n$}{
    \If{$\Visited[v_i] = \mathrm{false}$}{
        \Search{$v_i$}\; \label{line:dfs_outer_call}
    }
}
\end{algorithm}

In Line~\ref{line:dfs-check-vu}, we simulate the loop over $v$'s edges as follows: Alice repeatedly checks her remaining (unchecked) edges $(v,u)$ for one with $u$ unvisited; each time she finds one, she sends it to Bob, and both proceed to the recursive $\Search$ call in Line~\ref{line:dfs_inner_call} before returning to check her remaining edges.
Once she has exhausted all her edges leaving $v$, she sends a single ``done'' message, and then it is Bob's turn to do the same with his edges.
Once Bob exhausts his edges and sends ``done'', they backtrack at Line~\ref{line:dfs_return}.
Both players maintain a correct copy of the $\Visited$ array.
For each vertex, each player sends exactly one ``done'' message.
Each communicated edge leads to an unvisited vertex, which is immediately marked as visited, so at most $n-1$ edges are communicated in total.
Each edge requires $O(\log n)$ bits to encode, so the total communication for one DFS is $O(n \log n)$ bits.

We now build on this to discuss how to simulate Kosaraju's algorithm.

\begin{proof}[Proof of \Cref{thm:P_cycle}]
We simulate Kosaraju's algorithm:
Perform a first DFS on $G$ with an arbitrary vertex ordering, and push a vertex $v$ onto a stack immediately before $\Search(v)$ returns at Line~\ref{line:dfs_return}.
Then perform a second DFS on the transpose graph $G^\top$ (where the direction of each edge is reversed, and each player can locally reverse their edges) using the stack order (last-in-first-out, i.e., in decreasing order of their finish times from the first DFS) as the vertex ordering $L$.
All vertices visited within a single outer call to $\Search$ in Line~\ref{line:dfs_outer_call} in the second DFS form one SCC.

Note that $G$ has a directed cycle if and only if there is at least one SCC of size at least $2$.
Therefore, after computing all SCCs, Alice and Bob can decide $\cycle$ without further communication.
In total, the protocol simulates two DFS traversals and uses $O(n\log n)$ bits of communication.
\end{proof}
\section{Nonnegative-Weight Single-Source Shortest Paths}
\label{subsec:dijkstra}

The problem of \emph{single-source shortest paths} ($\SSSP$) is as follows: the players are given a disjoint edge partition of a directed weighted graph $G = (V, E, w)$ whose weights are integers in $[0, W]$, and a source vertex $s \in V$. Their goal is to compute the shortest path distance $d_G(s,v)$ from $s$ to every vertex $v \in V$.

\begin{theorem}
\label{thm:SSSP}
There is a deterministic protocol $\mathcal{P}^{\mathsf{SSSP}}$ that computes $\SSSP$ using $O(n \log(nW))$ bits of communication.
\end{theorem}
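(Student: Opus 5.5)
We will simulate Dijkstra's algorithm, maintaining the invariant that both players hold the same set $S$ of finalized vertices together with their exact distances $d_G(s,v)$ for all $v \in S$. Initially $S = \{s\}$ and $d_G(s,s)=0$. The main difficulty is avoiding a separate message for every distance decrease: a naive simulation would send one message per edge relaxation, i.e.\ $\Theta(m)$ messages. We avoid this by never communicating tentative keys. Instead, each player recomputes candidate keys locally from public information.

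In each round, each player computes locally, using only their own edges and the public distances on $S$, the value
\[
\mathrm{key}_A(v) := \min_{(u,v)\in E_A,\ u\in S} \bigl(d_G(s,u)+w(u,v)\bigr)
\]
for every $v \notin S$, and similarly $\mathrm{key}_B(v)$ for Bob. Each player then sends their minimizing pair $(v,\mathrm{key}(v))$ over $v\notin S$, or a null message if no edge leaves $S$. Both players take the smaller of the two proposals, breaking ties in favor of Alice, add that vertex to $S$ with that distance, and continue. If both messages are null, the remaining vertices are unreachable and are assigned distance $\infty$.

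Correctness is the standard Dijkstra argument. The true Dijkstra key of $v$ is $\min(\mathrm{key}_A(v),\mathrm{key}_B(v))$, since every edge belongs to exactly one player. Hence the overall minimum of the two proposals equals the global minimum key over $V\setminus S$. Because weights are nonnegative, that minimum equals the true distance of the minimizing vertex. This is the usual inductive invariant: any $s$-$v$ path first leaves $S$ through some edge $(u,x)$, and its length is at least $d(u)+w(u,x)\ge$ the minimum key.

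For the cost, there are at most $n$ rounds. Each message consists of a vertex name, costing $O(\log n)$ bits, and a distance. Every finite distance is an integer in $[0,(n-1)W]$, so it costs $O(\log(nW))$ bits. The total communication is therefore $O(n\log(nW))$ bits. There is no real obstacle here beyond noticing that keys can be recomputed locally from the public distances, so tentative updates never need to be communicated.
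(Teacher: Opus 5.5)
Your proposal is correct and follows essentially the same approach as the paper: simulate Dijkstra with public $S$ and distances, have each player send its best locally computed candidate $(v,\widetilde d(v))$ over its own edges leaving $S$, and take the smaller, for $O(\log(nW))$ bits per round over $n-1$ rounds. The only difference is cosmetic: you break ties in favor of Alice while the paper breaks them lexicographically. This does not affect correctness, since any vertex attaining the global minimum key is a valid Dijkstra choice.
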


We show that Dijkstra's algorithm can be simulated using $O(n\log(nW))$ bits of communication. For simplicity, we first recall a basic version of Dijkstra's algorithm without a priority queue.

\begin{algorithm}[H]
\caption{Dijkstra's Algorithm (without a priority queue)}
\label{alg:dijkstra}
\KwIn{A directed weighted graph $G = (V,E,w)$ and a source vertex $s$.}
\KwOut{The shortest-path distance $d(v)$ from $s$ to every $v \in V$.}
$S \gets \{s\}$\;
$d(s) \gets 0$\;
$d(v) \gets +\infty$ for each $v \in V \setminus \{s\}$\;
\For{$i = 1, \ldots, n-1$}{
    \For{$v \in (V \setminus S)$}{ \label{line:dijkstra-inner-for}
        Compute $\displaystyle \widetilde d(v) := \min_{u \in S, (u,v) \in E} \{d(u) + w(u,v)\}$ (or $\widetilde d(v) := +\infty$ if no such $u$ exists)\; \label{line:dijkstra-compute}
    }
    $v_i \gets \arg\min_{v \in V\setminus S} \widetilde d(v)$ (breaking ties lexicographically)\; \label{line:dijkstra-argmin}
    $d(v_i) \gets \widetilde d(v_i)$\; \label{line:dijkstra-set-d}
    $S \gets S \cup \{v_i\}$\;
}
\Return $d(v)$ for all $v \in V$\;
\end{algorithm}

\begin{proof}[Proof of \Cref{thm:SSSP}]
Let $E_A$ and $E_B$ denote the sets of edges held by Alice and Bob, respectively.
They maintain a consistent copy of the set $S$ and the array $d$, and simulate \Cref{alg:dijkstra} by replacing Lines~\ref{line:dijkstra-inner-for}--\ref{line:dijkstra-set-d} with the following process:
In iteration $i$,
\begin{enumerate}
    \item Alice locally computes $\widetilde d_A(v) := \min_{u \in S, (u, v)\in E_A}\{d(u) + w(u, v)\}$ for each vertex $v \in V \setminus S$, where $\widetilde d_A(v) := +\infty$ if no such $u$ exists.
          Alice sends the vertex $v^A_i$ minimizing $\widetilde d_A(v)$ (breaking ties lexicographically) and the corresponding $\widetilde d_A(v^A_i)$ to Bob.
          Bob does the same and sends $v^B_i$ and $\widetilde d_B(v^B_i)$ to Alice. 
    \item Alice and Bob then agree on $v_i \in \{v^A_i, v^B_i\}$ based on which of $\widetilde d_A(v^A_i)$ and $\widetilde d_B(v^B_i)$ is smaller (again breaking ties lexicographically).
    \item Then they set $d(v_i) := \min(\widetilde d_A(v^A_i), \widetilde d_B(v^B_i))$.
          (Note that it is possible that $v^A_i = v^B_i$ but $\widetilde d_A(v^A_i) \ne \widetilde d_B(v^B_i)$.)
\end{enumerate}

This produces the same choice of $v_i$ and the same value $d(v_i)$ as Lines~\ref{line:dijkstra-inner-for}--\ref{line:dijkstra-set-d}, because the minimum candidate over all edges from $S$ to $V\setminus S$ is the minimum of Alice's best candidate and Bob's best candidate.
Since edge weights lie in $[0, W]$, all finite distances are at most $(n - 1)W$, so each distance value can be encoded using $O(\log(nW))$ bits, where $+\infty$ can be represented using a special symbol.
Each iteration therefore uses $O(\log n + \log(nW)) = O(\log(nW))$ bits of communication, and there are $(n-1)$ iterations.
Hence, the total communication is $O(n\log(nW))$ bits.
\end{proof}

\section{Omitted Proofs from Section~\ref{sec:bipartite_matching}}
\label{sec:bipartite_matching_proofs}

\volumegap*
\begin{proof}
Recall that in a bipartite graph, the sizes of the (integral) maximum matching and minimum vertex cover are the same.
Moreover, the integrality gaps of the maximum matching and minimum vertex cover LPs are both $1$.

If $G$ has a perfect matching, then the minimum vertex cover of $G$ has size $\frac{n}{2}$.
Because any feasible point in $P^{\textnormal{VC}}(G)$ is a fractional vertex cover of size at most $\frac{n}{2} - \frac{1}{2}$, $P^{\textnormal{VC}}(G)$ must be empty.

If $G$ has no perfect matching, then $G$ has an integral vertex cover $y \in \{0,1\}^n$ of size $\frac{n}{2}-1$. We have $\sum y_v \le \frac{n}{2} - 1$ and $y_u + y_v \ge 1$ for all $(u,v) \in E$.
We show that $B \subseteq P^{\textnormal{VC}}(G)$ where
\begin{align*}
    B = \prod_{v \in V}
\begin{cases}
\left[\frac{1}{4n}, \frac{1}{2n}\right] & \text{if } y_v = 0, \\
\left[1 - \frac{1}{4n}, 1\right] & \text{if } y_v = 1.
\end{cases}
\end{align*}
Fix any $q \in B \subseteq [0, 1]^n$. For any edge $(u,v) \in E$, at least one of $y_u$ or $y_v$ is $1$, so $q_u + q_v \ge \frac{1}{4n} + \left(1 - \frac{1}{4n}\right) = 1$. Moreover, $\sum_v q_v \le \sum_v \left(y_v + \frac{1}{2n}\right) = \left(\frac{n}{2} - 1\right) + \frac{1}{2} = \frac{n}{2} - \frac{1}{2}$.
This shows that $q \in P^{\textnormal{VC}}(G)$.
Therefore, $B \subseteq P^{\textnormal{VC}}(G)$ and $\vol(P^{\textnormal{VC}}(G)) \ge \vol(B) = (4n)^{-n}$ as claimed.
\end{proof}

\subsection{Proof of Lemma~\ref{lem:discrete_BMM_progress}}
\label{sec:discrete_BMM_progress_proofs}

We first introduce the notion of a \emph{balanced polytope}.

\begin{definition}[$\beta$-balanced polytope]
Let $P \subseteq \mathbb{R}^n$ be a polytope. We say that $P$ is $\beta$-balanced if there exists a point $p \in P$ such that $d(p, F) \ge \beta$ for all facets $F$ of $P$, where $d(p,F)$ is the distance from $p$ to the supporting hyperplane of $F$.
\end{definition}

Given a bipartite graph $G = (V, E)$, recall that we define $P^{\textnormal{VC}}(G)$ and $Q^{\textnormal{VC}}(G)$ as follows:
\begin{align*}
P^{\textnormal{VC}}(G) &:=
\left\{ x \in \mathbb{R}^n \; \middle| \;
\begin{aligned}
\, & x_u + x_v \ge 1 && \forall (u, v) \in E, \\
\, & \textstyle\sum_{v\in V} x_v \le \frac{n}{2} - \frac{1}{2}, \\
\, & 0\le x_v \le 1 && \forall v \in V
\end{aligned}
\right\} \\
Q^{\textnormal{VC}}(G) &:=
\left\{ x \in \mathbb{R}^n \; \middle| \;
\begin{aligned}
\, & x_u + x_v \ge 1 && \forall (u, v) \in E, \\
\, & \textstyle\sum_{v\in V} x_v \le \frac{n}{2} - \frac{1}{2}, \\
\, & x_v \in \left\{0, \tfrac{1}{d}, \ldots, 1\right\} && \forall v \in V
\end{aligned}
\right\}.
\end{align*}

A key observation is that every non-empty vertex-cover polytope is $\frac{1}{8n}$-balanced.
\Cref{cor:P_vertex} follows directly from the proof of \Cref{lem:P_vertex}.

\begin{corollary}
\label{cor:P_vertex}
Let $G$ be a bipartite graph with no perfect matching and $\frac{n}{2}$ vertices on each side.
Then, $P^{\textnormal{VC}}(G)$ is $\frac{1}{8n}$-balanced.
\end{corollary}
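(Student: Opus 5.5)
We have to show that $P^{\textnormal{VC}}(G)$ contains a point $p$ at distance at least $\frac{1}{8n}$ from the supporting hyperplane of every facet. We will reuse the construction from the proof of \Cref{lem:P_vertex}. Since $G$ has no perfect matching, König's theorem gives an integral vertex cover $y\in\{0,1\}^n$ with $\sum_v y_v \le \frac{n}{2}-1$. The natural candidate is the center of the box $B$ used there:
\[
p_v = \tfrac{3}{8n} \text{ if } y_v = 0, \qquad p_v = 1-\tfrac{1}{8n} \text{ if } y_v = 1.
\]
The box $B$ was shown to lie in $P^{\textnormal{VC}}(G)$, so $p\in P^{\textnormal{VC}}(G)$.

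Every facet of $P^{\textnormal{VC}}(G)$ has a supporting hyperplane equal to the boundary of one of the defining inequalities. It therefore suffices to show that $p$ satisfies each defining inequality with slack at least $\frac{1}{8n}$ times the Euclidean norm of its normal vector. (A constraint that does not define a facet needs no check, but verifying all of them is harmless.) We go through the three constraint types.

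\emph{Box constraints $0\le x_v\le 1$.} The normal has norm $1$. Each coordinate $p_v$ lies in $[\frac{3}{8n}, 1-\frac{1}{8n}]$, so both slacks are at least $\frac{1}{8n}$.

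\emph{Edge constraints $x_u+x_v\ge 1$.} The normal has norm $\sqrt2$. Because $y$ is a cover, at least one endpoint has $y=1$. If exactly one does, then $p_u+p_v = 1-\frac{1}{8n}+\frac{3}{8n} = 1+\frac{2}{8n}$, giving slack $\frac{2}{8n} \ge \sqrt2\cdot\frac{1}{8n}$. If both do, then $p_u+p_v = 2-\frac{2}{8n}$, whose slack is even larger.

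\emph{Sum constraint $\sum_v x_v\le \frac n2-\frac12$.} The normal has norm $\sqrt n$, so we need slack at least $\frac{\sqrt n}{8n}$. We compute
\[
\sum_v p_v \le \sum_v y_v + n\cdot\tfrac{3}{8n} \le \tfrac n2 - 1 + \tfrac38,
\]
so the slack is at least $\frac12-\frac38=\frac18 \ge \frac{\sqrt n}{8n}$.

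The only step needing care is this last one: the slack must beat $\sqrt n/(8n)$, not just $1/(8n)$, because of the norm of the normal. It works because the box center sits at offset $O(1/n)$ per coordinate, while $y$ leaves a constant amount of room below $\frac n2-\frac12$. If some constant were off, I would shrink the offsets (for example, use $\frac{1}{4n}$ in place of $\frac{3}{8n}$, since the edge slack then remains $\frac{1}{8n}$ only after rechecking). The computation above, however, suggests the box center works as stated.
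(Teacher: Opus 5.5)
Your proof is correct and follows essentially the paper's argument. The paper says only that the corollary follows from the proof of \Cref{lem:P_vertex}: you take the same point, the center of the cube $B$ of side $\frac{1}{4n}$, where the bound is immediate because the Euclidean ball of radius $\frac{1}{8n}$ around it lies in $B\subseteq P^{\textnormal{VC}}(G)$, and instead you verify the normalized slack of each defining inequality directly (valid, since $B$ makes the polytope full-dimensional, so every facet hyperplane comes from a defining inequality). Drop the closing fallback, though: replacing $\frac{3}{8n}$ by $\frac{1}{4n}$ would leave edge slack $\frac{1}{8n}$, hence distance only $\frac{1}{8\sqrt{2}\,n}<\frac{1}{8n}$, so it would not work.
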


The following lemma states that if the grid has sufficiently fine resolution, the volume of a balanced polytope can be well approximated by the (scaled) number of grid points it contains.
We prove \Cref{lem:approx_volume} in \Cref{subsec:proof_vol_approx}.

\begin{restatable}[Volume approximation for $\beta$-balanced polytopes]{lem}{approxvolume}
\label{lem:approx_volume}
Let $P \subseteq \mathbb{R}^n$ be a $\beta$-balanced polytope.
Let $d = \lceil \beta^{-1} n^{6} \rceil$.
Let $Q = P \cap \left(\mathbb{Z}/d\right)^{n}$ be the set of discretized points in $P$. Then
\begin{align*}
    1-n^{-3} \le \frac{d^{-n} \cdot |Q|}{\vol(P)} \le 1+n^{-3}.
\end{align*}
\end{restatable}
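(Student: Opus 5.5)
We plan a standard sandwiching argument: associate to each grid point $q \in Q$ the half-open cube $C_q := q + [0, \tfrac1d)^n$ of volume $d^{-n}$. These cubes are pairwise disjoint and tile $\mathbb{R}^n$. We then compare $\bigcup_{q\in Q} C_q$ with a slightly shrunk and a slightly enlarged copy of $P$, using $\beta$-balancedness to show that both copies have volume within a factor $1\pm n^{-3}$ of $\vol(P)$.

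\textbf{Homothetic copies.} Let $p\in P$ be the point witnessing $\beta$-balancedness, and translate so that $p=0$. Write $P=\{x : \langle a_j,x\rangle \le b_j\}$ with unit normals $\|a_j\|=1$; balancedness gives $b_j \ge \beta$ for every facet $j$. For $\lambda>0$, the set $\lambda P = \{x : \langle a_j,x\rangle \le \lambda b_j\}$ has volume $\lambda^n\vol(P)$. The key containments are as follows.
\begin{enumerate}
\item If $x\in P$ and $\|y\|\le r$, then $\langle a_j,x+y\rangle \le b_j + r \le b_j(1+r/\beta)$. Hence $P + B(0,r)\subseteq (1+r/\beta)P$.
\item Symmetrically, if $x\in(1-r/\beta)P$ and $\|y\|\le r$, then $\langle a_j,x+y\rangle\le b_j - r\beta^{-1}b_j + r \le b_j$, using $b_j\ge\beta$. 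Hence $(1-r/\beta)P + B(0,r)\subseteq P$.
\end{enumerate}
Here we take $r=\sqrt n/d$, the diameter of a grid cube.

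\textbf{Sandwich.} For the upper bound, each $C_q$ with $q\in Q\subseteq P$ lies in $P+B(0,r)\subseteq(1+r/\beta)P$. By disjointness, $d^{-n}|Q|\le(1+r/\beta)^n\vol(P)$.

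For the lower bound, every point $x\in(1-r/\beta)P$ lies in the cube $C_q$ of the grid point $q=\lfloor dx\rfloor/d$, and $\|q-x\|\le r$. So $q\in P$ by containment (2), i.e., $q\in Q$. Thus $(1-r/\beta)P\subseteq\bigcup_{q\in Q}C_q$, which gives $d^{-n}|Q|\ge(1-r/\beta)^n\vol(P)$.

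\textbf{Numerics.} With $d\ge \beta^{-1}n^6$, we have $r/\beta = \sqrt n/(\beta d)\le n^{-5.5}$. Then $(1+n^{-5.5})^n\le e^{n^{-4.5}}\le 1+n^{-3}$ and $(1-n^{-5.5})^n\ge 1-n^{-4.5}\ge 1-n^{-3}$, both valid for $n\ge 2$ (and a trivial check for $n=1$).

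\textbf{Main obstacle.} The translation step needs care: the grid $(\mathbb{Z}/d)^n$ is not translation invariant under arbitrary shifts by $p$. We avoid this by never actually translating. We scale about $p$ instead, writing $p+\lambda(P-p)$, and all containments above go through verbatim with $\langle a_j, x-p\rangle \le b_j - \langle a_j,p\rangle =: b_j'\ge\beta$. The cube-covering argument uses only the fixed grid, so it is unaffected.

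The remaining subtlety is that $P$ must be bounded for $\vol(P)$ to be finite. We take this as implicit in the word "polytope," which holds in our application since $P^{\textnormal{VC}}\subseteq[0,1]^n$.
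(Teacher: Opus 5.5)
Your proof is correct and follows essentially the same route as the paper: you sandwich the union of grid cubes between the homothetic copies $p+(1\pm\varepsilon)(P-p)$, scaled about the balanced point $p$, with $\varepsilon$ controlled by $\beta$. The differences are minor: you prove both containments algebraically from the facet inequalities, obtaining the slightly sharper $\varepsilon=\sqrt{n}/(\beta d)$ instead of $n/(\beta d)$, and you use corner-anchored half-open cubes, whereas the paper uses centered cubes and a similar-triangles argument for the upper containment.
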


The following fact is immediate from the definitions of $P^{\textnormal{VC}}(G)$ and $Q^{\textnormal{VC}}(G)$.
\begin{fact}
$Q^{\textnormal{VC}}(G) = P^{\textnormal{VC}}(G) \cap (\mathbb{Z}/d)^n$.
\label{fact:Q_in_P}
\end{fact}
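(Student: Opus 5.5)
The statement is \Cref{fact:Q_in_P}: $Q^{\textnormal{VC}}(G) = P^{\textnormal{VC}}(G) \cap (\mathbb{Z}/d)^n$. I would prove it by double inclusion, comparing the defining constraints of the two sets line by line. The edge constraints $x_u + x_v \ge 1$ and the budget constraint $\sum_v x_v \le \frac{n}{2} - \frac{1}{2}$ are literally the same in both definitions. So the only thing to check is that the box-plus-grid condition matches the discrete coordinate condition:
\[
\{x : 0 \le x_v \le 1\} \cap (\mathbb{Z}/d)^n \;=\; \{x : x_v \in \{0, \tfrac{1}{d}, \ldots, 1\}\ \forall v\}.
\]

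For the inclusion ``$\subseteq$'', take $x \in Q^{\textnormal{VC}}(G)$. Each coordinate lies in $\{0,\frac1d,\ldots,1\}$, so it is a multiple of $\frac1d$ (using $1 = \frac{d}{d}$) and lies in $[0,1]$. Hence $x \in (\mathbb{Z}/d)^n$. Since $x$ also satisfies the shared edge and budget constraints, $x \in P^{\textnormal{VC}}(G)$.

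For the inclusion ``$\supseteq$'', take $x \in P^{\textnormal{VC}}(G) \cap (\mathbb{Z}/d)^n$. Each coordinate can be written $x_v = j_v/d$ with $j_v \in \mathbb{Z}$. The box constraint $0 \le j_v/d \le 1$ gives $0 \le j_v \le d$. Since $d$ is a positive integer, $x_v \in \{0, \frac1d, \ldots, 1\}$. Together with the shared constraints, this places $x$ in $Q^{\textnormal{VC}}(G)$.

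There is no real obstacle. The only points to be explicit about are that $d$ is a positive integer (here $d = 8n^7$) and that $(\mathbb{Z}/d)^n$ denotes the points all of whose coordinates are integer multiples of $\frac1d$.
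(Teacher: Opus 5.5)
Your double-inclusion argument is correct and is exactly the routine check the paper has in mind when it calls this fact immediate from the definitions: the edge and budget constraints coincide, and for a positive integer $d$ the box $[0,1]$ intersected with $\frac{1}{d}\mathbb{Z}$ is precisely $\{0,\frac{1}{d},\ldots,1\}$. Your remark that $(\mathbb{Z}/d)^n$ should be read as $(\frac{1}{d}\mathbb{Z})^n$, not as a quotient ring, is a sensible clarification.
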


We are now ready to prove \Cref{lem:discrete_BMM_progress}.

\discreteBMMprogress*
\begin{proof}
Since $Q^{\textnormal{VC}}(H) \ne \emptyset$, by \Cref{fact:Q_in_P}, $P^{\textnormal{VC}}(H)$ is also non-empty.
By \Cref{lem:P_vertex}, this means $H$ has no perfect matching, and then by \Cref{cor:P_vertex}, $P^{\textnormal{VC}}(H)$ is $\frac{1}{8n}$-balanced.
By \Cref{lem:approx_volume},
\[
\vol\!\left(P^{\textnormal{VC}}(H)\right) \le \frac{d^{-n}}{1 - n^{-3}} \cdot \left|Q^{\textnormal{VC}}(H)\right|.
\]

Let $z$ be the center of mass of $P^{\textnormal{VC}}(H)$.
Then $z$ is a fractional vertex cover of $H$ of size $\sum_{v \in V} z_v \le \frac{n}{2} - \frac{1}{2}$.
Let $M$ be a perfect matching of $G$.
There exists an edge $f = (a,b) \in M$ that is not covered by $z$, i.e., $z_a + z_b < 1$.
Since $z$ is a fractional vertex cover of $H$, $f \in (G \setminus H)$.

Let $H' := H \cup \{f\}$.
Because $z$ violates the constraint of edge $f$, Gr\"unbaum's Theorem~\cite{Grunbaum1960} gives
\[
\vol\!\left(P^{\textnormal{VC}}(H')\right) \le \left(1 - \frac{1}{e}\right) \cdot \vol\!\left(P^{\textnormal{VC}}(H)\right).
\]

We consider two cases.
If $P^{\textnormal{VC}}(H') = \emptyset$, then $Q^{\textnormal{VC}}(H') = \emptyset$ by \Cref{fact:Q_in_P}, and the lemma holds.

If $P^{\textnormal{VC}}(H') \ne \emptyset$, then by \Cref{lem:P_vertex}, $H'$ has no perfect matching. By \Cref{cor:P_vertex}, $P^{\textnormal{VC}}(H')$ is $\frac{1}{8n}$-balanced.
By \Cref{lem:approx_volume}, we have
\[
\vol\!\left(P^{\textnormal{VC}}(H')\right) \ge \frac{d^{-n}}{1 + n^{-3}} \cdot \left|Q^{\textnormal{VC}}(H')\right|,
\]
and consequently,
\[
\frac{\left|Q^{\textnormal{VC}}(H')\right|}{\left|Q^{\textnormal{VC}}(H)\right|} \le \frac{1 + n^{-3}}{1 - n^{-3}} \cdot \frac{\vol\!\left(P^{\textnormal{VC}}(H')\right)}{\vol\!\left(P^{\textnormal{VC}}(H)\right)} \le \frac{1 + n^{-3}}{1 - n^{-3}} \cdot \left(1 - \frac{1}{e}\right) \le 0.9,
\]
where the last inequality holds for all $n \ge 2$.
Thus $f$ satisfies $\left|Q^{\textnormal{VC}}(H \cup \{f\})\right| \le 0.9 \cdot \left|Q^{\textnormal{VC}}(H)\right|$, as claimed.
\end{proof}

\subsection{Proof of Lemma~\ref{lem:approx_volume}}
\label{subsec:proof_vol_approx}

For any pair of points $a, b \in \mathbb{R}^n$, we denote by $ab$ the line segment with endpoints $a$ and $b$, and by $|ab|$ its Euclidean length.
For an affine hyperplane $H$, let $d(a, H)$ denote the distance from point $a$ to $H$.
This is equal to $|az|$, where $z$ is the orthogonal projection of $a$ onto $H$.
When $F$ is a facet of a polytope, we write $d(a,F)$ for the distance from $a$ to the
supporting hyperplane of $F$.

\begin{fact}
Let $H$ be an affine hyperplane, and let $a,b\in\mathbb{R}^n$ be such that
the segment $ab$ intersects $H$. Then $d(a,H)\le |ab|$ and $d(b,H)\le |ab|$.
\label{fact:oppo_facet_dist}
\end{fact}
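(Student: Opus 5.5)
The statement is Fact~\ref{fact:oppo_facet_dist}: if the segment $ab$ meets an affine hyperplane $H$, then both $d(a,H)$ and $d(b,H)$ are at most $|ab|$. The plan is to use the fact that distance to a set is at most the distance to any single point of that set.

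First, since $ab$ intersects $H$, fix a point $c \in ab \cap H$. Recall that $d(a,H) = |az|$, where $z$ is the orthogonal projection of $a$ onto $H$. This projection is the point of $H$ closest to $a$. To justify that, write $H = \{x : \langle u, x\rangle = \beta\}$ with $\|u\| = 1$. For any $y \in H$, the vector $y - z$ is orthogonal to $a - z$, so Pythagoras gives $|ay|^2 = |az|^2 + |zy|^2 \ge |az|^2$. Taking $y = c$ yields $d(a,H) \le |ac|$.

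Second, since $c$ lies on the segment $ab$, we can write $c = a + t(b-a)$ for some $t \in [0,1]$. Then $|ac| = t|ab| \le |ab|$. Combining this with the first step gives $d(a,H) \le |ab|$.

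The same argument with the roles of $a$ and $b$ exchanged gives $|bc| = (1-t)|ab| \le |ab|$, and hence $d(b,H) \le |ab|$. No step is a real obstacle; the only care needed is the standard minimality of the orthogonal projection, which the Pythagoras argument above handles.
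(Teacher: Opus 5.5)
Your proof is correct. It is the argument the paper leaves implicit: the paper states this fact without proof and only illustrates it with the projection $z$ in its figure, while you pick $c \in ab \cap H$ and chain $d(a,H) \le |ac| = t|ab| \le |ab|$ (and symmetrically for $b$), adding a Pythagoras justification for the minimality of the orthogonal projection that the paper takes for granted.
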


\begin{figure}[ht]
\begin{center}
\begin{tikzpicture}[scale=1.4, thick]
  \coordinate (H1) at (-2,1.5);
  \coordinate (H2) at (2,2);
  \draw[fill=gray!10] (H1) -- ++(0,-1.5) -- ++(3.9,1.3) -- ++(0,1.5) -- cycle;
  \node at (2.2,2.7) {$H$};

  \coordinate (a) at (-1,3);
  \node[left] at (a) {$a$};
  \filldraw[black] (a) circle (1pt);

  \coordinate (z) at (-0.4,1.25);
  \node[left] at (z) {$z$};
  \filldraw[black] (z) circle (1pt);

  \draw[dashed] (a) -- (z);
  \node[left] at ($(a)!0.5!(z)$) {$d(a,H)$};

  \coordinate (b) at (1.5,0.3);
  \node[right] at (b) {$b$};
  \filldraw[black] (b) circle (1pt);

  \draw[dashed] (a) -- (b);
  \node[right] at ($(a)!0.5!(b)$) {$|ab|$};

\end{tikzpicture}
\caption{An illustration of \Cref{fact:oppo_facet_dist}.}
\label{fig:distances}
\end{center}
\end{figure}

Now we prove our volume approximation lemma.

\approxvolume*
\begin{proof}
We first show the upper bound: $\frac{d^{-n} \cdot |Q|}{\vol(P)} \le 1 + n^{-3}$.
For each discrete point $q \in Q$ define an $n$-dimensional cube $B_q$ of side length $\frac{1}{d}$, centered at $q$:
\begin{align*}
B_q = \prod_{v \in V} \left[q_v - \frac{1}{2d}, \; q_v + \frac{1}{2d} \right].
\end{align*}
Let $\mathcal{B} = \bigcup_{q \in Q} B_q$ be the union of all such cubes. Since $Q$ is a subset of the $\frac{1}{d}$-grid, the cubes $\{B_q\}_{q\in Q}$ are interior-disjoint. Thus, $\vol(\mathcal{B}) = d^{-n} \cdot |Q|$. We will also use the following fact: for any $x \in B_q$, we have $|xq| \le \frac{n}{d}$.

\begin{figure}[ht]
\begin{center}
\begin{tikzpicture}[scale=0.9, thick]

  \coordinate (11) at (1,1);
  \coordinate (12) at (1,2);
  \coordinate (13) at (1,3);
  \coordinate (14) at (1,4);
  \coordinate (15) at (1,5);
  \coordinate (16) at (1,6);
  
  \coordinate (21) at (2,1);
  \coordinate (22) at (2,2);
  \coordinate (23) at (2,3);
  \coordinate (24) at (2,4);
  \coordinate (25) at (2,5);
  \coordinate (26) at (2,6);
  
  \coordinate (31) at (3,1);
  \coordinate (32) at (3,2);
  \coordinate (33) at (3,3);
  \coordinate (34) at (3,4);
  \coordinate (35) at (3,5);
  \coordinate (36) at (3,6);
  
  \coordinate (41) at (4,1);
  \coordinate (42) at (4,2);
  \coordinate (43) at (4,3);
  \coordinate (44) at (4,4);
  \coordinate (45) at (4,5);
  \coordinate (46) at (4,6);
  
  \coordinate (51) at (5,1);
  \coordinate (52) at (5,2);
  \coordinate (53) at (5,3);
  \coordinate (54) at (5,4);
  \coordinate (55) at (5,5);
  \coordinate (56) at (5,6);

  \coordinate (61) at (6,1);
  \coordinate (62) at (6,2);
  \coordinate (63) at (6,3);
  \coordinate (64) at (6,4);
  \coordinate (65) at (6,5);
  \coordinate (66) at (6,6);

  \coordinate (v1) at (1.2, 3.4);
  \coordinate (v2) at (2.8, 1.1);
  \coordinate (v3) at (4.8, 2.15);
  \coordinate (v4) at (5.85, 4.35);
  \coordinate (v5) at (3.5, 5.8);

  \draw[gray!80] (13) -- (14);
  \draw[gray!80] (22) -- (23);
  \draw[gray!80] (23) -- (24);
  \draw[gray!80] (24) -- (25);
  \draw[gray!80] (31) -- (32);
  \draw[gray!80] (32) -- (33);
  \draw[gray!80] (33) -- (34);
  \draw[gray!80] (34) -- (35);
    \draw[gray!80] (35) -- (36);
  \draw[gray!80] (41) -- (42);
  \draw[gray!80] (42) -- (43);
  \draw[gray!80] (43) -- (44);
  \draw[gray!80] (44) -- (45);
    \draw[gray!80] (45) -- (46);
  \draw[gray!80] (52) -- (53);
  \draw[gray!80] (53) -- (54);
  \draw[gray!80] (54) -- (55);
  \draw[gray!80] (64) -- (65);

  \draw[gray!80] (31) -- (41);
  \draw[gray!80] (22) -- (32);
  \draw[gray!80] (32) -- (42);
  \draw[gray!80] (42) -- (52);
    \draw[gray!80] (13) -- (23);
  \draw[gray!80] (23) -- (33);
  \draw[gray!80] (33) -- (43);
  \draw[gray!80] (43) -- (53);
    \draw[gray!80] (14) -- (24);
  \draw[gray!80] (24) -- (34);
  \draw[gray!80] (34) -- (44);
  \draw[gray!80] (44) -- (54);
    \draw[gray!80] (54) -- (64);
  \draw[gray!80] (25) -- (35);
  \draw[gray!80] (35) -- (45);
  \draw[gray!80] (45) -- (55);
    \draw[gray!80] (55) -- (65);

  \draw[gray!80] (36) -- (46);

    \draw (v1) -- (v2);
  \draw (v2) -- (v3);
  \draw (v3) -- (v4);
  \draw (v4) -- (v5);
  \draw (v5) -- (v1);

  \node at (4.5,5.5) {$P$};

  \filldraw[black] (1.5,3.5) circle (0.5pt);
  \filldraw[black] (2.5,2.5) circle (0.5pt);
  \filldraw[black] (2.5,3.5) circle (0.5pt);
  \filldraw[black] (2.5,4.5) circle (0.5pt);
  \filldraw[black] (3.5,1.5) circle (0.5pt);
  \filldraw[black] (3.5,2.5) circle (0.5pt);
  \filldraw[black] (3.5,3.5) circle (0.5pt); 
  \filldraw[black] (3.5,4.5) circle (0.5pt);
  \filldraw[black] (3.5,5.5) circle (0.5pt);
  \filldraw[black] (4.5,2.5) circle (0.5pt);
  \filldraw[black] (4.5,3.5) circle (0.5pt);
  \filldraw[black] (4.5,4.5) circle (0.5pt);
  \filldraw[black] (5.5,4.5) circle (0.5pt);

  \coordinate (x11) at (8.5,1);
  \coordinate (x12) at (8.5,2);
  \coordinate (x13) at (8.5,3);
  \coordinate (x14) at (8.5,4);
  \coordinate (x15) at (8.5,5);
  \coordinate (x16) at (8.5,6);
  
  \coordinate (x21) at (9.5,1);
  \coordinate (x22) at (9.5,2);
  \coordinate (x23) at (9.5,3);
  \coordinate (x24) at (9.5,4);
  \coordinate (x25) at (9.5,5);
  \coordinate (x26) at (9.5,6);
  
  \coordinate (x31) at (10.5,1);
  \coordinate (x32) at (10.5,2);
  \coordinate (x33) at (10.5,3);
  \coordinate (x34) at (10.5,4);
  \coordinate (x35) at (10.5,5);
  \coordinate (x36) at (10.5,6);
  
  \coordinate (x41) at (11.5,1);
  \coordinate (x42) at (11.5,2);
  \coordinate (x43) at (11.5,3);
  \coordinate (x44) at (11.5,4);
  \coordinate (x45) at (11.5,5);
  \coordinate (x46) at (11.5,6);
  
  \coordinate (x51) at (12.5,1);
  \coordinate (x52) at (12.5,2);
  \coordinate (x53) at (12.5,3);
  \coordinate (x54) at (12.5,4);
  \coordinate (x55) at (12.5,5);
  \coordinate (x56) at (12.5,6);

  \coordinate (x61) at (13.5,1);
  \coordinate (x62) at (13.5,2);
  \coordinate (x63) at (13.5,3);
  \coordinate (x64) at (13.5,4);
  \coordinate (x65) at (13.5,5);
  \coordinate (x66) at (13.5,6);

  \coordinate (xv1) at (8.7, 3.4);
  \coordinate (xv2) at (10.3, 1.1);
  \coordinate (xv3) at (12.3, 2.15);
  \coordinate (xv4) at (13.35, 4.35);
  \coordinate (xv5) at (11, 5.8);

  \draw[gray!80] (x13) -- (x14);
  \draw[gray!80] (x22) -- (x23);
  \draw[gray!80] (x23) -- (x24);
  \draw[gray!80] (x24) -- (x25);
  \draw[gray!80] (x31) -- (x32);
  \draw[gray!80] (x32) -- (x33);
  \draw[gray!80] (x33) -- (x34);
  \draw[gray!80] (x34) -- (x35);
    \draw[gray!80] (x35) -- (x36);
  \draw[gray!80] (x41) -- (x42);
  \draw[gray!80] (x42) -- (x43);
  \draw[gray!80] (x43) -- (x44);
  \draw[gray!80] (x44) -- (x45);
    \draw[gray!80] (x45) -- (x46);
  \draw[gray!80] (x52) -- (x53);
  \draw[gray!80] (x53) -- (x54);
  \draw[gray!80] (x54) -- (x55);
  \draw[gray!80] (x64) -- (x65);

  \draw[gray!80] (x31) -- (x41);
  \draw[gray!80] (x22) -- (x32);
  \draw[gray!80] (x32) -- (x42);
  \draw[gray!80] (x42) -- (x52);
    \draw[gray!80] (x13) -- (x23);
  \draw[gray!80] (x23) -- (x33);
  \draw[gray!80] (x33) -- (x43);
  \draw[gray!80] (x43) -- (x53);
    \draw[gray!80] (x14) -- (x24);
  \draw[gray!80] (x24) -- (x34);
  \draw[gray!80] (x34) -- (x44);
  \draw[gray!80] (x44) -- (x54);
    \draw[gray!80] (x54) -- (x64);
  \draw[gray!80] (x25) -- (x35);
  \draw[gray!80] (x35) -- (x45);
  \draw[gray!80] (x45) -- (x55);
    \draw[gray!80] (x55) -- (x65);

  \draw[gray!80] (x36) -- (x46);

    \draw (xv1) -- (xv2);
  \draw (xv2) -- (xv3);
  \draw (xv3) -- (xv4);
  \draw (xv4) -- (xv5);
  \draw (xv5) -- (xv1);

  \node at (12,5.5) {$P$};

  \coordinate (xx) at (10.9,3.35);

  \coordinate (xu1) at (7.82, 3.42);
  \coordinate (xu2) at (10.06, 0.20);
  \coordinate (xu3) at (12.86, 1.67);
  \coordinate (xu4) at (14.33, 4.75);
  \coordinate (xu5) at (11.04, 6.78);

  \draw[dashed][gray!80] (xx) -- (xu1);
  \draw[dashed][gray!80] (xx) -- (xu2);
  \draw[dashed][gray!80] (xx) -- (xu3);
  \draw[dashed][gray!80] (xx) -- (xu4);
  \draw[dashed][gray!80] (xx) -- (xu5);

    \node at (12.44,6.36) {$P^+$};

    \draw (xu1) -- (xu2);
  \draw (xu2) -- (xu3);
  \draw (xu3) -- (xu4);
  \draw (xu4) -- (xu5);
  \draw (xu5) -- (xu1);
  
    \node[above] at (10.75, 3.35) {$p$};
  \filldraw[black] (10.9,3.35) circle (1pt);

\end{tikzpicture}
\caption{Left: An illustration of a polytope $P$ in the ambient space $\mathbb{R}^2$. Each ``$\cdot$'' represents a discretized point $q \in Q$, and each overlaid square grid denotes the corresponding 2-dimensional cube $B_q$. The union of all such cubes (i.e., grids) forms the set $\mathcal{B} = \bigcup_q B_q$. Right: An illustration of a polytope $P^+$ obtained by scaling $P$ from a point $p$ using an appropriate scaling factor.}
\label{fig:grid}
\end{center}
\end{figure}

Let $p \in P$ be a point such that $d(p, F) \ge \beta$ for all facets $F$ of $P$. Such a point exists by the assumption that $P$ is $\beta$-balanced. Define $\varepsilon = \frac{n}{\beta d}$ and
\begin{align*}
P^+ := \left\{ p + (1+\varepsilon) \cdot (x - p) \; \middle| \; x \in P \right\}.
\end{align*}
That is, $P^+$ is the result of scaling $P$ by a factor of $1+\varepsilon$ from the point $p$. From this construction, the following fact about volumes is immediate:
\begin{align*}
\vol(P^+) = (1+\varepsilon)^n \cdot \vol(P).
\end{align*}

We claim that $\mathcal{B} \subseteq P^+$. To prove this, it suffices to show that for every $q\in Q$ and every $x\in B_q$, we have $x\in P^+$. If $x \in P$, we are done as $P \subseteq P^+$. Otherwise, let us consider $x \notin P$.

\begin{figure}[ht]
\begin{center}
\begin{tikzpicture}[scale=1, thick]


\foreach \i in {1,...,6} {
    \draw[gray!25, very thin] (\i,1) -- (\i,6);
    \draw[gray!25, very thin] (1,\i) -- (6,\i);
}

\fill[gray!15] (2,4) rectangle (3,5);
\draw[gray!70] (2,4) rectangle (3,5);

\coordinate (v1) at (1.2,3.4);
\coordinate (v2) at (2.8,1.1);
\coordinate (v3) at (4.8,2.15);
\coordinate (v4) at (5.85,4.35);
\coordinate (v5) at (3.5,5.8);

\draw (v1) -- (v2) -- (v3) -- (v4) -- (v5) -- cycle;
\node at (2.9,5.5) {$F$};

\coordinate (p) at (3.4,3.35);
\coordinate (x) at (2.4,4.9);
\coordinate (y) at (2.50,4.75);
\coordinate (q) at (2.5,4.5);

\draw (p) -- (x);

\filldraw[black] (p) circle (1pt);
\node[right] at (p) {$p$};

\filldraw[black] (x) circle (0.7pt);
\node[left] at (x) {$x$};

\filldraw[black] (y) circle (0.7pt);
\node[right] at (y) {$y$};

\filldraw[black] (q) circle (0.7pt);
\node[below left] at (q) {$q$};


\coordinate (l) at (8,3.7);
\coordinate (r) at (13.5,3.7);
\draw (l) -- (r);
\node[right] at (r) {$F$};

\coordinate (pR) at (9,1.2);
\coordinate (xR) at (12.5,5.4);
\coordinate (yR) at (11.083,3.7);

\coordinate (pF) at (9,3.7);
\coordinate (xF) at (12.5,3.7);

\draw (pR) -- (xR);

\draw[dashed] (pR) -- (pF);
\draw[dashed] (xR) -- (xF);

\filldraw[black] (pR) circle (0.7pt);
\node[right] at (pR) {$p$};

\filldraw[black] (xR) circle (0.7pt);
\node[right] at (xR) {$x$};

\filldraw[black] (yR) circle (0.7pt);
\node[below] at (yR) {$y$};

\node[left] at ($(pR)!0.5!(pF)$) {$d(p,F)$};
\node[right] at ($(xR)!0.5!(xF)$) {$d(x,F)$};

\node[right] at ($(pR)!0.5!(yR)$) {$|py|$};
\node[left] at ($(xR)!0.5!(yR)$) {$|xy|$};

\end{tikzpicture}
\caption{Shown on the left is an illustration of the points $p,x,y$, and $q$.
Shown on the right are the similar triangles that imply $\frac{|py|}{|xy|} = \frac{d(p,F)}{d(x,F)}$.}
\label{fig:pqxy}
\end{center}
\end{figure}

Let $y$ be the first point where the segment $px$ intersects $\partial P$.
Choose a facet $F$ whose supporting inequality is tight at $y$ and violated by
$x$. To show that $x \in P^+$, it suffices to show $|px| \le (1+\varepsilon) \cdot |py|$ which is equivalent to $|py| \ge \frac{|xy|}{\varepsilon}.$ This is true by the following derivation:
\begin{align*}
\underbrace{\frac{|py|}{|xy|} = \frac{d(p, F)}{d(x, F)}}_{\substack{\text{by similar triangles,} \\ \text{see \Cref{fig:pqxy}}}} \ge \frac{\beta}{|xq|} \ge \frac{\beta d}{n} = \frac{1}{\varepsilon},
\end{align*}
where the inequalities use the $\beta$-balancedness of $P$, the fact that
$q\in Q\subseteq P$ while $x$ violates the facet-defining halfspace of $F$
(and hence \Cref{fact:oppo_facet_dist} gives $d(x,F)\le |xq|$), and $|xq|\le n/d$.

It then follows that
\begin{align*}
d^{-n} \cdot |Q| = \vol(\mathcal{B}) \le \vol(P^+) = (1+\varepsilon)^n \cdot \vol(P).
\end{align*}
Rearranging yields,
\[
\frac{d^{-n} \cdot |Q|}{\vol(P)} \le (1+\varepsilon)^n = \left(1 + \frac{n}{\beta d}\right)^n \le \left(1 + n^{-5}\right)^n \le \exp(n^{-4}) \le 1 + 2n^{-4} \le 1 + n^{-3},
\]
where we use $d = \lceil \beta^{-1} n^{6} \rceil \ge \beta^{-1} n^6$, $1 + x \le e^x \le 1 + 2x$ for $x \in [0, 1]$,  and $n \ge 2$.

For the lower bound, define
\[
P^- := \left\{ p+(1-\varepsilon)(x-p) \;\middle|\; x\in P \right\}.
\]
We claim $P^- \subseteq \mathcal{B}$. Fix $x\in P^-$, and let $q$ be a
nearest point to $x$ on the $\frac{1}{d}$-grid, so that $x\in B_q$ and
$|xq|\le \frac{n}{d}$. It remains to show that $q\in P$, which would imply
$q\in Q$ and hence $x\in\mathcal{B}$.

For every facet $F$ of $P$, the point $x$ is at distance at least
\[
    \varepsilon d(p,F) \ge \varepsilon\beta = \frac{n}{d}
\]
from the supporting hyperplane of $F$. If $q\notin P$, then $q$ violates some
facet-defining inequality. Let $F$ be the corresponding facet, let $H_F$ be its
supporting hyperplane, and let $H_F^+$ be the facet-defining halfspace containing
$P$. Since $x\in P$ and $q\notin H_F^+$, the segment $xq$ intersects $H_F$.
By \Cref{fact:oppo_facet_dist},
\[
    d(x,F) = d(x,H_F) \le |xq| \le \frac{\sqrt n}{2d} < \frac{n}{d},
\]
where the second inequality follows because $q$ is a nearest $\frac{1}{d}$-grid point to $x$.
This contradicts the lower bound $d(x,F)\ge \frac{n}{d}$. Therefore, $q\in P$, and
hence $P^-\subseteq\mathcal{B}$. Consequently,
\[
d^{-n} \cdot |Q| = \vol(\mathcal{B}) \ge \vol(P^-)
= (1-\varepsilon)^n \cdot \vol(P).
\]

It follows that
\[
\frac{d^{-n} \cdot |Q|}{\vol(P)} \ge (1-\varepsilon)^n = \left(1 - \frac{n}{\beta d}\right)^n \ge \left(1 - n^{-5}\right)^n \ge 1 - n(n^{-5}) = 1 - n^{-4} \ge 1 - n^{-3}. \qedhere \]
\end{proof}

\end{document}